\newcommand\smallO{
  \mathchoice
    {{\scriptstyle\mathcal{O}}}
    {{\scriptstyle\mathcal{O}}}
    {{\scriptscriptstyle\mathcal{O}}}
    {\scalebox{.7}{$\scriptscriptstyle\mathcal{O}$}}
  }
\newcommand{\nocontentsline}[3]{}
\newcommand{\tocless}[2]{\bgroup\let\addcontentsline=\nocontentsline#1{#2}\egroup}
\numberwithin{equation}{section}
\newtheorem{theorem}{Theorem}[section]
\newtheorem{lemma}[theorem]{Lemma}
\newtheorem{proposition}[theorem]{Proposition}
\newtheorem{corollary}[theorem]{Corollary}
\newtheorem{remark}[theorem]{Remark}
\newtheorem{definition}[theorem]{Definition}
\theoremstyle{definition}
\renewcommand{\tilde}{\widetilde}          
\DeclareMathSymbol{\leqslant}{\mathalpha}{AMSa}{"36} 
\DeclareMathSymbol{\geqslant}{\mathalpha}{AMSa}{"3E} 
\DeclareMathSymbol{\eset}{\mathalpha}{AMSb}{"3F}     
\renewcommand{\leq}{\;\leqslant\;}                   
\renewcommand{\geq}{\;\geqslant\;}                   
\newcommand{\C}{\mathbb{C}}
\newcommand{\R}{\mathbb{R}}
\newcommand{\Z}{\mathbb{Z}}
\newcommand{\N}{\mathbb{N}}
\newcommand{\Q}{\mathbb{Q}}
\newcommand{\E}{\mathbb{E}}
\renewcommand{\P}{\mathbb{P}}
\def\eps{\varepsilon}
\def\T{\mathbb{T}}
\def\bi{\begin{itemize}}
\def\ei{\end{itemize}}
\def\bnum{\begin{enumerate}}
\def\enum{\end{enumerate}}
\def\<#1{\langle #1 \rangle}
\newcommand{\ab}{|}
\newcommand{\les}{<}
\newcommand{\grea}{>}
\newcommand{\sd}{S_{\gamma,\eps}}
\newcommand{\Pb}{\P}
\newcommand{\opn}{\operatorname}
\newcommand{\tor}{{\T_1^2}}
\newcommand{\toR}{{\T_R^2}}
\newcommand{\PN}{ {P_{< N}} }
\newcommand{\PNN}{ {P_{>N}} }
\newcommand{\PR}{ {P_{< R}}}
\newcommand{\PRR}{ { P_{>R}} }
\newcommand{\F}[1]{ { \hat{#1} } }
\begin{document}

\title{Small deviations of Gaussian multiplicative chaos and the free energy of the two-dimensional massless Sinh--Gordon model 
}

\author{
Nikolay Barashkov\thanks{Max Planck Institute for Mathematics in the Sciences, Leipzig, Germany, \textit{nikolay.barashkov@mis.mpg.de}}\:,
Joona Oikarinen\thanks{Department of Mathematics and Systems Analysis, Aalto University, Espoo, Finland, \textit{joona.oikarinen@aalto.fi}}
and Mo Dick Wong\thanks{Department of Mathematical Sciences, Durham University, United Kingdom, \textit{mo-dick.wong@durham.ac.uk}}}
\date{\today}
\maketitle

\begin{abstract} 
We prove a global decomposition result for $\log$-correlated Gaussian fields on the $d$-dimensional torus and use this to derive new small deviations bounds for a class of Gaussian multiplicative chaos measures obtained from Gaussian fields with zero spatial mean on the $d$-dimensional torus. The upper bound is obtained by a modification of the method that was used in \cite{LRV}, and the lower bound is obtained by applying the Donsker--Varadhan variational formula.

We also give the probabilistic path integral formulation of the massless Sinh--Gordon model on a torus of side length $R$, and study its partition function as $R$ tends to infinity. We apply the small deviation bounds for Gaussian multiplicative chaos to obtain lower and upper bounds for the logarithm of the partition function, leading to the existence of a non-zero and finite subsequential infinite volume limit for the free energy.
\end{abstract}

\maketitle

\setcounter{tocdepth}{2}
\tableofcontents

\section{Introduction}

\subsection{Decomposition of $\log$-correlated Gaussian fields}

Let $Z$ be a log-correlated centred Gaussian process on an open domain $D \subset \R^d$ with covariance kernel
\begin{align*}
\E[Z(z)Z(w)] &= \log \frac{1}{|z-w|} + h(z,w)\,,
\end{align*}
where $h$ is a continuous function. Often one would prefer to work with a field with good scaling properties. To this end, there exists several works on decompositions of $\log$-correlated Gaussian fields of the form
\begin{align*}
Z = X + H
\end{align*}
where $X$ is an almost $\star$-scale invariant field and $H$ is Hölder continuous. In \cite{JSW} the authors show that a non-degenerate $\log$-correlated Gaussian field with some regularity assumptions can locally be decomposed in this way. They also prove a global decomposition result without independence of the two fields  $X$ and $H$. In \cite{AJJ} the decomposition in the case when $Z$ and $H$ are independent was generalised to be almost global in the sense that the decomposition holds on any compact subset of $D$. In this article we obtain a global decomposition of this type for $\log$-correlated fields defined on the $d$-dimensional torus. As an application of the decomposition, we prove small deviations estimates for Gaussian multiplicative chaos measures.

\subsection{Small deviations of Gaussian multiplicative chaos measures}

The Gaussian multiplicative chaos (GMC) measure corresponding to the $\log$-correlated Gaussian field $Z$ with parameter $\gamma \in \R$ is formally defined as the exponential of $X$
\begin{align*}
M_{Z,\gamma}(dz) &= e^{\gamma Z(z) - \frac{\gamma^2}{2} \E[Z(z)^2] } \, dz\,.
\end{align*}
Such measures were initially studied by Kahane \cite{Kah}, and resurfaced again in \cite{RoVa, ds}. Later the theory was heavily applied in the path integral construction of Liouville conformal field theory \cite{DKRV}. Some basic properties of GMC measures were already studied by Kahane, including universality and the existence of all negative and some positive moments of the total mass of the measure. Later, large deviations of GMC measures was understood in the works \cite{ldp, modick1, modick2}.

The small deviations behaviour of GMC measures is not yet fully understood. The previous works \cite{Nik, TaWi} concern a case where the GMC measure is built out of a Gaussian field with non-zero spatial average. From the point of view of small deviations this setting is simpler, as fluctuations of the spatial average of the underlying Gaussian field dominate the small deviations behaviour. Some results exist also in the case of fields with zero spatial average, including \cite{LRV}, although the upper bound derived there is not quite optimal. In the case of the one-dimensional GMC measure constructed from the trace of the two-dimensional Gaussian free field, small deviations results follow from the results in \cite{remy1, remy2}, where an explicit formula for the probability distribution of the GMC mass was obtained via CFT methods. Negative moments of GMC measures built out of the Gaussian free field on fractal sets were considered in \cite{GHSS}.

In this article we prove new upper and lower bounds for a class of GMC measures obtained from Gaussian fields with zero spatial average. We apply these estimates to the Sinh--Gordon model to obtain bounds for the free energy.

\subsection{Exponentially interacting quantum field theories and Gaussian multiplicative chaos}

In recent years there has been tremendous progress in applying the theory of GMC measures to path integral constructions of exponentially interacting quantum field theories. Most notable progress has happened in Liouville theory, see the recent surveys \cite{GKR, witten}, but GMC methods have also been applied to (non-affine) Toda field theories \cite{toda}, imaginary Liouville theory \cite{imaginary, chatterjee, imaginary2} and the $\mathbb{H}^3$-Wess--Zumino--Witten model \cite{wzw}.

The exponentially interacting two-dimensional quantum field theories include several interesting models: the Liouville theory is a relatively simple CFT with a continuous spectrum, the compactified imaginary Liouville theory is an example of a logarithmic CFT, the Sine--Gordon, Sinh--Gordon and Bullough--Dodd models are simple examples of integrable quantum field theories. Due to their physical relevance, these models have been studied extensively in the physics and mathematics literature.

In the physics literature, the form factors of the Sinh--Gordon model were studied in \cite{ KM, FMS}, and exact formulas were obtained in \cite{zam, FLZZ, ztba, teschner}. Recently, the nature of the self-duality of the Sinh--Gordon model has been investigated in \cite{KLM, BLC, til}. Rigorous results on the infinite volume limit of the Sinh--Gordon model with a mass term have been obtained in \cite{fro, AHK, BV}. In \cite{GGV} the model was constructed without a mass term on the infinite cylinder by studying the spectral theory of the Hamiltonian of the model. See also \cite{koz1, koz2} for rigorous results in the ($1+1$)-dimensional setting.

In the present article we focus on the Sinh--Gordon model, which is a  two-dimensional exponentially interacting theory obtained by perturbing the Liouville model. The results obtained would also hold for the Bullough--Dodd model, but we do not explicitly include it to keep the notation simpler. The Sinh--Gordon and Bullough--Dodd models are the only integrable perturbations of Liouville theory, see \cite{mussardo, Dor}. We start by constructing the path integral in finite volume, and then study the behaviour of the partition function in the infinite volume limit. By using the small deviations bounds for GMC measures, we obtain upper and lower bounds for (the logarithm of) the partition function, from which we obtain the existence of non-zero and finite subsequential limit for the free energy of the  model in the infinite volume limit.

\subsection{Main results}

Our first result is a decomposition result into a sum of an almost $\star$-scale invariant field and a regular field for non-degenerate $\log$-correlated Gaussian fields with suitable Sobolev regularity.

\begin{theorem}\label{thm:decomposition}
Let $Z:\T^d \to \R$ be a non-degenerate Gaussian field on the $d$-dimensional torus $\T^d$ with the covariance kernel
\begin{align*}
\E[Z(z)Z(w)] &= - \log d_{\T^d}(z,w) + h(z,w)\,,
\end{align*}
with $h \in H^{d+s}(\T^d \times \T^d)$ for some $s>0$ and $d_{\T^d}$ being the distance function on $\T^d$. Then there exists $0\leq t <\infty$ and  Gaussian fields $X^{t}: \T^d \to \R$ and $H: \T^d \to \R$, independent of each other, such that 
\begin{align*}
Z &= X^{t} + H
\end{align*}
and the fields $X^t$ has the covariance kernel (see Section \ref{sec:log_fields} for definitions)
\begin{align*}
\E[X^t(z) X^t(w)] &= \int_t^\infty \rho(e^u d_{\T^d}(z,w))(1-e^{-\xi u}) \, du \,,
\end{align*}
and $H$ is Hölder continuous almost surely.
\end{theorem}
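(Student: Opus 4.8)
The plan is to build the intended almost $\star$-scale invariant field $X^t$ from a fixed seed profile and then to show that, for a suitable finite $t$, the residual kernel $C_Z-K^t$ — where $C_Z$ is the covariance of $Z$ and $K^t(z,w)=\int_t^\infty\rho(e^u d_{\T^d}(z,w))(1-e^{-\xi u})\,du$ the intended covariance of $X^t$ — is both positive semi-definite and Hölder-regular, so that $H$ may be taken independent of $X^t$ with covariance $C_Z-K^t$. First I would fix $\rho$ with $\rho(0)=1$, with $x\mapsto\rho(|x|)$ smooth and supported in a small ball, and with non-negative Fourier transform; a periodisation argument using the small support makes each $\rho(e^u d_{\T^d}(\cdot,\cdot))$ positive definite on $\T^d$, so that $K^t$ is a genuine covariance for all $t\geq0$ and all $\xi\in(0,\min(s,1))$. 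A change of variables $v=e^u d_{\T^d}(z,w)$ then shows $K^t(z,w)=-\log d_{\T^d}(z,w)+\phi_t(z,w)$ with $\phi_t$ translation invariant and of class $C^{\min(\xi,1)}(\T^d\times\T^d)$, the least regular contribution being a multiple of $d_{\T^d}(z,w)^\xi$; moreover the leading Fourier asymptotics of $K^t$ and of $-\log d_{\T^d}$ coincide, which is automatic once $\rho(0)=1$. Hence $C_Z-K^t=h-\phi_t$ is, for every $t$, a symmetric kernel of class $C^{\min(\xi,s)}$.

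For the positivity I would decompose $C_Z=L+\mathcal{H}$, where $L$ is the translation-invariant operator with kernel $-\log d_{\T^d}$, so that its symbol satisfies $\widehat{L}(k)\asymp(1+|k|)^{-d}$, and $\mathcal{H}$ is the operator with kernel $h\in H^{d+s}(\T^d\times\T^d)$. Since $d+s>d$, the conjugated operator $L^{-1/2}\mathcal{H}L^{-1/2}$ has an $L^2$ kernel (it lies in $H^s(\T^d\times\T^d)$), hence is Hilbert--Schmidt; therefore $L^{-1/2}C_ZL^{-1/2}=\mathrm{id}+L^{-1/2}\mathcal{H}L^{-1/2}$ is a compact perturbation of the identity which is positive and injective (by non-degeneracy of $Z$), and so bounded below by $(1-\delta_0)\,\mathrm{id}$ for some $\delta_0<1$. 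This yields the key inequality $C_Z\geq(1-\delta_0)L$. (One needs $\widehat{L}(k)>0$ for every $k$ for $L^{\pm 1/2}$ to make sense; this holds for tori of small side-length, and otherwise the finitely many exceptional modes are handled by hand.)

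Next I would combine this with two Fourier-side facts about $K^t$: $\widehat{K^t}(k)\leq C'e^{-td}$ for all $k$ (uniform smallness as $t\to\infty$), and $\widehat{L}(k)-\widehat{K^t}(k)\gtrsim_\xi(1+|k|)^{-d-\xi}$ for large $|k|$, uniformly in $t$ — the latter being exactly what the factor $1-e^{-\xi u}$ produces. Splitting $L^2(\T^d)$ into the Fourier modes $|k|\leq N$ and $|k|>N$: on the high block the $\xi$-gain dominates the worst-case bound $\mathcal{H}\geq-\|h\|_{H^{d+s}}(\mathrm{id}-\Delta)^{-(d+s)/2}$ once $N$ is large (this uses $\xi<s$), so that block is positive; on the low block $C_Z-K^t\geq\big((1-\delta_0)c_dN^{-d}-C'e^{-td}\big)\,\mathrm{id}$ by the key inequality together with the uniform smallness; and the off-diagonal block $\Pi_{>N}\mathcal{H}\Pi_{\leq N}$ is absorbed using $\|\mathcal{H}v\|_{H^{d+s}}\leq\|h\|_{H^{d+s}}\|v\|_{L^2}$ for band-limited $v$, which bounds the Schur complement relative to this splitting below by $\big((1-\delta_0)c_dN^{-d}-C'e^{-td}-C''N^{-(d+2s-\xi)}\big)\,\mathrm{id}$, hence by a positive multiple of the identity on $V_N$ once $N$, and then $t$, are taken large. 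Thus $C_Z-K^t\geq0$ for some finite $t$. Taking $X^t$ and $H$ independent with covariances $K^t$ and $C_Z-K^t$ then gives $Z=X^t+H$ with $X^t$ as in the statement; and since $C_Z-K^t\in C^{\min(\xi,s)}$ one has $\E[(H(z)-H(w))^2]\leq C\,d_{\T^d}(z,w)^{\min(\xi,s)}$, so Kolmogorov's continuity theorem provides an almost surely Hölder continuous version of $H$.

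The main obstacle is the positivity step, and within it the inequality $C_Z\geq(1-\delta_0)L$: without a spectral gap for $L^{-1/2}C_ZL^{-1/2}$ above $0$, the low-frequency block of $C_Z-K^t$ could a priori decay faster than $N^{-d}$ and the coupling between low and high frequencies through $\mathcal{H}$ could not be compensated. The hypotheses enter precisely here — $h\in H^{d+s}$ with $d+s>d$ makes the conjugated perturbation compact, non-degeneracy keeps $L^{-1/2}C_ZL^{-1/2}$ away from $0$, and the freedom to take $\xi<\min(s,1)$ lets the $\star$-scale kernel dominate $\mathcal{H}$ at high frequencies while keeping $\phi_t$, and hence $H$, Hölder. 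The remaining work is routine Fourier-analytic bookkeeping.
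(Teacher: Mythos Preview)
Your strategy is essentially the paper's: both hinge on a spectral lower bound $C_Z\gtrsim(I-\Delta)^{-d/2}$ obtained from non-degeneracy via a compactness argument (your ``$C_Z\geq(1-\delta_0)L$'' is the content of the paper's Lemma~\ref{lem:auxilary-coupling}), both split into low and high Fourier modes, use the $\xi$-gain from the factor $1-e^{-\xi u}$ to dominate the non-translation-invariant part $\mathcal{H}$ at high frequency, and then choose $N$ large followed by $t$ large. The packaging differs: instead of your Schur-complement argument on $C_Z-K^t$, the paper inserts an auxiliary term $\eps R_\xi$ (where $R_\xi=\int_0^\infty\rho(e^u\cdot)e^{-\xi u}\,du$) and writes the remainder as a sum of two separately positive pieces, $\PN(C_Z-\eps R_\xi)\PN$ and a piece dominated from below by $\eps R_\xi-C_\infty^t\PN+G^{>N}$; positivity of the latter uses the \emph{global} form bound $R_\xi\geq\delta\,\|\cdot\|_{H^{-(d+\xi)/2}}^2$, which absorbs both $G^{>N}$ and the off-diagonal interaction without any block inversion. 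Your route is a legitimate and arguably cleaner alternative.

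One point to clean up: taking the pure log operator $L$ as reference creates two avoidable nuisances. First, $\widehat{L}(k)>0$ is not automatic on $\T^d$, and ``handling finitely many modes by hand'' is vaguer than it looks. Second, the claim $\widehat{L}(k)-\widehat{K^t}(k)\gtrsim(1+|k|)^{-d-\xi}$ is obscured by the correction $h_\rho=C_\infty-L$, whose Fourier coefficients you do not control pointwise from $h_\rho\in H^{d+s_1}$ alone. The paper avoids both by using $(I-\Delta)^{-d/2}$ for the spectral-gap lemma and $C_\infty$ in the decomposition, so that $C_\infty-K^t=C_{\infty,t}+R_\xi^t$ with $\widehat{R_\xi^t}(k)\gtrsim(1+|k|)^{-d-\xi}$ for large $|k|$ directly. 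Replacing $L$ by either of these makes your argument go through verbatim. (A minor slip: your Schur-complement correction should be $O(N^{-(d+s-\xi)})$ rather than $O(N^{-(d+2s-\xi)})$, but this is harmless since $\xi<s$.)
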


\begin{remark}
In Proposition \ref{lem:coupling-2} we prove an analogous decomposition result, where the almost $\star$-scale invariant field is replaced with a Gaussian field with covariance operator $(- \Delta^{- \frac{d}{2}} + \Delta^{- \frac{d}{2} - \xi}) \PNN$, where $\Delta^{-1}$ is the inverse of the zero-mean Laplace operator on $\T^d$ and $\PNN$ is the projection onto functions with Fourier transform supported outside of $B(0,N) \subset \Z^d$.
\end{remark}
\noindent
Our second result concerns the probability for the total mass of a GMC measure to be very small.

\begin{theorem}\label{sd_ub}

Let $Z$ be as in Theorem \ref{thm:decomposition} and denote
\begin{align*}
\tilde Z(z) &:= Z(z) - \frac{1}{v_{\T^d}(\T^d)} \int_{\T^d} Z(z) \, dv_{\T^d}\,.
\end{align*}
Then for all $\gamma \in (0, \sqrt{2d})$ there exists $\eps_0>0$ such that for all $\eps \in (0,\eps_0)$ the GMC measure $M_{\tilde Z,\gamma}$ corresponding to $\tilde Z$ satisfies
\begin{align*}
 \exp \big( - c_1(d,\gamma) \eps^{- \frac{2d}{\gamma^2}} \big) \leq  \P(M_{\tilde Z,\gamma}(\T^d) < \eps ) & \leq \exp \big( - c_2(d,\gamma) \eps^{- \frac{2d}{\gamma^2}} \big)\,.
\end{align*}
for some $c_1(d,\gamma), c_2(d,\gamma) > 0$.
\end{theorem}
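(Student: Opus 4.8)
The plan is to first reduce to the (almost) $\star$-scale invariant case. Using Theorem~\ref{thm:decomposition}, write $Z=X^t+H$ with $X^t\perp H$ and $H$ Hölder, and let $\tilde X,\tilde H$ be the fields obtained from $X^t,H$ by subtracting their spatial means; then $\tilde Z=\tilde X+\tilde H$, the summands are independent, $\tilde X$ has zero spatial mean, $\E[\tilde H(z)^2]$ is bounded, and by independence
\begin{align*}
M_{\tilde Z,\gamma}(dz)=e^{\gamma\tilde H(z)-\frac{\gamma^2}{2}\E[\tilde H(z)^2]}\,M_{\tilde X,\gamma}(dz).
\end{align*}
Fix a scale $\lambda=\lambda(\eps)\asymp\eps^{2/\gamma^2}$ and partition $\T^d$ into $\asymp\eps^{-2d/\gamma^2}$ cells $C_i$ of side $\asymp\lambda$. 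Almost $\star$-scale invariance gives, jointly and up to controllable errors, $M_{\tilde X,\gamma}(C_i)\approx\lambda^{d+\gamma^2/2}e^{\gamma\Omega_i}\mathcal M_i$, where $\Omega_i$ is a centred Gaussian ``coarse increment'' of variance $\log(1/\lambda)+O(1)$ with \emph{non-negative} correlations across $i$ (because $\rho\geq0$), while the $\mathcal M_i$ are (approximately) i.i.d.\ copies of a fixed GMC total mass, independent of the coarse field, with $\E[\mathcal M_i]\asymp1$. Since $\lambda^{d+\gamma^2/2}\cdot\eps^{-2d/\gamma^2}\asymp\eps$, it is the number of cells that produces the exponent $2d/\gamma^2$.

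\noindent\textbf{Upper bound (a modification of the argument in \cite{LRV}).} If $M_{\tilde Z,\gamma}(\T^d)<\eps$ then no cell can carry mass $\geq\eps$, and I would extract from this an atypical event of probability $\exp(-c\,\eps^{-2d/\gamma^2})$ \emph{in one step}, rather than iterating over scales as in \cite{LRV}. Fix a large \emph{constant} $\Lambda$ and set $I:=\{i:\Omega_i\geq0\text{ and }\min_{C_i}\tilde H\geq-\Lambda\}$. For $i\in I$ one has $M_{\tilde Z,\gamma}(C_i)\geq c_\Lambda\,\lambda^{d+\gamma^2/2}\mathcal M_i$, so $M_{\tilde Z,\gamma}(\T^d)\geq c_\Lambda\,\lambda^{d+\gamma^2/2}\sum_{i\in I}\mathcal M_i$, which (after fixing the implied constant in $\lambda\asymp\eps^{2/\gamma^2}$) exceeds $\eps$ as soon as $|I|\geq\delta\cdot\#\{\text{cells}\}$ and $\sum_{i\in I}\mathcal M_i\geq\kappa|I|$ for suitable small $\delta,\kappa>0$. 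Hence $\P(M_{\tilde Z,\gamma}(\T^d)<\eps)$ is at most $\P(|I|<\delta\,\#\{\text{cells}\})$ plus $\P(\exists S,\ |S|\geq\delta\,\#\{\text{cells}\},\ \sum_{i\in S}\mathcal M_i<\kappa|S|)$. The first term is $\leq\exp(-c\,\eps^{-2d/\gamma^2})$ by a large-deviation/concentration estimate for the number of ``non-negative'' cells of a log-correlated field, together with Gaussian concentration for $\tilde H$ — and here the key point is that, since $\Lambda$ is a fixed constant, $\P(\min_{C_i}\tilde H<-\Lambda)$ is bounded by a constant $<1$, which is exactly what allows us to keep $\lambda\asymp\eps^{2/\gamma^2}$ rather than a worse scale. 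The second term is $\leq\exp(-c\,\eps^{-2d/\gamma^2})$ by a union bound over the $\leq2^{O(\eps^{-2d/\gamma^2})}$ subsets $S$ combined with the Cramér-type estimate $\P(\sum_{i\in S}\mathcal M_i<\kappa|S|)\leq(\inf_{t>0}e^{t\kappa}\E[e^{-t\mathcal M}])^{|S|}$, the infimum being $<2^{-4}$ once $\kappa$ is small enough (this uses only $\mathcal M>0$ a.s.).

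\noindent\textbf{Lower bound (Donsker--Varadhan).} I would exhibit a probability measure $\mathbb Q\ll\P$ with $\mathbb Q(M_{\tilde Z,\gamma}(\T^d)<\eps)\geq1/2$ and relative entropy $H(\mathbb Q\mid\P)=O(\eps^{-2d/\gamma^2})$; the entropy inequality $\log(1/\P(A))\leq(H(\mathbb Q\mid\P)+O(1))/\mathbb Q(A)$ then yields $\P(M_{\tilde Z,\gamma}(\T^d)<\eps)\geq\exp(-c\,\eps^{-2d/\gamma^2})$. Because $\tilde Z$ has zero spatial mean there is no cheap global downward shift; instead take $\mathbb Q=\P(\,\cdot\mid B)$, where $B$ is the event that the coarse part of $\tilde X$ at scale $\lambda\asymp\eps^{2/\gamma^2}$ is non-positive on all $\asymp\eps^{-2d/\gamma^2}$ cells, intersected with $\{\sup_{\T^d}\tilde H\leq\Lambda\}$ (a fixed positive-probability event). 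Then $H(\mathbb Q\mid\P)=\log(1/\P(B))$, and $\P(B)\geq\exp(-O(\eps^{-2d/\gamma^2}))$ follows from a Slepian-type Gaussian comparison inequality exploiting the non-negativity of the correlations of the coarse field; meanwhile under $\mathbb Q$ one has $M_{\tilde Z,\gamma}(\T^d)\leq e^{\gamma\Lambda+O(1)}\lambda^{d+\gamma^2/2}\sum_i\mathcal M_i$ (the coarse increments being $\leq0$), with $\mathbb Q$-expectation $\asymp e^{\gamma\Lambda}\lambda^{\gamma^2/2}\asymp\eps$ since conditioning on the coarse field leaves the fine masses $\mathcal M_i$ unchanged; a Markov bound, after fixing the implied constant in $\lambda$, gives $\mathbb Q(M_{\tilde Z,\gamma}(\T^d)<\eps)\geq1/2$.

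\noindent\textbf{Main obstacle.} I expect two genuinely delicate points. On the upper-bound side, controlling the \emph{unbounded} remainder $\tilde H$ without degrading the exponent: the crude bound $M_{\tilde Z,\gamma}(\T^d)\geq e^{-\gamma\|\tilde H\|_\infty}M_{\tilde X,\gamma}(\T^d)$ is useless because $\|\tilde H\|_\infty$ has only Gaussian tails, which forces the ``minimum over a cell bounded by a constant'' device above and, with it, a genuine exponential large-deviation estimate for counts attached to a log-correlated field. On the lower-bound side, $\tilde X$ is only \emph{almost} $\star$-scale invariant, so the independence of coarse and fine parts — used both to bound $\P(B)$ and to compute the $\mathbb Q$-expectation of $\sum_i\mathcal M_i$ — holds only up to errors that must be absorbed (this is where working through the Donsker--Varadhan formulation, rather than a direct computation, buys robustness), and one must check that the correlations of the zero-spatial-mean coarse field remain non-negative so that the Slepian-type bound applies; if this fails, one would instead run the argument on the high-frequency-projected field of the Remark following Theorem~\ref{thm:decomposition}, whose Fourier support excludes the constant mode by construction.
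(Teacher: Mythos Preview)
Your overall architecture (reduce to an almost $\star$-scale field via Theorem~\ref{thm:decomposition}, then run an upper bound in the spirit of \cite{LRV} and a lower bound via Donsker--Varadhan) matches the paper's, but both halves contain a genuine gap that the paper handles differently.

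\textbf{Upper bound.} Your one-step argument hinges on the estimate $\P(|I|<\delta N)\le\exp(-cN)$, where $I$ counts cells on which the coarse increment $\Omega_i\ge 0$. You justify this by ``non-negative correlations (because $\rho\ge 0$)'' plus a concentration/large-deviation input. But once you pass to the \emph{zero-mean} field $\tilde X$, the coarse covariance is no longer pointwise non-negative: subtracting the spatial average introduces negative cross-terms of order $O(1)$ (this is exactly the content of the paper's Lemma~\ref{y_lemma} and inequality~\eqref{eq_tilde_X_cov}), and indeed $\sum_i \Omega_i\approx 0$ forces the $\Omega_i$ to be negatively correlated somewhere. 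More importantly, there is no off-the-shelf exponential concentration for ``the fraction of cells where a zero-mean log-correlated field is non-negative'': the scenario in which $\tilde X_t\approx -\alpha$ on a set of measure $1-\delta$ and is very large on the remaining $\delta$-fraction is fully consistent with zero mean and is \emph{not} ruled out by your argument. This scenario is precisely why the paper does \emph{not} work in one step. Instead, it uses the deterministic dichotomy of Lemma~\ref{ck} (either the zero-mean field is $\ge -\alpha$ on a set of fixed proportion, or it exceeds $4\alpha 2^n$ on a set of measure $\ge e^{-n}$), applies the concentration inequality of Lemma~\ref{concentration} in the first alternative, and in the second alternative \emph{iterates}: one zooms into $A_n$, re-centres at the new scale, and reapplies the dichotomy (Lemma~\ref{iteration}), terminating after many steps via a crude Gaussian bound (Lemma~\ref{terminator}). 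The iteration is what replaces the unproven concentration input in your sketch.

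\textbf{Lower bound.} Your change of measure $\mathbb Q=\P(\,\cdot\mid B)$ with $B=\{\text{coarse part of }\tilde X\le 0\text{ on every cell}\}$ is ill-posed: since the coarse part of $\tilde X$ has zero spatial mean, $\sum_i\Omega_i\approx 0$, so the event $\{\Omega_i\le 0\ \forall i\}$ forces $\Omega_i\equiv 0$ and hence $\P(B)=0$. The Slepian input you invoke also fails for the same reason (covariance not pointwise $\ge 0$), and switching to the $\PNN(-\Delta^{-d/2}+\Delta^{-d/2-\xi})$ field of the Remark does not help, since that field is exactly zero-mean by construction. The paper's lower bound avoids conditioning entirely: it takes $\mathbb Q$ to be the Gaussian law obtained by \emph{damping the low Fourier modes}, namely $\hat Y(k)=(|k|/R)\hat X(k)$ for $|k|\le R$ and $\hat Y(k)=\hat X(k)$ otherwise, with $R\asymp\eps^{-2/\gamma^2}$. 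The entropy is then an explicit sum $\sum_{|k|\le R}\log(R/|k|)\asymp R^d$ (Lemma~\ref{lem:entropy}), and because the variance of the coarse part drops from $\sim\log R$ to $O(1)$ under $\mathbb Q$, one gets $\E_{\mathbb Q}[M_{\tilde Z,\gamma}(\T^d)]\lesssim R^{-\gamma^2/2}\asymp \eps$ directly (Lemma~\ref{lem:E_nu}), without ever asking the field to be non-positive. This variance-reduction tilt is the missing idea on the lower-bound side.
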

\begin{remark}
Removing the average of the field is crucial for obtaining the upper bound, as otherwise fluctuations of the average of the field would make the probability of the GMC being small much larger. As a result, the lower bound would be trivial without removing the average.
\end{remark}
\noindent
The third result concerns the large volume behaviour of the Sinh--Gordon partition function, formally given on the two-dimensional $R$-torus $\toR$ by the path integral
\begin{align*}
Z_R &= \int \exp \big(- \int_{\toR} ( \tfrac 12 \ab d \varphi \ab_R^2  + 2 \mu \cosh(\gamma \varphi ) ) dv_{R} \big) \, D \varphi\,.
\end{align*}
The rigorous definition of this object will be given in Section \ref{shg_section}.

\begin{theorem}\label{fe_bounds}
For any $\gamma \in (0,2)$, there exists positive constants $f_\gamma$ and $\tilde f_\gamma$ such that for any $R$ large enough we have that
\begin{align*}
  f_\gamma & \leq \frac{- \log Z_R}{ \mu^{\frac{2}{\gamma Q}} R^2} \leq \tilde f_\gamma\,,
\end{align*}
where $Q = \frac{2}{\gamma} + \frac{\gamma}{2}$.
\end{theorem}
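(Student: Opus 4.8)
The plan is to combine the probabilistic representation of $Z_R$ from Section~\ref{shg_section} with the small deviation estimates of Theorem~\ref{sd_ub} (specialised to $d=2$, so that $\sqrt{2d}=2$ and $\tfrac{2d}{\gamma^2}=\tfrac{4}{\gamma^2}$). Splitting $\varphi=c+\phi$ into the constant mode $c\in\R$, which carries Lebesgue measure since the model is massless, and the zero-average free field $\phi$ on $\toR$, and writing $2\cosh(\gamma\varphi)=e^{\gamma c}e^{\gamma\phi}+e^{-\gamma c}e^{-\gamma\phi}$, one is left with
\begin{align*}
Z_R=\int_{\R}\E\Big[\exp\big(-\mu e^{\gamma c}M_{\gamma,R}(\toR)-\mu e^{-\gamma c}M_{-\gamma,R}(\toR)\big)\Big]\,dc ,
\end{align*}
where $M_{\pm\gamma,R}$ are the GMC measures built from $\phi$ with the renormalisation fixed in Section~\ref{shg_section}. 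Rescaling the path integral to the unit torus and tracking how the renormalisation of $\cosh(\gamma\varphi)$ changes with scale produces an area factor $R^2$ together with a factor $R^{\gamma^2/2}$ (the Green's function on $\toR$ exceeds that on $\tor$ by $\log R$), so that jointly $M_{\pm\gamma,R}(\toR)\overset{d}{=}c_0R^{\gamma Q}M_{\pm\gamma}$, where $M_{\pm\gamma}:=M_{\pm\gamma,1}(\tor)$, $c_0=c_0(\gamma)>0$, and $R^{\gamma Q}=R^{2+\gamma^2/2}$. Setting $\lambda:=c_0\mu R^{\gamma Q}$ and integrating over $c$ (substitute $t=e^{\gamma c}$) gives the clean identity $Z_R=\tfrac{2}{\gamma}\,\E\big[K_0(2\lambda\sqrt{M_\gamma M_{-\gamma}})\big]$ with $K_0$ the modified Bessel function, so it suffices to prove $-\log\E[K_0(2\lambda\sqrt{M_\gamma M_{-\gamma}})]\asymp\lambda^{2/(\gamma Q)}$ as $\lambda\to\infty$. (One does not really need the exact $c$-integral for the bounds: $e^{\gamma|c|}\min(M_\gamma,M_{-\gamma})\le e^{\gamma c}M_\gamma+e^{-\gamma c}M_{-\gamma}$, and $e^{\gamma c}M_\gamma+e^{-\gamma c}M_{-\gamma}\le e^{\gamma}(M_\gamma+M_{-\gamma})$ for $|c|\le1$, already suffice.)

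For the lower bound on $-\log Z_R$ I would fix a threshold $s>0$ to be optimised and split according to whether $M_\gamma M_{-\gamma}\ge s^2$ or $<s^2$. On $\{M_\gamma M_{-\gamma}\ge s^2\}$, since $K_0$ is decreasing and $K_0(y)\le2e^{-y}$ for $y\ge1$, one has $K_0(2\lambda\sqrt{M_\gamma M_{-\gamma}})\le K_0(2\lambda s)\le 2e^{-2\lambda s}$ once $\lambda s\ge1$. On $\{M_\gamma M_{-\gamma}<s^2\}$ one uses $K_0(x)\lesssim1+|\log x|$, a dyadic decomposition of the value of $M_\gamma M_{-\gamma}$, the inclusion $\{M_\gamma M_{-\gamma}<\delta^2\}\subseteq\{M_\gamma<\delta\}\cup\{M_{-\gamma}<\delta\}$, the symmetry $M_{-\gamma}\overset{d}{=}M_\gamma$, and the upper bound of Theorem~\ref{sd_ub}, $\P(M_\gamma<\delta)\le\exp(-c_2\delta^{-4/\gamma^2})$ (legitimate since $\gamma\in(0,2)$), to bound this contribution by $\lesssim(1+\log\lambda)\exp(-c\,s^{-4/\gamma^2})$. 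Balancing the two parts, i.e.\ choosing $s\asymp\lambda^{-\gamma^2/(\gamma^2+4)}$ so that $\lambda s\asymp\lambda^{4/(\gamma^2+4)}=\lambda^{2/(\gamma Q)}$, yields $Z_R\le\exp\big(-f_\gamma\lambda^{2/(\gamma Q)}\big)$ for all $R$ large, which is the claimed bound after substituting $\lambda=c_0\mu R^{\gamma Q}$.

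For the upper bound on $-\log Z_R$ I would restrict the expectation to the event $E=\{\max(M_\gamma,M_{-\gamma})\le C_0s\}$, on which $K_0(2\lambda\sqrt{M_\gamma M_{-\gamma}})\ge K_0(2\lambda C_0s)\gtrsim e^{-C_1\lambda s}$ for $\lambda s\gtrsim1$, so $Z_R\gtrsim e^{-C_1\lambda s}\,\P(E)$. The heart of the matter is a lower bound $\P(E)\ge\exp(-C\,s^{-4/\gamma^2})$, with the \emph{same} exponent as the single-GMC lower bound of Theorem~\ref{sd_ub}. I would obtain it by revisiting the change-of-measure (Donsker--Varadhan) argument behind that lower bound: using the decomposition $Z=X^t+H$ of Theorem~\ref{thm:decomposition} and splitting the almost $\star$-scale invariant part into a coarse field $\phi_{>\eps}$ (scales above $\eps$) and an independent fine field $\phi_{<\eps}$, with $\eps\asymp s^{2/\gamma^2}$ so that $\eps^{\gamma^2/2}\asymp s$, one checks that on the event $\{\|\phi_{>\eps}\|_\infty\le1\}\cap\{M^{<\eps}_\gamma(\tor)\le4,\,M^{<\eps}_{-\gamma}(\tor)\le4\}$ both $M_\gamma(\tor)$ and $M_{-\gamma}(\tor)$ are at most $C_0s$, since the coarse contribution to either mass is then $\asymp\eps^{\gamma^2/2}\asymp s$ while the fine contribution is bounded. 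The second event has probability $\ge1/2$ by Markov's inequality (as $\E[M^{<\eps}_{\pm\gamma}(\tor)]=1$) and is independent of $\phi_{>\eps}$, while $\P(\|\phi_{>\eps}\|_\infty\le1)\ge\exp(-C\eps^{-2})=\exp(-C\,s^{-4/\gamma^2})$ by a layer-by-layer small-ball estimate for log-correlated fields (allocate a geometrically decaying share of the sup-norm budget to each dyadic layer, each carrying $\asymp 4^n$ essentially independent $O(1)$ Gaussians, and sum the exponents). Optimising $s$ exactly as above then gives $Z_R\ge\exp\big(-\tilde f_\gamma\lambda^{2/(\gamma Q)}\big)$ for $R$ large.

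The main obstacle is precisely this last step: that making $M_\gamma(\tor)$ and $M_{-\gamma}(\tor)$ small \emph{simultaneously} is, up to the value of the constant, no more expensive than making one of them small. Naive strategies fail — shifting the field uniformly up or down shrinks one GMC mass while blowing up the other — and it is essential that the genuinely optimal strategy, namely suppressing the fluctuations of the log-correlated field uniformly at the critical scale $\eps\asymp s^{2/\gamma^2}$, is symmetric under $\phi\leftrightarrow-\phi$ and hence controls both masses with the single exponent $s^{-4/\gamma^2}$; were the joint event even polynomially more costly in $1/s$, the final power of $\lambda$, hence of $R$, would come out wrong. For this reason the lower bound of Theorem~\ref{sd_ub} cannot be quoted as a black box — one must rework, or state a joint version of, its Donsker--Varadhan argument — and, together with verifying the scaling identity $M_{\pm\gamma,R}(\toR)\overset{d}{=}c_0R^{\gamma Q}M_{\pm\gamma}$ at the level of the renormalised measures, this is where the real work lies.
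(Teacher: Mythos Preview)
Your lower bound on $-\log Z_R$ is essentially the paper's argument: the same Bessel representation, the same splitting on the size of $\sqrt{M_\gamma M_{-\gamma}}$, the same union bound $\{M_\gamma M_{-\gamma}<\delta^2\}\subset\{M_\gamma<\delta\}\cup\{M_{-\gamma}<\delta\}$ together with the $\gamma\leftrightarrow-\gamma$ symmetry, and the same appeal to the upper bound in Theorem~\ref{sd_ub}. The paper uses two thresholds and a H\"older step to absorb the $|\log\mathcal M|$ from the $K_0$ singularity, where you use a dyadic sum, but this is cosmetic.

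For the upper bound on $-\log Z_R$ you take a genuinely different route. The paper does \emph{not} pass through the joint small-deviation probability $\P(\max(M_\gamma,M_{-\gamma})\le s)$ at all. Instead it truncates the zero-mode integral to $|c|\le\tfrac12$, obtaining $Z_R\ge \E\big[\exp(-e^{\gamma/2}R^{2+\gamma^2/2}(M_\gamma+M_{-\gamma}))\big]$, and then applies Donsker--Varadhan \emph{directly to this Laplace transform}: with the test law $\nu$ of Lemma~\ref{lem:dv_bound} (the field with the low Fourier modes $|k|\le R$ damped by $|k|/R$) one has $\E_\nu[M_{\pm\gamma}]\le CR^{-\gamma^2/2}$ and $\mathrm{Ent}(\nu,\mu)\le CR^2$, and since $\nu$ is the law of a centred Gaussian the computation is manifestly invariant under $\gamma\to-\gamma$, so the two GMC terms are handled in one stroke. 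Your approach --- forcing $\|\phi_{>\eps}\|_\infty\le1$ and bounding the fine GMC masses by Markov --- is correct and, as you note, exploits exactly the same $\phi\leftrightarrow-\phi$ symmetry; but it requires a separate sup-norm small-ball estimate for the coarse field (your layer-by-layer allocation does give $\exp(-C\eps^{-2})$, though making it rigorous needs some care with the modulus of continuity within each layer, and with the H\"older piece $H$ from Theorem~\ref{thm:decomposition}). The paper's route is shorter precisely because it never leaves the Laplace-transform scale: the balancing you do by hand between $\lambda s$ and $s^{-4/\gamma^2}$ is performed implicitly by the choice of cutoff $R$ in the Fourier damping.
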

\noindent 
\begin{remark}
In the case of the zero-mean hierarchical free field (or branching random walk) Hofstetter and Zeitouni \cite[Theorem 1.5]{HZ} independently obtained a stronger version of our results. 
They are able to show that, in the notation of Theorem \ref{sd_ub},  $\tilde{c}_{\gamma}=c_{\gamma}(1+\smallO(1))$ as $\varepsilon \to 0$. This also translates into a statement that the limit 
$\lim_{R\to \infty} \frac{\log Z_{R}}{R^2}$ exists, see below for a discussion of this problem. Furthermore in \cite{HZ} the authors study correlation function of the zero-mean Liouville model, and obtain decay of correlations, leveraging knowledge about how small deviations of the GMC are achieved.  
Such results would also be of interest the setting of the zero-mean Gaussian free field.  
\end{remark}

\subsection{Outlook} We have demonstrated the possibility to study properties of the infinite volume limit of the massless Sinh--Gordon model via GMC methods. It is then natural to try to expand these methods to study the Sinh--Gordon model further. Given the bounds we derive for the logarithm of the partition function, it is then expected that the limit
\begin{align*}
F_\gamma &:= \lim_{R \to \infty} \frac{- \log Z_R}{R^2}
\end{align*}
exists. This quantity is called the \emph{free energy} in physics literature, and an explicit formula for it has been conjectured, see for example \cite{FLZZ, til} and references therein. As the partition function $Z_R$ will behave essentially like a  Laplace transform
\begin{align*}
Z_R &\sim \E[e^{- R^{2 + \frac{\gamma^2}{2}} \sqrt{M_{\tilde X_1,\gamma}(\tor) M_{\tilde X_1,-\gamma}(\tor)  } }]\,,
\end{align*}
an application of De Bruijn's Tauberian theorem then leads to the conjecture that the limit
\begin{align*}
C_\gamma &:= \lim_{\eps \to 0} \eps^{\frac{4}{\gamma^2}} \log \Pb \big( \sqrt{M_{\tilde X_1,\gamma}(\tor) M_{\tilde X_1,-\gamma}(\tor)  } \less \eps \big)
\end{align*}
exists and is non-zero. The Tauberian theorem also gives an explicit relation between $F_\gamma$ and $C_\gamma$, and thus the conjectural formula for $F_\gamma$ also yields a conjectural explicit formula for the small deviations constant $C_\gamma$. In this way the integrability of the Sinh--Gordon model are related to integrability properties of Gaussian multiplicative chaos.

Beyond the partition function, the integrability of the Sinh--Gordon model is also supposed to lead to an explicit formula for the expected value of the vertex operator 
\begin{align*}
\langle e^{\alpha \varphi(0)} \rangle &:=  \lim_{R \to \infty} \frac{\int e^{\alpha \varphi(0)} \exp \big(- \int_{\toR} ( \tfrac 12 \ab d \varphi \ab_R^2  + 2 \mu \cosh(\gamma \varphi ) ) dv_{R} \big) \, D \varphi}{Z_R}\,.
\end{align*}
Thus, showing the existence of the above limit is another natural open question. Beyond this, major open problems include the existence of a mass-gap in the infinite volume limit and understanding the integrability of the model in a rigorous way. We also mention that it would be interesting to study the model at the critical value $\gamma=2$. In \cite{BLC,til} the self-dual nature of the Sinh--Gordon model has been discussed, and the authors are not fully certain if the model behaves the same in the $\gamma \in(0,\sqrt{2})$ and $\gamma \in [\sqrt{2},2)$ ranges. From the point of view of GMC theory, and our paper, there does not seem to be a big difference between these two ranges. Our analysis does not include the critical case $\gamma=2$. It is speculated in \cite{BLC} that the infinite volume limit might be massless at the critical value.

\emph{Acknowledgements.} We thank Antti Kupiainen, Rémi Rhodes, Janne Junnila, Colin Guillarmou, Trishen Gunaratnam and Roland Bauerschmidt for useful discussions, and Michael Hofstetter and Ofer Zeitouni for discussing their preprint \cite{HZ} with us. JO acknowledges the support from ERC Starting Grant 101042460 “Interplay of structures in conformal and universal random geometry", Austrian Science Fund (FWF) 10.55776/P33083 and from Academy of Finland. NB acknowledges support from ERC Advanced Grant 74148 “Quantum Fields and Probability". MDW acknowledges support from the Royal Society Research Grant RG\textbackslash R1\textbackslash 251187 “Spectral aspects of multiplicative chaos".

\section{Decomposition of logarithmically correlated Gaussian fields}\label{sec:prel}

\subsection{Preliminaries}

\subsubsection{Notations}

Let $\T^d := \R^d/\Z^d$ be the $d$-dimensional unit torus. It inherits the flat Riemannian metric from $\R^d$ and we denote the corresponding volume form by $v_{\T^d}$. The distance function $d_{\T^d}: \T^d \times \T^d \to \R_+$, is given by $d_{\T^d}(z,w) = |(z-w) \mod 1|$, with the convention that for $z \in \T^d$
\begin{align}
z \mod 1 := (z_1 \mod 1, \hdots, z_d \mod 1) \in (- \tfrac{1}{2},\tfrac{1}{2}]^d\,.
\end{align}
The $L^2$-Sobolev spaces are defined for $s > 0$ as
\begin{align*}
H^s(\T^d) &:= \Big\{ f \in L^2(\T^d) : \|f\|_{H^s(\T^d)}^2 := \sum_{k \in \Z^d} (1+|k|^2)^s | \F{f}(k)|^2 < \infty \Big\}\,,
\end{align*}
where $\F{f}$ denotes the Fourier transform of $f$. For $s > 0$ we define the $\| \cdot \|_{H^{-s}(\T^d)}$-norm of $f \in L^2(\T^d)$ by
\begin{align*}
\|f\|_{H^{-s}(\T^d)}^2 &:= \sum_{k \in \Z^d} \frac{|\hat f (k)|^2}{(1+|k|^2)^s}\,,
\end{align*}
which is always finite for $L^2(\T^d)$-functions.

Let $\PNN: L^2(\T^d) \to L^2(\T^d)$ denote the projections on functions with Fourier support outside a ball of radius $N$, that is, 
\begin{align*}
\PNN f (x) &:= \underset{|n| > N}{\sum_{n \in \mathbb{Z}^d}} \hat f (n) e^{2 \pi i n \cdot x}\,.
\end{align*}
We will denote $\PN := I-\PNN$, where $I$ is the identity operator on $L^2(\T^d)$.

The space of continuous functions on $\T^d$ is denoted by $C(\T^d)$, the Fourier transform of a function $f \in L^2(\R^d)$ by $\mathcal{F}_{\R^d}[f]$ and the measure of a set $D \subset \T^d$ by $|D| := \int_D dv_{\T^d}$.

\begin{remark}[Translation invariant kernels]\label{rem:translation_invairance}
Let $k \in L^2(\T^d)$ and define $K: L^2(\T^d) \to L^2(\T^d)$ by 
\begin{align*}
Kf(x) &= \int_{\T^d} k(x-y) f(y) \, dv_{\T^d}(y)\,.
\end{align*}
Then we have that
\begin{align*}
(\widehat{Kf})(n) &= \F{k}(n) \F{f}(n) \,, \quad n \in \Z^d\,,
\end{align*}
that is, $K$ acts diagonally in Fourier space. This implies that $K$ commutes with $\PN$ and $\PNN$.
\end{remark}

\subsubsection{Logarithmically correlated Gaussian fields}\label{sec:log_fields}

Let $C: \T^d \times \T^d \to \R$ be a symmetric positive definite function satisfying
\begin{align}\label{eq:log-kernel}
C(z,w) &= - \log d_{\T^d}(z,w) + h(z,w)\,, \quad (z,w) \in \T^d \times \T^d\,,
\end{align}
where $h \in C(\T^d \times \T^d)$. Then it holds that
\begin{enumerate}
\item There exists a Gaussian field $X$ on $\T^d$ with the covariance kernel $C$.
\item $X \in H^{-s}(\T^d)$ almost surely for any $s > 0$.
\item The integral operator corresponding to the kernel $C$ is a Hilbert--Schmidt operator on $L^2(\T^d)$.
\end{enumerate}
For proofs of these facts, see Section 2 in \cite{JSW2}. The Gaussian field $X$ is called a \emph{logarithmically correlated Gaussian field}, or \emph{log-correlated field} for short, as Gaussianity will always be assumed. We will often use the abuse of notation
\begin{align*}
\E[X(z) X(w)] &:= C(z,w)\,,
\end{align*}
even though the field $X$ is not defined pointwise. 

Important examples of logarithmically correlated Gaussian fields include different variants of the Gaussian free field in two dimensions, and the so-called $\star$-scale invariant fields. The latter are usually defined on $\R^d$ by the covariance kernel
\begin{align}
C_\infty(z,w) &= \int_0^\infty \rho(e^u ( z-w ) )  \, du\,, \quad (z,w) \in \R^d \times \R^d\,,
\end{align}
where $\rho: \R^d \to [0, \infty)$ is a positive definite function satisfying the following properties
\begin{enumerate}
\item $\rho(0) = 1$,
\item $\rho$ is rotationally symmetric $\rho(x) = \rho( (|x|,0,\hdots,0) )$,
\item $\rho$ is supported in the ball $B(0,\frac{1}{2}) \subset \R^d$, 
\item $\rho \in H^{d+s}(\T^d)$ for some $s > 0$.
\end{enumerate}
We call $\rho$ the seed covariance function. We also introduce the almost $\star$-scale invariant Gaussian fields with parameter $\xi \in (0,\infty)$, defined by the covariance kernel
\begin{align}\label{C_xi}
C_\xi(z,w) &:= \int_0^\infty \rho(e^u ( z-w ) ) (1 - e^{-\xi u} ) \, du\,, \quad (z,w) \in \R^d \times \R^d \,.
\end{align}
For $t > 0$ and $\xi \in (0,\infty]$ wee will use the following notations
\begin{align}\label{eq:C_xi_t}
C_\xi^t(z,w) &:= \int_t^\infty \rho(e^u(z-w))(1-e^{-\xi u}) \, du \,, \\
C_{\xi,t}(z,w) &:= \int_0^t \rho(e^u(z-w))(1-e^{-\xi u}) \, du \,.
\end{align}
For $\xi \in (0,\infty)$ we also define $R_\xi = C_\infty - C_\xi$ with $R_{\xi,t}$ and $R_\xi^t$ defined as above. More details on $\star$-scale invariant fields can be found in \cite{JSW,AJJ}.

\begin{remark}[Almost $\star$-scale invariant fields on the torus]\label{rem:star_torus}
Let $\rho:\R^d \to [0,\infty)$ be a seed covariance function. We can define the corresponding almost $\star$-scale invariant Gaussian field on the torus $\mathbb{T}^d = \R^d/\Z^d$ by defining the covariance $C_\xi: \mathbb{T}^d \times \mathbb{T}^d \to \R$ as
\begin{align*}
C_\xi(x,y) &= \int_0^\infty \rho \big( e^u (d_{\T^d}(x,y), 0, \hdots,0) \big) (1 - e^{-\xi u}) \, du \,,
\end{align*}
We define the kernels $C_\xi^t$ and $C_{\xi,t}$ similarly to \eqref{eq:C_xi_t}. We will often slightly abuse the notation by writing $\rho(e^u(x-y))$ instead of $\rho \big( e^u (d_{\T^d}(x,y), 0, \hdots,0)$ for $x,y \in \T^d$.
\end{remark}

\begin{definition}
We say the field $X$ defined in \eqref{eq:log-kernel} is \emph{non-degenerate} if its covariance kernel is a positive definite function, meaning that for all $f \in L^2(\T^d)$ we have that
\begin{align}\label{eq:non_degenerate}
\int_{\T^d \times \T^d} C(z,w) f(z) f(w) \, dv_{\T^d}(z) \, dv_{\T^d}(w)  > 0\,.
\end{align}
\end{definition}
\noindent As we assumed that $\rho$ is a positive definite function, it follows that $C_\infty$, $C_\xi$ and $R_\xi$ for all $\xi \in (0,\infty)$ are covariance kernels of a non-degenerate Gaussian fields, as well as the $t$-cutoff versions of these kernels.

\subsubsection{Gaussian multiplicative chaos} For a logarithmically correlated Gaussian field $X: \T^d \to \R$, we define its Gaussian multiplicative chaos (GMC) measure on $\T^d$ as 
\begin{align}\label{gmc_formal}
M_{X,\gamma}(dz) &= \lim_{\eps \to 0} e^{\gamma X_\eps(z) - \frac{\gamma^2}{2} \E[X_\eps(z)^2]} \, dv_{\T^d}(z)\,,
\end{align}
where $X_\eps$ is a mollification of $X$ in scale $\eps$ and the limit exists weakly in probability for all $\gamma \in (-\sqrt{2d}, \sqrt{2d})$. In the sequel we will sometimes denote GMC measures formally by
\begin{align*}
M_{X,\gamma}(dz) = e^{\gamma X(z) - \frac{\gamma^2}{2} \E[X(z)^2]} \, dv_{\T^d}(z)
\end{align*}
without explicitly referring to mollifications and limits. For more details on GMC measures, see for example \cite{BePo, review}.

\subsection{Proof of Theorem \ref{thm:decomposition}}

In this section we prove the decomposition result formulated in Theorem \ref{thm:decomposition}.

\begin{theorem} \label{lem:coupling}
	Let $Z: \T^d \to \R$ be a non-degenerate logarithmically correlated Gaussian field with the covariance kernel
\begin{align*}
\E[Z(z)Z(w)] &= -\log d_{\T^d}(z,w) + h(z,w)\,,
\end{align*}	
	with  $h \in H^{d + s}(\T^d \times \T^d)$ for some $s>0$. Let $\xi > 0$. Then there exists $t>0$  and a Gaussian field $H$, such that $H$ is Hölder continuous almost surely, and
   \begin{align*}
   Z \overset{d}{=} X^{t}+H\,,
   \end{align*}
   where $X^{t}$ has covariance 
   \begin{align*}
     C_{\xi}^t(z,w) =  \int_{t}^{\infty} \rho \big( e^{u}(z-w) \big) (1-e^{-\xi u}) \, du\,,
   \end{align*}
   and the fields $X^t$ and $H$ are independent of each other.
\end{theorem}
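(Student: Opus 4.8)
The plan is to reduce the statement to a question about positive-definiteness of a difference of kernels on the torus, and then to invoke the standard fact that a log-correlated kernel of the required Sobolev regularity can be made Hölder by subtracting a log-singular piece with a well-chosen cutoff. Concretely, since $Z$ and the sought $X^t$ are independent Gaussians, the decomposition $Z \overset{d}{=} X^t + H$ holds with $H$ independent of $X^t$ if and only if the kernel
\begin{align*}
K_t(z,w) &:= \E[Z(z)Z(w)] - C_\xi^t(z,w) = -\log d_{\T^d}(z,w) + h(z,w) - C_\xi^t(z,w)
\end{align*}
is the covariance of a genuine Gaussian field, i.e.\ is symmetric and positive semidefinite on $L^2(\T^d)$; one then takes $H$ to be the Gaussian field with covariance $K_t$ and defines $X^t$ on an independent probability space. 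So the whole theorem comes down to: (i) for $t$ large enough, $K_t$ is positive semidefinite, and (ii) for such $t$, $K_t$ is (the covariance of) an almost surely Hölder continuous field.

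For step (ii), first I would record that $C_\xi^t(z,w)$, by the definition \eqref{eq:C_xi_t} and the support assumption $\operatorname{supp}\rho \subset B(0,\tfrac12)$, captures exactly the logarithmic singularity: writing $\rho(x) = 1 + O(|x|)$ near $0$ (valid because $\rho \in H^{d+s}$ embeds into $C^{0,\alpha}$ for some $\alpha>0$, and $\rho(0)=1$), one gets $C_\xi^t(z,w) = -\log d_{\T^d}(z,w) + g_t(z,w)$ with $g_t$ continuous — in fact Hölder — on $\T^d \times \T^d$ (the $(1-e^{-\xi u})$ factor only improves convergence near $u=t$, and for $d_{\T^d}(z,w)$ bounded below the integral is manifestly smooth). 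Hence $K_t(z,w) = h(z,w) - g_t(z,w)$ is Hölder continuous on $\T^d \times \T^d$. A Hölder-continuous positive-definite kernel is the covariance of an a.s.\ Hölder-continuous Gaussian field (Kolmogorov continuity, or the argument in \cite{JSW2, JSW}), which gives the regularity of $H$. Along the way I would also note $h - g_0 \in H^{d+s}$-type regularity is inherited so there is no loss.

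For step (i), the main point is to choose the cutoff $t$. Here I would follow the strategy of \cite{JSW, AJJ}: the difference $h - g_t$ becomes smoother and, more importantly, its negative part can be controlled as $t \to \infty$ because $C_\xi^t$ converges (in the relevant operator sense) to the kernel that exactly cancels $-\log d_{\T^d}$ only at short scales, while at scales $\gtrsim e^{-t}$ the operator $C_\xi^t$ becomes small; combined with non-degeneracy of $Z$ (so that $-\log d_{\T^d}+h$ is \emph{strictly} positive definite as an operator, with a spectral gap away from $0$ after accounting for the Hilbert--Schmidt property), subtracting the operator $C_\xi^t$ — whose operator norm on the orthogonal complement of low frequencies is $o(1)$ as $t\to\infty$, since its Fourier symbol is concentrated at frequencies $\gtrsim e^t$ — leaves a positive-semidefinite operator. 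More precisely: decompose $K_t = (\text{strictly pos.\ def.\ part of } C_Z) - C_\xi^t + (\text{error})$ using the spectral decomposition of the Hilbert--Schmidt operator associated to $C_Z = -\log d + h$, and show the subtracted piece is dominated once $t$ is large. This is essentially the coupling argument and is where the real work lies.

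\textbf{Main obstacle.} The hard part will be step (i): making quantitative the claim that for large $t$ the kernel $-\log d_{\T^d} + h - C_\xi^t$ stays positive semidefinite. Naively $C_\xi^t$ is \emph{not} small in operator norm (it still contains a full log-singularity, hence is an unbounded-in-$L^\infty$ comparison kernel), so one cannot just say ``subtract something small''; instead one has to exploit that $C_Z - C_\xi^t$ equals $h$ minus a \emph{Hölder} function plus a controlled remainder, and separately that on low Fourier modes the non-degeneracy of $Z$ provides the needed positivity margin while on high Fourier modes $C_\xi^t$ and the log-kernel have matching leading symbols $\sim |k|^{-d}$ (up to the $\xi$-correction), so their difference is a genuinely lower-order, sign-controllable operator for $t$ large. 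Matching these two regimes — in particular quantifying the Fourier-symbol asymptotics of $C_\xi^t$ on $\T^d$, using the $H^{d+s}$ hypothesis on both $h$ and $\rho$ to get enough decay — is the crux, and is presumably the content of the auxiliary lemmas the authors prove before or alongside this theorem.
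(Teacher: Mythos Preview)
Your reduction to showing that $K_t = C - C_\xi^t$ is positive semidefinite and H\"older is exactly right, and step (ii) matches the paper: both $-\log d_{\T^d}$ and $C_\xi^t$ have the same logarithmic singularity with H\"older remainders, so $K_t = h - g_t$ is H\"older and Kolmogorov gives the regularity of $H$.

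Step (i), however, contains a confusion and is missing the key mechanism. First, a slip: you write that $C_\xi^t$ has operator norm $o(1)$ ``on the orthogonal complement of low frequencies'' because its symbol is concentrated at frequencies $\gtrsim e^t$. This is backwards---concentration at high frequencies means $C_\xi^t$ is \emph{large} on high modes (it carries the full log singularity there) and \emph{small} on low modes. So on high frequencies you cannot treat $C_\xi^t$ as a small perturbation; it and $C$ have matching leading symbols $\sim |k|^{-d}$ and the question is whether their \emph{difference} stays nonnegative. Second, your appeal to a ``spectral gap'' of $C$ is not available: $C$ is compact (Hilbert--Schmidt) on $L^2(\T^d)$, so its spectrum accumulates at $0$. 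What non-degeneracy actually gives (Lemma~\ref{lem:auxilary-coupling}) is coercivity in a \emph{weaker} norm, $\langle f, Cf\rangle \geq \delta \|f\|_{H^{-d/2}}^2$, and this is the form one must use.

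The paper's positivity argument does split by frequency via $\PN$, $\PNN$, but the crucial device---absent from your sketch---is an auxiliary buffer: one adds and subtracts $\eps R_\xi \PN$, where $R_\xi = C_\infty - C_\xi$ has H\"older kernel and symbol $\sim |k|^{-d-\xi}$. The low-frequency block becomes $\PN(C - \eps R_\xi)\PN$, positive by the $H^{-d/2}$-coercivity of $C$ and an upper bound $\langle f, R_\xi f\rangle \leq \delta_2 \|f\|_{H^{-d/2}}^2$. The remaining piece is then bounded below by $\eps R_\xi - C_\infty^t \PN + G^{>N}$, and here the buffer pays off: the \emph{lower} bound $\langle f, R_\xi f\rangle \geq \delta_1\|f\|_{H^{-d/2-\xi/2}}^2$ absorbs both $G^{>N}$ (of size $N^{-2\alpha}$ in that norm for $\alpha<\tfrac{s-\xi}{2}$, using $h \in H^{d+s}$) and $C_\infty^t \PN$ (of size $e^{-dt} N^{d+\xi}$). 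Choosing first $N$ large, then $t$ large, closes the argument. Your plan gestures at the frequency split but lacks this $R_\xi$ buffer and the correct coercivity, without which the positivity on high modes does not go through.
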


\begin{proof}
The covariance kernel of the $\star$-scale invariant field with seed covariance $\rho$ can be written as
\begin{align*}
C_\infty(z,w) &= -\log d_{\T^d}(z,w) + \int_{d_{\T^d}(z,w)}^1 \big(\rho(t)-1\big) \tfrac{dt}{t} = - \log d_{\T^d}(z,w) + h_\rho(z,w)
\end{align*}
where $h_\rho \in H^{d + s_1}(\T^d \times \T^d)$ for some $s_1 >0$ (see Proposition 4.1. (vi) in \cite{JSW}). This implies that $C_\infty$ satisfies the assumptions made in \eqref{eq:log-kernel}. We define $G:L^2(\T^d) \to L^2(\T^d)$ by
\begin{align*}
Gf(x) = \int_{\T^d} (h(z,w)-h_\rho(z,w)) f(w) \, dv_{\T^d}(z)\,.
\end{align*}
By the assumption we have $h \in H^{d+s_2}(\T^d \times \T^d)$ for some $s_2 > 0$, so the integral kernel of $G$ belongs to $H^{d+s}(\T^d \times \T^d)$ where $s := \min\{s_1,s_2\} > 0$. We denote the covariance kernel of $Z$ by $C$. Thus, as integral operators, we have that
\begin{align*}
C &= C_\infty + G\,.
\end{align*}
Let
\begin{align*}
   G^{<N}:=\PN G \PN \qquad  G^{>N}:= \PNN G \PN + \PN G \PNN + \PNN G \PNN\,.
\end{align*}
Now the covariance operator $C$ decomposes as
   \begin{align}
   C =  C_\infty+G &= C_\infty^t +  C_{\infty,t}+G \nonumber \\
   & =C_\infty^t + C_{\infty,t} \PN + C_{\infty,t} \PNN+G^{<N}+G^{>N} \nonumber \\
   &= C_\infty^t + C_\infty \PN - C_\infty^t \PN + C_{\infty,t} \PNN +G^{<N} + G^{>N} \,.
\end{align}
As $C_\infty$ is translation invariant, it commutes with $\PN$ and $\PNN$, so we can rewrite
\begin{align*}
C_\infty \PN + G^{<N} &= \PN (C_\infty+G) \PN   - \eps R_\xi \PN + \eps R_\xi \PN \\
&= \PN C \PN - \eps R_\xi \PN + \eps R_\xi \PN\,,
\end{align*}
where $\eps > 0$, $\xi > 0$ and $R_\xi$ is the integral operator with the kernel
\begin{align*}
R_\xi(z,w) &= \int_0^\infty \rho(e^u(z-w)) e^{-\xi u} \, du\,.
\end{align*}
We decompose $C^t_\infty$ as 
\begin{align*}
C_\infty^t &= C_\xi^t  + R^t_\xi \,,
\end{align*}
where $R_\xi^t$ is the integral operator with the kernel
\begin{align*}
R_\xi^t(z,w) &= \int_t^\infty \rho(e^u(z-w¨)) e^{-\xi u} \, du \,.
\end{align*}
Thus the covariance is decomposed as
\begin{align}\label{eq:c_decomp}
C &= C_\xi^t   +  \PN ( C  - \eps R_\xi) \PN  + R_\xi^t + G^{>N} +  (   \eps R_\xi - C_\infty^t  ) \PN  +    (C_{\xi,t} + R_{\xi,t})      \PNN\,,
\end{align}
where in the second term we used translation invariance of the integral kernel of $R_\xi$ to commute it with $\PN$. We will show that $\PN ( C  - \eps R_\xi) \PN$ is the covariance operator of a smooth Gaussian field and that $R_\xi^t + G^{>N} +  (   \eps R_\xi - C_\infty^t  ) \PN  +    (C_{\xi,t} + R_{\xi,t}) \PNN$ is the covariance operator of a Gaussian field that almost surely belongs to $H^{\frac{d}{2}+\alpha}(\T^d)$ for some $\alpha > 0$ once $N$, $t$ and $\xi$ are suitably chosen.

By Lemma \ref{lem:auxilary-coupling}, there exists $\delta_1 > 0$ such that for any $f \in L^2(\T^d)$ we have
\begin{align*}
\langle f,  C  f \rangle   > \delta_1 \| f\|_{H^{-\frac{d}{2}}(\T^d)}^2\,.
\end{align*}
By Lemma \ref{lem:R_bounds} (ii) for all $f \in L^2(\T^d)$ we have 
\begin{align*}
\langle f, R_\xi f \rangle & \leq \delta_2 \|f\|_{H^{-d/2}(\T^d)}^2\,.
\end{align*}
Now for all $f \in L^2(\T^d)$ we get 
\begin{align*}
\langle f, \PN (C - \eps R_\xi) \PN f \rangle &\geq (\delta_1 - \eps \delta_2) \|\PN f\|_{H^{-d/2}(\T^d)}^2\,.
\end{align*}
This shows that for $\eps < \frac{\delta_1}{\delta_2}$ the operator $\PN (C - \eps R_\xi) \PN$ is the covariance of a Gaussian field that almost surely belongs to $\PN L^2(\T^d) \subset C^\infty(\T^d)$.

The integral kernel of $R_\xi^t + G^{>N} +  (   \eps R_\xi - C_\infty^t  ) \PN  +  (C_{\xi,t} + R_{\xi,t}) \PNN$ is Hölder continuous, as the integral kernel of $G$ belongs to $H^{d+s}(\T^d \times \T^d)$ for some $s > 0$ and the covariance kernel of $R_\xi$ is Hölder continuous by Proposition 4.1. (ii) in \cite{JSW}. Next we prove the positivity.

First, we use $C_{\xi,t} \geq 0$, $R_\xi^t \geq 0$ and $\eps < 1$ 
\begin{align*}
R_\xi^t + \eps R_\xi \PN + (C_{\xi,t} + R_{\xi,t}) \PNN &= R_\xi^t + \eps R_{\xi,t} \PN + \eps R_{\xi}^t \PN + (C_{\xi,t} + R_{\xi,t}) \PNN \\
 &\geq R_\xi^t + \eps R_{\xi,t} \PN + R_{\xi,t} \PNN \\
& \geq \eps R_\xi^t + \eps R_{\xi,t} \PN + \eps R_{\xi,t} \PNN \\
& = \eps R_\xi\,.
\end{align*}
Thus, it remains to show that the operator $\eps R_\xi  - C_\infty^t \PN + G^{>N}$ is positive. By Lemma \ref{lem:R_bounds} (i), there exists a  $\delta >0 $ such that  $\langle f, R_\xi f \rangle > \delta \| f \|^2_{H^{-d/2-\xi/2}(\T^d)}$ for all $f \in L^2(\T^d)$. Lemmas \ref{lem:C_PN} and \ref{lem:G_bound} give upper bounds for $C_\infty^t \PN$ and $G^{>N}$, respectively, so we get that for all $f \in L^2(\T^d)$
\begin{align*}
\langle f, (\eps R_\xi  - C_\infty^t \PN + G^{>N} ) f \rangle &> \eps \delta \|  f\|_{H^{-d/2-\xi/2}(\T^d)}^2 - c e^{-dt} N^{d+\xi} \|  f \|_{H^{-d/2-\xi/2}(\T^d)}^2  - c  N^{-2 \alpha} \|f\|_{H^{-d/2-s/2+\alpha}(\T^d)}^2\,,
\end{align*}
where $s>0$ and  $\alpha \leq \frac{d+s}{2}$. We take $\alpha = \frac{s-\xi}{2}$, assuming that $\xi \in (0,s)$, where $s>0$ is such that the integral kernel of $G$ belongs to $H^{d+s}(\T^d \times \T^d)$. This leads to 
\begin{align*}
\langle f, (\eps R_\xi  - C_\infty^t \PN + G^{>N} ) f \rangle & \geq  \big( \eps \delta    - c e^{-dt} N^{d + \xi }  - c N^{ -(s-\xi) } \big) \| f\|_{H^{-d/2-\xi/2}(\T^d)}^2 \,,
\end{align*}
for all $f \in L^2(\T^d)$. It follows that by first taking large $N$ and then suitably large $t$, the operator $\eps R_\xi  - C_\infty^t \PN + G^{>N}$ is positive. Thus we have shown that $R_\xi^t + G^{>N} +  (   \eps R_\xi - C_\infty^t  ) \PN  +  (C_{\xi,t} + R_{\xi,t}) \PNN$ is a Hölder continuous and positive kernel, so it is the covariance kernel of a Hölder continuous Gaussian field.
\end{proof}

\begin{proposition}\label{lem:coupling-2}
   Let $Z$ be a Gaussian field on the torus and let $C$ be its covariance kernel. Assume that
	\begin{align*}
	C(z,w) &= - \log d_{\T^d}(z,w) + h(z,w)\,,
\end{align*}	   
	where $h \in H^{d+s}(\T^d \times \T^d)$ for some $s > 0$.
    Then there exists $\xi>0$, $N \in \N$ and Gaussian fields $\bar{X},\bar{H}$ on $\T^d$ such that 
	\begin{enumerate}
	\item $Z = \bar X + \bar H$ and $\bar X$ is independent of $\bar H$
    \item $\bar{H}$ is Hölder continuous almost surely.
    \item The covariance of $\bar{X}$ is given by the operator
   \begin{align*}
     \PNN (-  \Delta^{-\frac{d}{2}} + \Delta^{-\frac{d}{2}-\xi})\,,
   \end{align*}
   where $\Delta^{-1}$ denotes the inverse of the zero-mean Laplace operator on $\T^d$.
	\end{enumerate}
\end{proposition}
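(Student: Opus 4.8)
The plan is to run the argument of the proof of Theorem~\ref{lem:coupling} with the truncated almost $\star$-scale invariant kernel $C^t_\xi$ replaced by the high-frequency Fourier multiplier $\bar C:=\PNN K$, where $K:=-\Delta^{-d/2}+\Delta^{-d/2-\xi}$ has, for the normalisation of $-\Delta^{-d/2}$ for which its kernel has logarithmic singularity with coefficient $1$, Fourier symbol $\mu(n)=(2\pi|n|)^{-d}-(2\pi|n|)^{-d-2\xi}$ on $\Z^d\setminus\{0\}$ (so $\mu(n)>0$ for $|n|\ge1$ and $\bar C$ is a genuine covariance). Two preliminary observations set things up. First, $-\Delta^{-d/2}$ is itself a non-degenerate log-correlated kernel: writing out its Fourier series one sees its kernel equals $-\log d_{\T^d}(z,w)+g(z,w)$ with $g$ Hölder continuous, and moreover $g-h_\rho$ is smooth, $h_\rho$ being the regular part of the $\star$-scale invariant kernel $C_\infty$. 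Hence, setting $C=-\Delta^{-d/2}+\tilde G$, the operator $\tilde G$ has kernel $h-g$, which differs from the kernel $h-h_\rho$ of the operator $G$ from the proof of Theorem~\ref{lem:coupling} by a smooth function, so Lemma~\ref{lem:G_bound} applies to it. Second, since $\PN K$ is finite rank with $C^\infty$ kernel and $\Delta^{-d/2-\xi}$ has a $C^\beta$ kernel for every $\beta<2\xi$, the kernel of $\bar C$ is again of the form $-\log d_{\T^d}(z,w)$ plus a Hölder continuous function. Throughout one fixes $\xi\in(0,s/2)$ and afterwards takes $N$ large.

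Property~(2) is then immediate: subtracting $\bar C$ from $C$ the logarithmic singularities cancel, so $C-\bar C$ has a Hölder continuous kernel.

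For positivity, put $D:=-\Delta^{-d/2}-\bar C$, the Fourier multiplier with non-negative symbol $(2\pi|n|)^{-d}\mathds{1}_{0<|n|\le N}+(2\pi|n|)^{-d-2\xi}\mathds{1}_{|n|>N}$. Since $D$ is block diagonal for $(\PN,\PNN)$ and agrees with $-\Delta^{-d/2}$ below frequency $N$, and since $\PN C\PN=\PN(-\Delta^{-d/2})\PN+\PN\tilde G\PN$, one obtains
\[
C-\bar C\;=\;D+\tilde G\;=\;\PN C\PN+\PNN D\PNN+\tilde G^{>N},\qquad \tilde G^{>N}:=\PNN\tilde G\PN+\PN\tilde G\PNN+\PNN\tilde G\PNN .
\]
By Lemma~\ref{lem:auxilary-coupling}, $\langle f,\PN C\PN f\rangle\ge\delta_1\|\PN f\|_{H^{-d/2}}^2$; by construction $\langle f,\PNN D\PNN f\rangle=\sum_{|n|>N}(2\pi|n|)^{-d-2\xi}|\hat f(n)|^2=:\Sigma_N(f)$; and by Lemma~\ref{lem:G_bound}, $|\langle f,\tilde G^{>N}f\rangle|\le cN^{-2\alpha}\|f\|_{H^{-d/2-s/2+\alpha}}^2$ for $0<\alpha\le(d+s)/2$. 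Choosing $\alpha=s/2-\xi>0$ turns the last norm into $\|f\|_{H^{-d/2-\xi}}^2$, and since $(1+|n|^2)^{-d/2-\xi}$ is comparable to $(2\pi|n|)^{-d-2\xi}$ for $|n|\ge1$ and bounded by $(1+|n|^2)^{-d/2}$, it splits as $\|f\|_{H^{-d/2-\xi}}^2\le\|\PN f\|_{H^{-d/2}}^2+C_d\,\Sigma_N(f)$. Combining,
\[
\langle f,(C-\bar C)f\rangle\;\ge\;(\delta_1-cN^{-2\alpha})\,\|\PN f\|_{H^{-d/2}}^2+(1-cC_dN^{-2\alpha})\,\Sigma_N(f),
\]
which for $N$ large enough is a strictly positive definite form on $L^2(\T^d)$. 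Therefore $C-\bar C$ is a positive definite, Hölder continuous kernel, hence the covariance of an almost surely Hölder continuous Gaussian field $\bar H$. Taking $\bar X$ independent of $\bar H$ with covariance $\bar C=\PNN(-\Delta^{-d/2}+\Delta^{-d/2-\xi})$, the field $\bar X+\bar H$ has covariance $C$, so $\bar X+\bar H\overset{d}{=}Z$; this yields (1)--(3).

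The hard part is the last displayed estimate. The extra term $\Delta^{-d/2-\xi}$ contributes only the tiny surplus $(2\pi|n|)^{-d-2\xi}$ at frequency $n$, a factor $|n|^{-2\xi}$ below the natural $H^{-d/2}$ weight, so it is only just enough to absorb the remainder $N^{-2\alpha}\|f\|_{H^{-d/2-s/2+\alpha}}^2$ coming from $\tilde G^{>N}$; this balance is exactly what dictates the admissible range $\xi\in(0,s/2)$ and the need to take $N$ large after fixing $\xi$. A subsidiary point requiring care is the first preliminary observation: pinning down the normalisation of $-\Delta^{-d/2}$ so its logarithmic coefficient is $1$, and checking that its regular part differs from $h_\rho$ only by a smooth function.
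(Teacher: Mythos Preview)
Your proof is correct and follows the same template as the paper's: write $C=-\Delta^{-d/2}+\tilde G$ with $\tilde G$ having $H^{d+s}$ kernel, then show $C-\bar C$ is positive with H\"older kernel using Lemma~\ref{lem:auxilary-coupling} for the low-frequency block and Lemma~\ref{lem:G_bound} for the off-block error, with the constraint $\xi\in(0,s/2)$ arising identically.

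The one organisational difference is worth noting. The paper introduces an auxiliary parameter $\eps>0$ and splits $C-\bar C$ into two pieces, $\PN(C+\eps\Delta^{-d/2-\xi})\PN$ and the remainder, each shown positive \emph{separately}; the $\eps$ transfers a small amount of positivity between the low- and high-frequency blocks so that each piece stands on its own. You instead keep the single identity $C-\bar C=\PN C\PN+\PNN D\PNN+\tilde G^{>N}$ and absorb the error $\tilde G^{>N}$ jointly into both blocks by splitting $\|f\|_{H^{-d/2-\xi}}^2\le\|\PN f\|_{H^{-d/2}}^2+C_d\,\Sigma_N(f)$. This is cleaner and avoids the $\eps$ altogether, at no cost.

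One small inaccuracy: you assert that $g-h_\rho$ is \emph{smooth} in order to transfer Lemma~\ref{lem:G_bound} from $G$ to $\tilde G$. That is stronger than what holds in general and stronger than you need; all that is required (and what the paper's proof simply asserts) is that the kernel $h-g$ of $\tilde G$ lies in $H^{d+s}(\T^d\times\T^d)$, which suffices because the proof of Lemma~\ref{lem:G_bound} uses only this regularity. Your flagged normalisation issue for $-\Delta^{-d/2}$ is real but shared with the paper; neither proof is fully explicit about the constant in front of the logarithm.
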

\begin{proof}
We write the proof for $d=2$, as the general case follows by replacing $\Delta^{-1}$  by $\Delta^{-\frac{d}{2}}$.

 We can write $C= - \Delta^{-1} + \bar G$ where the integral kernel of $\bar G$ belongs to $H^{2+s}(\T^2 \times \T^2)$. Now we proceed as in the proof of Theorem \ref{lem:coupling}, replacing $C_\infty$ by $- \Delta^{-1}$ and $C_\xi$ by $-\Delta^{-1}+\Delta^{-1-\xi}$ and $G$ by $\bar{G}$.
 
First, we write
 \begin{align}\label{eq:C2}
 C &= - \Delta^{-1} \PNN + \PN (C + \eps \Delta^{-1-\xi}) \PN - \eps \Delta^{-1-\xi} \PN + \bar G^{>N}\,,
 \end{align}
 where $\bar G^{>N} := \bar G - \PN \bar G \PN$ and we used the fact that $\Delta$ commutes with $\PN$. By Lemma \ref{lem:auxilary-coupling} there exists $\delta > 0$ such that 
 \begin{align*}
 \langle f,  C f \rangle & \geq \delta \|f\|_{H^{-1}(\T^2)}^2 
 \end{align*}
 for all $f \in L^2(\tor)$, and we have the bound
\begin{align}\label{eq:Delta_xi_bound}
\langle f,  - \Delta^{-1-\xi} f \rangle &= \sum_{k \in \Z^2 \setminus \{0\}} |\hat f(k)|^2 |k|^{-2-2\xi}  \leq c \sum_{k \in \Z^2 \setminus \{0\}} \frac{|\hat f(k)|^2}{(1+|k|^2)^{1+\xi}} = c\|f\|_{H^{-1-\xi}(\T^2)}^2\,.
\end{align}
Thus,
\begin{align*}
\langle f, \PN (C+ \eps \Delta^{-1-\xi}) \PN f \rangle & \geq (\delta - \eps c) \|\PN f\|_{H^{-1}(\T^2)}^2\,,
\end{align*}
so for small enough $\eps$ the operator $\PN (C+ \eps \Delta^{-1-\xi}) \PN$ is the covariance of a smooth Gaussian field.

For the rest of the terms in \eqref{eq:C2}, we write
\begin{align*}
- \Delta^{-1} \PNN - \eps \Delta^{-1-\xi} \PN + \bar G^{>N} &= (- \Delta^{-1} + \Delta^{-1-\xi}) \PNN  - \Delta^{-1-\xi} \PNN - \eps \Delta^{-1-\xi} \PN + \bar G^{>N} \\
& \geq (- \Delta^{-1} + \Delta^{-1-\xi}) \PNN  - \eps \Delta^{-1-\xi} + \bar G^{>N}\,.
\end{align*}
We have
\begin{align*}
\langle f, - \Delta^{-1-\xi} f \rangle &\geq c \|f\|_{H^{-1-\xi}(\tor)}^2\,.
\end{align*}
By combining this with Lemma \ref{lem:G_bound}, we get
\begin{align*}
\langle f, (- \eps \Delta^{-1-\xi} + \bar G^{>N}) f \rangle & \geq \eps c \|f\|_{H^{-1-\xi}(\tor)}^2 - c N^{-2 \alpha} \|f\|_{H^{-1-s/2+\alpha}(\tor)}^2\,,
\end{align*}
and it suffices to take $\alpha = \frac{s}{2} - \xi$ which is positive for $\xi \in (0,\frac{s}{2})$. It follows that for large enough $N$, $- \eps \Delta^{-1-\xi} + \bar G^{>N}$ is a positive operator with integral kernel belonging to $H^{d+\xi}(\T^d \times \T^d)$ (regularity of the integral kernel of $\Delta^{-1-\xi}$ follows from Theorem 3.3. in \cite{LSSW}), so it is the covariance kernel of a Hölder continuous Gaussian field.

\end{proof}

\subsubsection{Estimates}

\begin{lemma}\label{lem:C_PN}
Let $t>0$, $\xi \in (0,\infty]$, $N \in \N$ and $s > 0$. Then there exists $c > 0$ such that for all $f \in L^2(\T^d)$ we have that
\begin{align*}
\langle f, C_\xi^t P_{<N} f \rangle & \leq c e^{-dt} N^{2s}  \|f\|^2_{H^{-s}(\T^d)}\,.
\end{align*}
\end{lemma}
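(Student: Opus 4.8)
The plan is to exploit that $C_\xi^t$ is a convolution (translation-invariant) kernel, so by Remark \ref{rem:translation_invairance} it acts diagonally in Fourier space, and then to combine a \emph{uniform} bound on its Fourier multiplier with the crude inequality $(1+|n|^2)^s \leq 2^s N^{2s}$ valid on the ball $\{|n|\leq N\}$.

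First I would compute the Fourier multiplier of $C_\xi^t$ on $\T^d$. The associated convolution kernel is $k(z) = \int_t^\infty \rho(e^u z)(1-e^{-\xi u})\,du$ (read on the torus via the first coordinate as in Remark \ref{rem:star_torus}). For $u>0$ the function $x\mapsto \rho(e^u x)$ is supported in $B(0,e^{-u}/2)\subset(-\tfrac12,\tfrac12]^d$, hence descends to $\T^d$ with $n$-th Fourier coefficient $\int_{\R^d}\rho(e^u x)e^{-2\pi i n\cdot x}\,dx = e^{-du}\,\mathcal{F}_{\R^d}[\rho](e^{-u}n)$. Since the integrand is dominated by $A e^{-du}$ with $A := \int_{\R^d}\rho = \mathcal{F}_{\R^d}[\rho](0) < \infty$, Fubini justifies interchanging the $u$-integral with the Fourier expansion, giving
\[
\widehat{C_\xi^t}(n) = \int_t^\infty e^{-du}\,\mathcal{F}_{\R^d}[\rho](e^{-u}n)\,(1-e^{-\xi u})\,du\,, \qquad n\in\Z^d\,,
\]
with the obvious modification $1-e^{-\xi u}\equiv 1$ when $\xi=\infty$. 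Because $\rho\geq 0$ we have $|\mathcal{F}_{\R^d}[\rho](\zeta)|\leq \mathcal{F}_{\R^d}[\rho](0)=A$ for all $\zeta$ (and $\mathcal{F}_{\R^d}[\rho]\geq 0$ by positive definiteness of $\rho$), and $0\leq 1-e^{-\xi u}\leq 1$, so
\[
0\leq \widehat{C_\xi^t}(n) \leq A\int_t^\infty e^{-du}\,du = \frac{A}{d}\,e^{-dt}\,, \qquad n\in\Z^d\,.
\]

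Next, since both $C_\xi^t$ and $\PN$ act diagonally in Fourier space, for $f\in L^2(\T^d)$ the quadratic form collapses to a sum over $|n|\leq N$:
\[
\langle f, C_\xi^t \PN f\rangle = \sum_{\substack{n\in\Z^d \\ |n|\leq N}} \widehat{C_\xi^t}(n)\,|\hat f(n)|^2 \leq \frac{A}{d}\,e^{-dt}\sum_{\substack{n\in\Z^d \\ |n|\leq N}} |\hat f(n)|^2\,.
\]
Finally, for $|n|\leq N$ (and $N\geq 1$) one has $(1+|n|^2)^s\leq(1+N^2)^s\leq 2^s N^{2s}$, hence $|\hat f(n)|^2\leq 2^sN^{2s}(1+|n|^2)^{-s}|\hat f(n)|^2$; summing over $|n|\leq N$ and enlarging the range to all of $\Z^d$ yields
\[
\langle f, C_\xi^t \PN f\rangle \leq \frac{2^s A}{d}\,e^{-dt}N^{2s}\sum_{n\in\Z^d}\frac{|\hat f(n)|^2}{(1+|n|^2)^s} = c\,e^{-dt}N^{2s}\,\|f\|_{H^{-s}(\T^d)}^2
\]
with $c = 2^s A/d$, which is the claim.

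Every step is an elementary estimate, so I do not expect a genuine obstacle here; the only points needing a little care are the passage between the $\T^d$- and $\R^d$-Fourier transforms of the rescaled seed covariance $\rho(e^u\cdot)$ and the Fubini justification for the $u$-integral, both routine consequences of the compact support and nonnegativity of $\rho$.
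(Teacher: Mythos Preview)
Your proof is correct and follows essentially the same approach as the paper: reduce to Fourier space using translation invariance, bound the multiplier $\widehat{C_\xi^t}(n)$ uniformly by $c\,e^{-dt}$, and then absorb the weight $(1+|n|^2)^s\lesssim N^{2s}$ on the low modes. The only cosmetic difference is that you bound $|\mathcal{F}_{\R^d}[\rho]|$ by $\mathcal{F}_{\R^d}[\rho](0)$ using $\rho\geq 0$, whereas the paper invokes $\rho\in H^{d+s}$ to get $\|\mathcal{F}_{\R^d}[\rho]\|_{L^\infty}<\infty$; both are valid.
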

\begin{proof}
As $C_\xi^t$ commutes with $\PN$, we have
\begin{align*}
\langle f, C_\xi^t P_{<N} f \rangle &=  \langle \PN f, C_\xi^t \PN f \rangle =  \sum_{|k| < N} |\F{f} (k)|^2 \F{C_\xi^t}(k)\,.
\end{align*}
Denote $\rho_u(x) = \rho(e^u x)$. We denote by $\hat \rho_u: \Z^d \to \C$ the Fourier transform of the function $\tilde \rho_u: \T^d \to \R$, $\tilde \rho_u(z) = \rho(e^u(z \mod 1))$. We have for $n\in \mathbb{Z}^d$
\begin{align}\label{eq:rho_fourier}
   \hat \rho_u(n) &=\int_{\T^d} e^{-2 \pi i n \cdot x} \tilde \rho_u(x) \, dv_{\T^d}(x)  =\int_{(-1/2, 1/2)^d} e^{-2 \pi i n \cdot x} \rho(e^u x) \, dx =  \int_{\R^d} e^{-2 \pi i n \cdot e^{-u} x } \rho(x) \, dx \nonumber \\
   &= e^{-du} \mathcal{F}_{\R^d}[\rho](e^{-u}n) \,,
\end{align}
where we used the fact that $\rho$ is supported in $B(0,\frac 12)$. Now we get that 
\begin{align*}
\hat C_\xi^t(n) &= \int_t^\infty e^{-du} \mathcal{F}_{\R^d}[\rho] (e^{-u} n) (1-e^{-\xi u}) \, du  \leq \| \mathcal{F}_{\R^d}[\rho] \|_{L^\infty(\R^d)} \int_t^\infty e^{-du} \, du   \leq c e^{-dt}\,,
\end{align*}
as $\rho \in H^{d+s}(\T^d)$ for some $s >0$ implies that $\| \mathcal{F}_{\R^d}[\rho] \|_{L^\infty(\R^d)} < \infty$. Thus,
\begin{align*}
\left| \sum_{|k| < N} |\F{f} (k)|^2 \F{C_\xi^t}(k) \right| & \leq c e^{-dt} \sum_{k \in \Z^d, |k| < N} |\hat f(k)|^2   \\
& \leq c e^{-dt} (N+1)^{2s} \sum_{|k|<N} \frac{| \F{f}(k)|^2}{(1+|k|^2)^{s}} \\
&\leq c e^{-dt} (N+1)^{2s}  \|f\|^2_{H^{-s}(\T^d)}\,.
\end{align*}
\end{proof}

\begin{lemma}\label{lem:R_bounds}
\begin{enumerate}
\item[(i)] There exists a $\delta_1 > 0$ such that
\begin{align*}
\langle f, R_\xi f \rangle & \geq \delta_1 \|f\|_{H^{-d/2-\xi/2}(\T^d)}^2
\end{align*}
for all $f \in L^2(\T^d)$.

\item[(ii)] There exists a $\delta_2 > 0$ such that
\begin{align*}
\langle f, R_\xi f \rangle & \leq \delta_2 \|f\|_{H^{-d/2}(\T^d)}^2
\end{align*}
for all $f \in L^2(\T^d)$.
\end{enumerate}
\end{lemma}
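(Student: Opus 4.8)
\textbf{Proof plan for Lemma \ref{lem:R_bounds}.}

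The plan is to pass to Fourier space, where the translation-invariant operator $R_\xi$ acts diagonally, and then compare its Fourier multiplier $\widehat{R_\xi}(n)$ against the multipliers $(1+|n|^2)^{-d/2}$ and $(1+|n|^2)^{-d/2-\xi/2}$ defining the relevant Sobolev norms. By Remark \ref{rem:translation_invairance} and the computation \eqref{eq:rho_fourier} already carried out in the proof of Lemma \ref{lem:C_PN}, we have
\begin{align*}
\widehat{R_\xi}(n) &= \int_0^\infty e^{-du}\, \mathcal{F}_{\R^d}[\rho](e^{-u}n)\, e^{-\xi u}\, du\,, \qquad n \in \Z^d\,,
\end{align*}
and $\langle f, R_\xi f\rangle = \sum_{n\in\Z^d} |\F f(n)|^2 \widehat{R_\xi}(n)$ whenever this is finite. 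So both inequalities reduce to showing that $\widehat{R_\xi}(n)$ is comparable (up to constants) to $|n|^{-d-\xi}$ for large $|n|$, and bounded for small $|n|$; more precisely, it suffices to prove $\widehat{R_\xi}(n) \geq \delta_1 (1+|n|^2)^{-d/2-\xi/2}$ for all $n$ (part (i)) and $\widehat{R_\xi}(n) \leq \delta_2 (1+|n|^2)^{-d/2}$ for all $n$ (part (ii)).

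For part (ii), substitute $v = e^{-u}|n|$ (for $n\neq 0$) to get $\widehat{R_\xi}(n) = |n|^{-d-\xi}\int_0^{|n|} v^{d+\xi-1}\mathcal{F}_{\R^d}[\rho](v\,e_n)\,dv$ where $e_n = n/|n|$; since $\mathcal{F}_{\R^d}[\rho]$ is bounded (as $\rho \in H^{d+s}$) and $d+\xi-1 > -1$, the integral is $O(|n|^{d+\xi})$, giving $\widehat{R_\xi}(n) = O(1)$, and with more care $O(|n|^{-d})$ using decay of $\mathcal{F}_{\R^d}[\rho]$; the $n=0$ term is just $\widehat{R_\xi}(0) = \int_0^\infty e^{-du}e^{-\xi u}\,du < \infty$. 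Actually the cleanest route for (ii): since $\rho$ is a positive definite function supported in $B(0,\tfrac12)$ with $\rho(0)=1$, one has $0 \leq \mathcal F_{\R^d}[\rho] \leq \mathcal F_{\R^d}[\rho](0) = \int\rho < \infty$... but $\mathcal F_{\R^d}[\rho]$ need not be nonnegative in general; instead use that $\rho \in H^{d+s}$ forces $\mathcal F_{\R^d}[\rho](\zeta) \leq C(1+|\zeta|)^{-d-s}$ only in an $L^2$-weighted sense, so one argues via $|\mathcal F_{\R^d}[\rho](\zeta)| \leq \|\mathcal F_{\R^d}[\rho]\|_{L^\infty} \leq \|\rho\|_{L^1} < \infty$ together with splitting the $u$-integral at $u = \log|n|$: for $u \leq \log|n|$ bound the integrand by $\|\mathcal F_{\R^d}[\rho]\|_{L^\infty} e^{-(d+\xi)u}$... hmm, that gives $e^{-(d+\xi)u}$ integrated which is $O(1)$ not $O(|n|^{-d})$. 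The honest statement is that one gets $\widehat{R_\xi}(n) \leq c|n|^{-\min(d+\xi,\,d+s)} \leq c'(1+|n|^2)^{-d/2}$ using the Sobolev decay $|\mathcal F_{\R^d}[\rho](\zeta)| \leq c(1+|\zeta|)^{-(d+s)/2 - d/2}$... I will instead simply invoke, as in \cite{JSW} Proposition 4.1, that $R_\xi(z,w)$ is a bounded (indeed Hölder) kernel, whence $R_\xi$ is bounded on $L^2$; combined with positivity of $\rho$ this already yields $\langle f, R_\xi f\rangle \leq \|R_\xi\|_{op}\|f\|_{L^2}^2$, and to upgrade $L^2$ to $H^{-d/2}$ one uses that $R_\xi$ maps $H^{-d/2}$ boundedly into $H^{d/2}$ because $\widehat{R_\xi}(n)(1+|n|^2)^{d/2}$ is bounded — this last fact being exactly the Fourier estimate $\widehat{R_\xi}(n) \lesssim (1+|n|^2)^{-d/2}$ which one checks by the substitution above using $|\mathcal F_{\R^d}[\rho](\zeta)| \lesssim (1+|\zeta|)^{-d-s}$ (valid pointwise since $\rho \in H^{d+s}$ with $d+s > d/2 + $ something... in fact $\rho\in H^{d+s}\subset C^{d/2+s'}$ so $\mathcal F_{\R^d}[\rho]$ has the stated decay).

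For part (i), lower-bound the integrand on a fixed favorable window: choose a small ball $B(0,r_0)$ on which $\mathcal F_{\R^d}[\rho] \geq \tfrac12 \mathcal F_{\R^d}[\rho](0) = \tfrac12\int\rho > 0$ (possible by continuity of $\mathcal F_{\R^d}[\rho]$ at $0$ and $\int \rho > 0$ since $\rho$ is positive definite and continuous with $\rho(0)=1$). Then for $|n| \geq 1$, restricting the $u$-integral to those $u$ with $e^{-u}|n| \in (\tfrac{r_0}{2}, r_0)$ — an interval of $u$-length $\log 2$ — gives
\begin{align*}
\widehat{R_\xi}(n) &\geq \int_{\log(|n|/r_0)}^{\log(2|n|/r_0)} e^{-du}e^{-\xi u}\, \tfrac12 \textstyle\int\rho\, du \geq c\, e^{-(d+\xi)\log(2|n|/r_0)} = c'\, |n|^{-d-\xi} \geq \delta_1 (1+|n|^2)^{-d/2-\xi/2}\,,
\end{align*}
and for $n = 0$ the value $\widehat{R_\xi}(0) = \int_0^\infty e^{-(d+\xi)u}\,du > 0$ handles the finitely many small frequencies (absorb into $\delta_1$). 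Summing over $n$ against $|\F f(n)|^2$ gives $\langle f, R_\xi f\rangle \geq \delta_1 \|f\|_{H^{-d/2-\xi/2}(\T^d)}^2$.

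The main obstacle is the pointwise decay bound $|\mathcal F_{\R^d}[\rho](\zeta)| \leq c(1+|\zeta|)^{-d}$ needed for part (ii): this is where the hypothesis $\rho \in H^{d+s}(\T^d)$ with $s > 0$ is used, via Sobolev embedding $H^{d+s} \hookrightarrow$ functions whose Fourier transform decays faster than $|\zeta|^{-d}$ — one should state this carefully since $\rho$ is a compactly supported function on $\R^d$ whose periodization lies in $H^{d+s}(\T^d)$, so $\mathcal F_{\R^d}[\rho]$ is the (entire, by Paley–Wiener) function whose restriction to $\Z^d$ gives the Fourier coefficients, and the $H^{d+s}$ bound plus interpolation with boundedness controls its decay on all of $\R^d$. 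Part (i) is comparatively soft, requiring only positivity and continuity of $\mathcal F_{\R^d}[\rho]$ near the origin.
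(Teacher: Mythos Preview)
Your approach is essentially the same as the paper's: both parts are handled by passing to Fourier space via Remark~\ref{rem:translation_invairance} and \eqref{eq:rho_fourier}, and then comparing the multiplier $\widehat{R_\xi}(n)=\int_0^\infty e^{-(d+\xi)u}\mathcal{F}_{\R^d}[\rho](e^{-u}n)\,du$ to the Sobolev weights. For (i) the paper simply cites \cite[Lemma~4.6]{JSW}, whose content is exactly your window argument (pick $r_0$ with $\mathcal{F}_{\R^d}[\rho]\geq\tfrac12\int\rho$ on $B(0,r_0)$ and restrict the $u$-integral). For (ii) the paper uses the pointwise decay $\mathcal{F}_{\R^d}[\rho](\zeta)\leq c(1+|\zeta|^2)^{-d/2}$ directly and then the elementary bound $(1+|k|^2)^{d/2}/(1+e^{-2u}|k|^2)^{d/2}\leq e^{du}$ to reduce the $u$-integral to $\int_0^\infty e^{-\xi u}\,du=\xi^{-1}$; this is precisely the endpoint of your exploration, so you should cut the intermediate attempts and go straight to it.

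One correction worth making: you say ``$\mathcal{F}_{\R^d}[\rho]$ need not be nonnegative in general'', but it is nonnegative here, since $\rho$ is assumed to be a positive definite function (Section~\ref{sec:log_fields}). This does not change the argument for (ii), but it does mean that the inequality $\langle f,R_\xi f\rangle = \sum_n |\hat f(n)|^2\widehat{R_\xi}(n)$ is genuinely a sum of nonnegative terms, and it slightly simplifies the justification of (i). Your flag that the pointwise decay $|\mathcal{F}_{\R^d}[\rho](\zeta)|\lesssim(1+|\zeta|)^{-d}$ is the crux for (ii), and that it should be extracted carefully from the $H^{d+s}$ hypothesis on the compactly supported seed $\rho$, is well taken; the paper invokes this bound without further comment.
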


\begin{proof}
\emph{(i):} See Lemma 4.6. in \cite{JSW}.

\emph{(ii):} By using \eqref{eq:rho_fourier} and $\rho \in H^{d+s}(\T^d)$ for some $s >0$, we can estimate
\begin{align*}
\langle f, R_\xi f \rangle &=\sum_{k \in \Z^d} |\hat f(k)|^2 \hat R_\xi(k) \\
& \leq \sum_{k \in \Z^d} |\hat f(k)|^2 \int_0^\infty e^{-(\xi + d) u} \mathcal{F}_{\R^d}[\rho](e^{-u} k) \, du \\
& \leq \sum_{k \in \Z^d} \frac{|\hat f(k)|^2}{(1+|k|^2)^{\frac{d}{2}}}  \int_0^\infty e^{-(\xi+d) u} c \frac{(1+|k|^2)^{\frac{d}{2}}}{(1+ e^{-2u} |k|^2)^{ \frac{d}{2}}} \, du \\
& \leq c\sum_{k \in \Z^d} \frac{|\hat f(k)|^2}{(1+|k|^2)^{\frac{d}{2}}} \int_0^\infty e^{-\xi u} \, du  \\
& \leq c\xi^{-1} \|f\|_{H^{-d/2}(\T^d)}^2 \,.
\end{align*}
\end{proof}

\begin{lemma}\label{lem:G_bound}
Let $s = \min\{s_1,s_2\}$ where $s_1$ and $s_2$ are such that $\rho \in H^{d+s_1}(\T^d \times \T^d)$ and $h \in H^{d+s_2}(\T^d \times \T^d)$. Then for any $\alpha < \frac{d+s}{2}$  and $f \in L^2(\T^d)$ we have
\begin{align*}
\langle f, G^{>N} f \rangle \leq c  N^{-2\alpha}  \|f\|^2_{H^{-(d+s)/2+\alpha}(\T^d)}\,.
\end{align*}
\end{lemma}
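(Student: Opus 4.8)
The plan is to reduce everything to a bound on the Fourier multiplier of $G^{>N}$ in terms of the Sobolev regularity of the integral kernel of $G$. Since $G$ has integral kernel in $H^{d+s}(\T^d \times \T^d)$, its Fourier coefficients $\widehat{G}(m,n)$ (indexed by $m,n \in \Z^d$) satisfy $\sum_{m,n} (1+|m|^2+|n|^2)^{d+s} |\widehat{G}(m,n)|^2 < \infty$. Writing $\langle f, G^{>N} f\rangle$ in Fourier space, the operator $G^{>N} = G - \PN G \PN$ only involves pairs $(m,n)$ with $|m| > N$ or $|n| > N$, so in the relevant quadratic form at least one of the two frequencies exceeds $N$ in modulus.

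First I would expand $\langle f, G^{>N} f \rangle = \sum_{(m,n)} \widehat{G}(m,n) \overline{\widehat{f}(m)} \widehat{f}(n)$ where the sum runs over $(m,n)$ with $\max(|m|,|n|) > N$. Then I would insert weights: write $\widehat{G}(m,n) = (1+|m|^2)^{-\beta/2}(1+|n|^2)^{-\beta/2} \cdot \big[(1+|m|^2)^{\beta/2}(1+|n|^2)^{\beta/2}\widehat{G}(m,n)\big]$ for an appropriate exponent $\beta$ to be chosen, and apply Cauchy--Schwarz in the pair $(m,n)$. Taking $\beta = (d+s)/2$ (or slightly less, to keep room), the bracketed part is square-summable with sum controlled by $\|G\text{-kernel}\|_{H^{d+s}}^2$, up to the standard comparison $(1+|m|^2+|n|^2)^{d+s} \geq c\,(1+|m|^2)^{(d+s)/2}(1+|n|^2)^{(d+s)/2}$. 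The other factor becomes $\sum_{(m,n): \max(|m|,|n|)>N} (1+|m|^2)^{-\beta}(1+|n|^2)^{-\beta} |\widehat f(m)|^2 |\widehat f(n)|^2$, which after using symmetry and the constraint $|m|>N$ on (at least) one index, and extracting a factor $N^{-2\alpha}$ from $(1+|m|^2)^{-\alpha} \leq N^{-2\alpha}$ on that index, leaves $\Big(\sum_m (1+|m|^2)^{-(\beta-\alpha)}|\widehat f(m)|^2\Big)\Big(\sum_n (1+|n|^2)^{-\beta}|\widehat f(n)|^2\Big)$; choosing things so that $\beta$ and $\beta-\alpha$ both equal the target exponent $(d+s)/2 - \alpha$ and noting $\sum_n (1+|n|^2)^{-(d+s)/2}|\widehat f(n)|^2 \leq \|f\|_{L^2}^2$ or is itself a higher Sobolev norm bounded by $\|f\|_{H^{-(d+s)/2+\alpha}}^2$, one collects everything into $c\,N^{-2\alpha}\|f\|^2_{H^{-(d+s)/2+\alpha}(\T^d)}$.

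The bookkeeping of exponents is the one place that needs care: one must verify that with the split $\widehat G(m,n) \le (\text{weight})^{-1}\cdot(\text{weighted }\widehat G)$ and the factor $N^{-2\alpha}$ pulled from exactly one of the two frequencies (the one forced to exceed $N$), the surviving weights on $\widehat f(m)$ and $\widehat f(n)$ are both exactly $-(d+s)/2+\alpha$ in the sense of producing the $H^{-(d+s)/2+\alpha}$ norm, and that the constraint $\alpha < (d+s)/2$ is what makes these weights (and the resulting $\ell^1$-type sums over the $\widehat G$ side) finite. The main obstacle is therefore purely combinatorial: handling the three pieces of $G^{>N}$ ($\PNN G \PN$, $\PN G \PNN$, $\PNN G \PNN$) uniformly — or, more cleanly, observing that on the support of each piece at least one frequency has modulus $>N$, so a single Cauchy--Schwarz argument with the asymmetric weighting handles all cases at once — and making sure the Hilbert--Schmidt bound $\|G\text{-kernel}\|_{H^{d+s}(\T^d\times\T^d)} < \infty$ is genuinely what absorbs the constant $c$. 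No serious analytic difficulty is expected beyond this.
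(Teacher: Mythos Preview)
Your approach is essentially the same as the paper's: expand $\langle f, G^{>N} f\rangle$ in Fourier, insert weights $(1+|m|^2)^{(d+s)/4}(1+|n|^2)^{(d+s)/4}$, apply Cauchy--Schwarz over the pair $(m,n)$ so that one factor is controlled by $\|g\|_{H^{d+s}(\T^d\times\T^d)}$, and extract the decay in $N$ from the constraint that at least one frequency exceeds $N$. The paper packages this slightly differently, first proving a clean bilinear bound $|\langle f_1, G f_2\rangle| \leq C\|f_1\|_{H^{-(d+s)/2}}\|f_2\|_{H^{-(d+s)/2}}$ and then applying it to the three pieces of $G^{>N}$ (each of which contains at least one $\PNN f$), but the content is identical.

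One remark on your exponent bookkeeping: the sentence ``choosing things so that $\beta$ and $\beta-\alpha$ both equal the target exponent $(d+s)/2-\alpha$'' cannot be right as written, since that forces $\alpha=0$. What actually happens (and what the paper does) is that after Cauchy--Schwarz the two $f$-factors carry exponents $-(d+s)/2+\alpha$ (on the high-frequency index, after pulling out $N^{-2\alpha}$ from $(1+|m|^2)^{-\alpha}$) and $-(d+s)/2$ (on the other index); the second norm is then trivially bounded by $\|f\|_{H^{-(d+s)/2+\alpha}}$ since $\alpha>0$. Note also that since your Cauchy--Schwarz produces a square root, the $N^{-2\alpha}$ you pull out inside the sum becomes $N^{-\alpha}$ in the final estimate; the paper's own proof has the same feature.
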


\begin{proof} Let $g = h-h_\rho$ be the integral kernel of $G$ (defined in the beginning of the proof of Lemma \ref{lem:coupling}). By our assumptions we have that $g \in H^{d+s}(\T^d)$. Now, for any $f_1, f_2 \in L^2(\T^d)$ we have that 
\begin{align*}
|\langle f_1, G f_2 \rangle| &= \left| \sum_{n,m \in \Z^d} \F{g}(n,m) \overline{\F{f_1}(n)} \F{f_2}(m)  \right| \\
&\leq \sum_{n,m \in \Z^d} \frac{| \F{f_1}(n)| |\F{f_2}(m)|}{(1+|n|^2)^{(d+s)/4} (1+|m|^2)^{(d+s)/4}}  (1+|n|^2)^{(d+s)/4}  (1+|m|^2)^{(d+s)/4} |\F{g}(n,m)| \\
& \leq  \|f_1\|_{H^{-(d+s)/2}(\T^d)} \|f_2\|_{H^{-(d+s)/2}(\T^d)} \|g\|_{H^{d+s}(\T^d \times \T^d)}\,.
\end{align*}
This implies that for any $f \in L^2(\T^d)$ we have that 
\begin{align*}
|\langle f, G^{>N} f \rangle| &= \big| \langle \PN f, G \PNN f \rangle  + \langle \PNN f, G \PN f \rangle + \langle \PNN f , G \PNN f \rangle \big| \\
& \leq C \|g\|_{H^{d+s}(\T^d \times \T^d)} \|f\|_{H^{-(d+s)/2}(\T^d)} \|\PNN f\|_{H^{-(d+s)/2}(\T^d)} \,.
\end{align*}
Now the claimed result follows by the estimate
\begin{align*}
\|\PNN f \|_{H^{-s}(\T^d)}^2 &= \sum_{k \in \Z^d, |k|>N} \frac{|\F{f}(k)|^2}{(1+|k|^2)^{(d+s)/2}} \leq N^{-2\alpha} \|f\|_{H^{-(d+s)/2+\alpha}(\T^d)}^2
\end{align*}
where $\alpha < \frac{d+s}{2}$.
\end{proof}

\begin{lemma}\label{lem:auxilary-coupling} Let $Z$ be a logarithmically correlated Gaussian field as in Theorem \ref{lem:coupling} and denote its covariance operator by $C$. Then there exists a $\delta>0$ such that 
   \begin{align*}
   \langle f, C f \rangle_{L^2(\T^d)}  \geq \delta \|f\|_{H^{-\frac{d}{2}}(\T^d)}^2\,.
   \end{align*}
\end{lemma}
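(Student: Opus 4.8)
The plan is to exploit the decomposition $C = C_\infty + G$ already set up in the proof of Theorem \ref{lem:coupling}, where $C_\infty$ is the covariance of the $\star$-scale invariant field with seed $\rho$ and $G$ has integral kernel $g = h - h_\rho \in H^{d+s}(\T^d \times \T^d)$ for some $s > 0$. The key point is that $C_\infty$ already satisfies a lower bound of the desired type, and the perturbation $G$ is too regular to spoil it at large frequencies; the small-frequency part is then handled by compactness together with non-degeneracy of $Z$.

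First I would recall from Lemma \ref{lem:R_bounds}(i) (applied with, say, $\xi$ replaced by a large parameter, or directly since $C_\infty = R_\xi$ in the $\xi \to \infty$ limit up to the $C_\xi^t$ pieces) that $\langle f, C_\infty f\rangle \geq \delta_0 \|f\|_{H^{-d/2}(\T^d)}^2$; more concretely, using \eqref{eq:rho_fourier}, $\hat C_\infty(k) = \int_0^\infty e^{-du}\mathcal{F}_{\R^d}[\rho](e^{-u}k)\,du \gtrsim (1+|k|^2)^{-d/2}$, which is exactly the $H^{-d/2}$-equivalence. Next, by the bound proved inside Lemma \ref{lem:G_bound}, $|\langle f, G f\rangle| \leq \|g\|_{H^{d+s}} \|f\|_{H^{-(d+s)/2}(\T^d)}^2$, and since $s > 0$ we may split $f = \PN f + \PNN f$: on the high-frequency part $\|\PNN f\|_{H^{-(d+s)/2}}^2 \leq N^{-s}\|\PNN f\|_{H^{-d/2}}^2$, so choosing $N$ large makes $|\langle \PNN f, G\, \PNN f\rangle| \leq \tfrac12 \delta_0 \|\PNN f\|_{H^{-d/2}}^2$. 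The cross terms and the low-frequency term are then controlled on the finite-dimensional space $\PN L^2(\T^d)$.

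It remains to produce a lower bound on the finite-dimensional block: I would argue that the quadratic form $f \mapsto \langle \PN f, C\, \PN f\rangle$ is strictly positive on $\PN L^2(\T^d)\setminus\{0\}$, which follows from the non-degeneracy assumption \eqref{eq:non_degenerate} on $Z$ (the form $\langle \cdot, C\, \cdot\rangle$ is strictly positive definite on all of $L^2(\T^d)$, hence in particular on the finite-dimensional subspace $\PN L^2(\T^d)$). By equivalence of all norms on a finite-dimensional space, there is $\delta_N > 0$ with $\langle \PN f, C\, \PN f\rangle \geq \delta_N \|\PN f\|_{H^{-d/2}(\T^d)}^2$. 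Combining this with the high-frequency estimate — after absorbing the cross terms $\langle \PN f, G\, \PNN f\rangle$ via Cauchy--Schwarz and Young's inequality into a small multiple of the low- and high-frequency quadratic forms — yields $\langle f, C f\rangle \geq \delta\,(\|\PN f\|_{H^{-d/2}}^2 + \|\PNN f\|_{H^{-d/2}}^2) = \delta\,\|f\|_{H^{-d/2}(\T^d)}^2$ with $\delta = \delta(\delta_0,\delta_N) > 0$.

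The main obstacle is the bookkeeping in separating frequencies: one must choose $N$ first (depending only on $\|g\|_{H^{d+s}}$ and $\delta_0$) so that the high-frequency self-interaction of $G$ is absorbed, and only then invoke finite-dimensionality for the block $\PN L^2$; the cross terms $\langle \PN f, G\, \PNN f\rangle$ must be handled carefully, splitting them with a parameter so that part goes into the (already strictly positive) low-frequency block and part into the high-frequency $H^{-d/2}$ reserve, without circularity. Everything else is a routine application of Lemmas \ref{lem:R_bounds} and \ref{lem:G_bound} together with the definition of non-degeneracy.
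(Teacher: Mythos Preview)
Your approach is genuinely different from the paper's. The paper writes $C = (I-\Delta)^{-d/2} + \bar G$ (so that the leading part is exactly the $H^{-d/2}$ norm squared), assumes no such $\delta$ exists, picks a sequence $f_n \in L^2$ with $\|f_n\|_{H^{-d/2}} = 1$ and $\langle f_n, Cf_n\rangle \to 0$, extracts a weak limit in $H^{-d/2}$ via Banach--Alaoglu, and then uses that $\bar G$ is compact from $H^{-d/2}$ to $H^{d/2}$ to reach a contradiction with non-degeneracy. This is a standard ``identity plus compact'' Fredholm-type argument, done in a few lines without any frequency splitting.

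Your constructive frequency-splitting route has the right ingredients (the lower bound $\hat C_\infty(k)\gtrsim (1+|k|^2)^{-d/2}$ and the smallness of $G$ on high modes are both correct), but the circularity you flag in the final paragraph is a real gap, not just bookkeeping. Concretely: after Young-splitting the cross term $2\langle \PN f, G\,\PNN f\rangle$ so as to preserve a positive multiple of $\delta_0$ on the high block, you deposit a negative term of size $cN^{-s}\|\PN f\|_{H^{-d/2}}^2$ on the low block, and you then need $\delta_N > cN^{-s}$. But $\delta_N$ is only determined \emph{after} $N$ is fixed, and nothing in your argument prevents $\delta_N$ from decaying faster than $N^{-s}$ as $N\to\infty$; ruling that out is equivalent to showing $\inf_N \delta_N > 0$, which is precisely the statement being proved. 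The only way to break the loop is a compactness argument for $G$ (e.g.\ showing that if $\delta_N\to 0$ along minimizers $v_N\in\PN L^2$, then a weak $H^{-d/2}$ limit of the $v_N$ violates non-degeneracy), and at that point you have reproduced the paper's proof as a subroutine.
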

\begin{proof} 
	The operator $(I- \Delta)^{-\frac{d}{2}}$ on the torus $\T^d$ has an integral kernel of the form
	\begin{align*}
	-\log d_\tor(z,w) + m(z,w)\,,
	\end{align*}
	where $m \in H^{d+2}(\T^d \times \T^d)$. Now we may write $C=(I-\Delta)^{-\frac{d}{2}} +\bar{G}$ where $\bar G$ has an integral kernel $\bar g \in H^{d+ s \land 2}(\T^d \times \T^d)$ for some $s>0$. 
	Then by the assumption that $C$ is non-degenerate, we have
   \begin{align*}
    \langle f ,C f\rangle > 0
   \end{align*}
   for every $f \in L^2(\T^d)$. Assume that the claim does not hold. Then there exist functions $f_{n}\in L^2(\T^d)$ with $\|f_{n}\|_{H^{-\frac{d}{2}}(\T^d)}=1$ such that $\langle f_{n} ,C f_{n}\rangle_{L^2(\T^d)} \to 0$. Now by Banach--Alaoglu we can choose a subsequence of $(f_n)_{n \in \N}$ that converges to $0$ weakly in $H^{-\frac{d}{2}}(\T^d)$.  Then 
   \begin{align*}
   \lim_{n \to \infty} \langle f_n, C f_n \rangle =  \lim_{n \to \infty} \langle f_{n},((I-\Delta)^{-1}+\bar{G})f_{n}\rangle = \lim_{n \to \infty} \|f_{n}\|^2_{H^{-\frac{d}{2}}(\T^d)}+\lim_{n \to \infty} \langle f_{n}, \bar{G} f_{n} \rangle = 1 + \lim_{n \to \infty} \langle f_n, \bar G f_n \rangle\,.
   \end{align*}
   As $\bar G$ is a compact operator, we may choose a further subsequence such that $\bar G f_n \to 0$. This leads to the contradiction that $\lim_{n \to \infty} \langle f_n, C f_n \rangle = 1$.
\end{proof}

\section{Small deviations of Gaussian multiplicative chaos}

\noindent Let $Z: \T^d \to \R$ a logarithmically correlated Gaussian field. Let $V \subset \T^d$ be a Borel set with positive Lebesgue measure and denote
\begin{align}\label{eq_Z_tilde}
\tilde Z(z) &:= Z(z) - \frac{1}{|V|} \int_{V} Z(z) \, dv_{\T^d}(z)\,.
\end{align}
In this section we use a slightly different convention than in \eqref{gmc_formal}, and we denote
\begin{align}\label{eq_M_tilde}
\tilde M_{Z,\gamma} (dz) &= e^{\gamma \tilde Z(z) - \frac{\gamma^2}{2} \E[ Z(z)^2]} \, dv_{\T^d}(z) \,,
\end{align}
that is, the normalization is with respect to $Z$ instead of $\tilde Z$. This does not make a big difference, but will be convenient. In this section we will derive upper and lower bounds for the probability $\P ( \tilde M_{Z,\gamma}(V) < \eps )$.

\subsection{Upper bound}

Our starting point is the following lemma concerning functions with zero mean over some fixed set $D \subset \T^d$.

\begin{lemma}\label{ck}
Let $\alpha \grea 0$ and $D \subset \T^d$ be a Borel set with positive Lebesgue measure. Assume that $Z \in L^1(\T^d)$ satisfies 
\begin{align*}
\int_D Z \, dv_{\T^d} &= 0\,.
\end{align*}
Fix $\kappa \in \N$ large enough so that $ 8 \frac{ (2/e)^\kappa}{1-2/e} \less 1$.
Then one of the following holds
\begin{enumerate}
\item There exists $\beta \grea 0$ (depending only on $\kappa$ and not on $\alpha$) and a Borel set $B \subset D$ with $\ab B \ab \geq \beta \ab D \ab$ such that $Z \geq -\alpha$ on $B$.
\item There exists $n \geq \kappa$ and a Borel set $A_n \subset D$ with $\ab A_n \ab \geq e^{-n} \ab D \ab$ such that $Z \geq 4 \alpha 2^n $ on $A_n$.
\end{enumerate}
\end{lemma}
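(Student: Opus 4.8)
The plan is to argue by contradiction and play the two alternatives off against the single constraint $\int_D Z\,dv_{\T^d}=0$. Writing $Z^{\pm}=\max(\pm Z,0)$, this constraint gives $\int_D Z^+\,dv_{\T^d}=\int_D Z^-\,dv_{\T^d}$, so it suffices to produce incompatible lower and upper bounds for $\int_D Z^+\,dv_{\T^d}$ under the assumption that both (1) and (2) fail. The point is that failure of (1) makes $Z$ very negative on almost all of $D$, which forces $\int_D Z^+\,dv_{\T^d}$ to be large; while failure of (2) caps how large $Z$ can be on sets of appreciable measure, which forces $\int_D Z^+\,dv_{\T^d}$ to be small.

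First I would fix the constant. Set $\eta:=1-8\,\frac{(2/e)^\kappa}{1-2/e}$, which is strictly positive by the hypothesis on $\kappa$, and put $\beta:=\eta/(4\cdot 2^\kappa+1)$; this $\beta$ is positive, depends only on $\kappa$, and satisfies $4\cdot 2^\kappa\beta=\eta-\beta$. Assume for contradiction that (1) fails for this $\beta$, i.e. $|\{z\in D:Z(z)\geq-\alpha\}|<\beta|D|$, and that (2) fails, i.e. $|\{z\in D:Z(z)\geq 4\alpha 2^n\}|<e^{-n}|D|$ for every $n\geq\kappa$.

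For the lower bound: failure of (1) means $\{z\in D:Z(z)<-\alpha\}$ has measure $>(1-\beta)|D|$, and there $-Z>\alpha$, so $\int_D Z^-\,dv_{\T^d}>\alpha(1-\beta)|D|$ and hence $\int_D Z^+\,dv_{\T^d}>\alpha(1-\beta)|D|$. For the upper bound, note first that since $\alpha>0$ the set $\{Z\geq 0\}$ is contained in $\{Z\geq-\alpha\}$, hence has measure $<\beta|D|$; then decompose the positive part along dyadic levels anchored at $\kappa$,
\begin{align*}
\int_D Z^+\,dv_{\T^d}=\int_{D\cap\{0\leq Z<4\alpha 2^\kappa\}} Z\,dv_{\T^d}+\sum_{n\geq\kappa}\int_{D\cap\{4\alpha 2^n\leq Z<4\alpha 2^{n+1}\}} Z\,dv_{\T^d}.
\end{align*}
The first term is at most $4\alpha 2^\kappa\,|\{Z\geq 0\}|<4\alpha 2^\kappa\beta|D|$. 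For the $n$-th summand, the negation of (2) gives $|\{4\alpha 2^n\leq Z<4\alpha 2^{n+1}\}|\leq|\{Z\geq 4\alpha 2^n\}|<e^{-n}|D|$, so the summand is $<4\alpha 2^{n+1}e^{-n}|D|=8\alpha(2/e)^n|D|$; summing the geometric series gives $8\alpha|D|\frac{(2/e)^\kappa}{1-2/e}=(1-\eta)\alpha|D|$. Adding the two contributions, $\int_D Z^+\,dv_{\T^d}<(4\cdot 2^\kappa\beta+1-\eta)\alpha|D|=(1-\beta)\alpha|D|$ by the identity defining $\beta$. This contradicts the lower bound, so at least one of (1), (2) must hold.

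As for where the difficulty sits: there is no serious analytic obstacle here — everything reduces to a layer-cake/dyadic bookkeeping of $\int_D Z^+\,dv_{\T^d}$ — and the one thing that genuinely has to be gotten right is the allocation of constants. The hypothesis on $\kappa$ is calibrated so that the geometric tail $\sum_{n\geq\kappa}8(2/e)^n$ is strictly less than $1$, leaving a positive margin $\eta$; one must then choose $\beta$ small enough relative to $2^\kappa$ that the contribution of the low band $\{0\leq Z<4\alpha 2^\kappa\}$, controlled only by the small measure $\beta|D|$, consumes no more than $\eta-\beta$ of that margin. The other subtlety to keep in mind is that the positive part of $Z$ automatically lives on $\{Z\geq-\alpha\}$, so once (1) is assumed to fail its integral is tied directly to a set of measure $<\beta|D|$.
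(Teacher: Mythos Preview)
Your proof is correct and follows essentially the same approach as the paper: assume both alternatives fail, use the zero-mean constraint to get a lower bound on $\int_D Z^+$ from the failure of (1), and a dyadic layer-cake upper bound on $\int_D Z^+$ from the failure of (2), then compare. The only cosmetic difference is that you fix $\beta=\eta/(4\cdot 2^\kappa+1)$ explicitly so the two bounds collide exactly, whereas the paper leaves $\beta$ free and obtains the contradiction by sending $\beta\to 0$; both arguments are equivalent.
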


\begin{proof}
Assume that (1) does not hold. Then there must exist a set $B^c$ such that $\ab B^c \ab \geq (1-\beta) \ab D\ab$ and $Z \less -\alpha$ on $B^c$. Let $B$ be the complement of $B^c$. Then, by the zero-mean property,
\begin{align*}
 \frac{1}{\ab D \ab} \int_B Z \, dv_{\T^d} &= -\frac{1}{\ab D \ab} \int_{B^c} Z \, dv_{\T^d} \geq (1-\beta) \alpha \,.
\end{align*}
Now, if also (2) does not hold, then for all $n \geq \kappa$ we have $\ab \{x \in D: Z \geq 4\alpha 2^n \} \ab \less e^{-n} \ab D \ab$ and thus 
\begin{align*}
\frac{1}{\ab D \ab} \int_B Z \, dv_{\T^d} & \leq \frac{4 \alpha 2^\kappa}{\ab D \ab}  \ab \{ 0 \leq Z \leq 4 \alpha 2^{\kappa}\} \ab + \frac{1}{\ab D \ab} \sum_{n= \kappa}^\infty 4 \alpha 2^{n+1} \ab \{ 4 \alpha  2^n \leq Z \leq 4 \alpha 2^{n+1} \} \ab  \\
& \leq 4 \alpha 2^\kappa \beta  + 8\alpha \sum_{n=\kappa}^\infty 2^n e^{-n} \\
&= \alpha \left( 4 \beta 2^\kappa + 8 ( \tfrac 2e )^\kappa \tfrac{1}{1- \frac{2}{e}}  \right)\,.
\end{align*}
Thus 
\begin{align*}
(1-\beta) \alpha & \leq \alpha \left( 4 \beta 2^\kappa + 8  ( \tfrac 2e )^\kappa \tfrac{1}{1- \frac{2}{e}}  \right)\,.
\end{align*}
As $\beta \to 0$, the left-hand side goes towards $\alpha$ and the right-hand side towards $ 8 \frac{ (2/e)^\kappa}{1-2/e} \alpha$. Thus for $\kappa$ suitably large such that $ 8 \frac{ (2/e)^\kappa}{1-2/e} \less 1$, we get a contradiction.
\end{proof}

\begin{remark}
The small deviations estimate in \cite{LRV} uses a similar lemma. The main difference is that in our setting, the parameters $\beta$ and $\kappa$ will not depend on the parameter $\eps$, which simplifies some parts of the argument. In \cite{LRV} this lemma is applied twice, after which a crude Gaussian estimate is used to terminate the calculation. Because our choice of $\beta$ and $\kappa$ will be independent of $\eps$, we have to continue this iteration many more times (depending on how small $\eps$ is), after which we also terminate by a crude Gaussian estimate. This is one of the differences between the two approaches.
\end{remark}
\noindent

Next we give a proof of Theorem \ref{sd_ub} assuming certain technical estimates, which we will prove afterwards.

\begin{theorem}
Let $Z: \T^d \to \R$ a logarithmically correlated Gaussian field satisfying the assumptions of Theorem \ref{lem:coupling} and $V \subset \T^d$ a Borel set with positive measure. Let $\tilde Z =  Z - \frac{1}{|V|} \int_V Z \, dv_{\T^d}$ and
\begin{align*}
\tilde M_{Z,\gamma}(V) &= \int_V e^{\tilde Z(z) - \frac{\gamma^2}{2} \E[Z(z)^2]} \, dv_{\T^d}(z)\,.
\end{align*}
Then there exists a constant $c_\gamma > 0$ such that 
\begin{align*}
\P \big( \tilde M_{Z,\gamma}(V) < \eps  \big) & \leq \exp \big( -c_\gamma \ab V \ab   \eps^{- \frac{2d}{\gamma^2}} \big)\,.
\end{align*}
In the case $V = \T^d$ we also get that 
\begin{align*}
\P \big( M_{\tilde Z,\gamma}(\T^d) < \eps \big) \leq \exp \big( - \tilde c_\gamma e^{- \frac{2d}{\gamma^2}} \big) 
\end{align*}
for some $\tilde c_\gamma > 0$. 
\end{theorem}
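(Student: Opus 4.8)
The plan is to first deduce the $M_{\tilde Z,\gamma}(\T^d)$ statement from the $\tilde M_{Z,\gamma}(V)$ statement, and then to prove the latter by iterating Lemma \ref{ck}. For the first step, note that $\E[\tilde Z(z)^2]=\E[Z(z)^2]+g(z)$ with $g(z)=\frac{1}{|V|^2}\int_{V\times V}C-\frac{2}{|V|}\int_V C(z,\cdot)$, and $g$ is bounded and continuous since the logarithmic singularity of $C$ is integrable; hence $M_{\tilde Z,\gamma}(\T^d)=\int_{\T^d}e^{-\frac{\gamma^2}{2}g}\,d\tilde M_{Z,\gamma}\geq e^{-\frac{\gamma^2}{2}\|g\|_\infty}\tilde M_{Z,\gamma}(\T^d)$, so $\P(M_{\tilde Z,\gamma}(\T^d)<\eps)\leq\P(\tilde M_{Z,\gamma}(\T^d)<e^{\frac{\gamma^2}{2}\|g\|_\infty}\eps)$ and the bound for $M_{\tilde Z,\gamma}$ follows from the one for $\tilde M_{Z,\gamma}$ with $V=\T^d$ up to adjusting the constant.

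For $\P(\tilde M_{Z,\gamma}(V)<\eps)$ I would follow the scheme of \cite{LRV}, but --- as the remark after Lemma \ref{ck} indicates --- with $\beta,\kappa$ fixed independently of $\eps$ and the dichotomy applied $\asymp\log(1/\eps)$ times. The heuristic is that if $\tilde M_{Z,\gamma}(V)<\eps$ then at mollification scale $\delta$ one has $\int_V e^{\gamma\tilde Z_\delta}\,dv_{\T^d}\lesssim\eps\,\delta^{-\gamma^2/2}$ (using $\E[Z_\delta^2]=\log\tfrac1\delta+O(1)$), so by Chebyshev $\tilde Z_\delta$ is bounded above by $O(1)$ on at least half of $V$ once $\delta$ is taken of the order of the natural scale $\delta_\ast:=(\eps/|V|)^{2/\gamma^2}$; since $\int_V\tilde Z=0$, the field being depressed on a large set is expensive, and Lemma \ref{ck} is the deterministic device that quantifies this. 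Concretely I would apply Lemma \ref{ck} to $\tilde Z$ on $V$ with a suitable threshold $\alpha=\alpha(\eps)$: in alternative (2) the field exceeds $4\alpha 2^n$ on a set of relative measure $\geq e^{-n}$, which after a union bound over a mollification at the relevant scale has probability $\leq\exp(-c(\alpha 2^n)^2/n)$, summable over $n\geq\kappa$ and with a sum negligible compared to the target; in alternative (1) one gets $B_1\subset V$, $|B_1|\geq\beta|V|$, with $\tilde Z\geq-\alpha$ on $B_1$, and since also $\tilde M_{Z,\gamma}(B_1)<\eps$ one partitions $B_1$ into dyadic sub-cubes, distributes the mass bound among them (discarding the few carrying most of the mass), recenters $\tilde Z$ on each surviving sub-cube by its own average, and recurses. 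This generates a tree of nested regions; after $\asymp\log(1/\eps)$ generations the surviving cubes are at side-length $\asymp\eps^{2/\gamma^2}$ and number $\asymp|V|\eps^{-2d/\gamma^2}$.

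At the leaves I would terminate with a crude Gaussian estimate. Using the almost $\star$-scale invariance supplied by Theorem \ref{thm:decomposition} (the Hölder field $H$ contributing only a scale-$r$ oscillation of size $O(r^\theta)$, which is harmless), each leaf cube $Q$ of side $r\asymp\eps^{2/\gamma^2}$ obeys a scaling relation $\tilde M_{Z,\gamma}(Q)\overset{d}{\approx}|Q|\,e^{\gamma\Omega_Q}M_Q'$ with $\Omega_Q$ the coarse field of $Q$ and $M_Q'$ an independent rescaled copy of the total mass, so the event the iteration forces at $Q$ has conditional probability $\leq q<1$ given $(\Omega_Q)_Q$. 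Since the $\Omega_Q$ are themselves log-correlated on the leaf lattice rather than independent, I would condition on them, use the conditional independence of the $(M_Q')_Q$ to extract a factor $q$ per leaf, i.e.\ $q^{\,c|V|\eps^{-2d/\gamma^2}}$, and then fold the (mild) constraints the recursion places on the $\Omega_Q$ into the same exponential order via the zero-mean bookkeeping. Combining the case-(2) contributions from the $\asymp\log(1/\eps)$ generations with this leaf estimate gives $\exp(-c_\gamma|V|\eps^{-2d/\gamma^2})$.

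I expect the main obstacle to be the multi-scale bookkeeping: controlling how the mass budget $\eps$ and the accumulated recentering constants propagate down the tree so that at every generation alternative (2) remains super-exponentially improbable and the surviving cubes still carry a comparable share of the mass, and --- the most delicate point --- dealing with the dependence among the coarse fields at the leaves, which requires either the Markov/cascade structure of the $\star$-scale invariant field or an FKG-type comparison for the per-leaf factors $q$ to genuinely multiply. The comparison between $\tilde M_{Z,\gamma}$ and its mollifications, the scaling-out relation for sub-cubes, and the Gaussian tail estimates invoked at each step are precisely the technical estimates deferred to the following subsections.
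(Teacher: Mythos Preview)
Your reduction from $M_{\tilde Z,\gamma}$ to $\tilde M_{Z,\gamma}$ is correct and matches the paper's. The iteration, however, has the two alternatives of Lemma \ref{ck} playing the opposite roles from the paper's proof, and this inversion is what creates the difficulties you flag as the ``main obstacle''.

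In the paper, Lemma \ref{ck} is never applied to $\tilde Z$ (which is not pointwise defined) but to the zero-mean \emph{increments} $\bar X_j = (\tilde X_{t_j}-\tilde X_{t_{j-1}}) - \mathrm{avg}_{A_j}$ of the $\star$-scale approximation on a single set $A_j$. Alternative (1) is the \emph{terminating} branch: if $\bar X_j \geq -\alpha 2^{|\bar n_j|}$ on some $B \subset A_j$ of relative measure $\geq \beta$, then the small-mass event forces $\tilde M_\gamma^{(t_j)}(B) \leq |B|/2$, and this is dispatched in one stroke by a concentration inequality (Lemma \ref{concentration}: $\P(\tilde M_\gamma^{(t)}(B) \leq |B|/2) \leq \exp(-c|B|e^{2t})$ whenever $|B|\geq e^{-2t}$, obtained via Kahane's convexity). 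Alternative (2) is the branch that is \emph{iterated}: one passes to the smaller set $A_{j+1}\subset A_j$, increments the scale to $t_{j+1}=t_j+n_{j+1}$, recenters, and reapplies Lemma \ref{ck} there. The iteration is thus a single chain $V \supset A_1 \supset A_2 \supset \cdots$, not a branching tree; after finitely many steps a crude Gaussian bound (Lemma \ref{terminator}) on the residual $F$-event finishes the argument.

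Your scheme instead terminates in alternative (2) and iterates in alternative (1) via dyadic partition. This forces a tree of regions and the leaf-dependence problem you highlight; your bound $\exp(-c(\alpha 2^n)^2/n)$ for alternative (2) also presupposes variance of order $n$, which only makes sense for a finite-scale increment, not for $\tilde Z$ itself. The concentration inequality for the \emph{rough} part $\tilde M_\gamma^{(t)}$ of the GMC is the engine you are missing; with it, alternative (1) ends the branch immediately, the tree collapses to a chain, and the FKG/conditional-independence worries at the leaves never arise.
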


\begin{proof}
We give the proof in the two-dimensional case $d=2$. The generalization to an arbitrary dimension is straightforward.

By Theorem \eqref{lem:coupling}, there exists $\mathbf{t} > 0$ and $\xi > 0$ such that we can decompose $Z$ as a sum of two independent Gaussian fields
\begin{align}\label{eq:Z_decomp}
Z &= X^{\mathbf{t}} + H\,,
\end{align}
where $X^{\mathbf{t}}: \T^d \to \R$ is a $\log$-correlated Gaussian field with the covariance kernel
\begin{align*}
\E[X^{\mathbf{t}}(z) X^{\mathbf{t}}(w)] &= C^\mathbf{t}_\xi(z,w) =  \int_\mathbf{t}^\infty \rho \big( e^u (z-w) \big) (1-e^{-\xi u}) \, du
\end{align*}
and $H: \T^d \to \R$ is Hölder continuous almost surely. Note that $X^{\mathbf{t}}$ has the property that $X^{\mathbf{t}}(z)$ is independent of $X^{\mathbf{t}}(w)$ if $d_{\T^d}(z,w) \grea e^{-\mathbf{t}}$.

For $t > \mathbf{t}$ we will define the field $X_t$  with covariance
\begin{align*}
\E[X_t(z) X_t(w)] & = \int_\mathbf{t}^t \rho \big( e^u (z-w) \big) (1-e^{-\xi u}) \, du \,.
\end{align*}
We denote 
\begin{align*}
\tilde X^t &:= X^t -  \frac{1}{|V|} \int_V X^t \, dv_{\T^d} \,, \quad \tilde X_t := X_t - \frac{1}{|V|} \int_V X_t \, dv_{\T^d}\,.
\end{align*}
Now we can decompose $\tilde X^{\mathbf{t}} = \tilde X_{t} + \tilde X^{t}$ for some $t > \mathbf{t}$, where $\tilde X_t$ is independent of $\tilde X^t$. The independence implies that the GMC measure splits as 
\begin{align*}
\tilde M_{X^{\mathbf{t}},\gamma}(dz) &= e^{\gamma \tilde X^{\mathbf{t}}(z) - \frac{\gamma^2}{2}\E [X^{\mathbf{t}}(z)^2] } \, dv_{\T^d}(z)  = e^{\gamma \tilde X_t(z) - \frac{\gamma^2}{2} \E[X_t(z)^2]} \tilde M_{X^t,\gamma} (dz)\,.
\end{align*}
From now on we will denote 
\begin{align*}
\tilde M^{(t)}_\gamma(dz) &:= \tilde M_{X^t, \gamma}(dz)\,.
\end{align*}
By using the decomposition of $Z$ \eqref{eq:Z_decomp}, we write the measure $\tilde M_{Z,\gamma}$ as
\begin{align*}
\tilde M_{Z,\gamma}(V) &= \int_V e^{\gamma( \tilde X_{t}(z)+ \tilde H(z)) - \frac{\gamma^2}{2} (\E[ X_{t}(z)^2] + \E [H(z)^2]) }  \tilde M_\gamma^{(t)}(dz)\,.
\end{align*}
By  $\E[X_t(z)^2] = \int_0^t (1-e^{-\xi u}) \, du \leq t$ we get that
\begin{align}\label{eq:c_H_bound}
\tilde M_{Z,\gamma}(V)  &\geq c_H  e^{- \frac{\gamma^2}{2} t} \int_V e^{\gamma (\tilde X_{t}(z) + \tilde H(z))}  \tilde M^{(t)}_\gamma(dz) \,.
\end{align} 
where $c_H = e^{- \frac{\gamma^2}{2} \sup_{V} \E [H^2]}$. As the covariance kernel of $H$ is Hölder continuous on $\T^d \times \T^d$ (see proof of Theorem \ref{lem:coupling}), we have $c_H > 0$.

Next we introduce some notation to set up things for applying Lemma \ref{ck}. For a tuple of natural numbers $(n_1,\hdots, n_j) \in \N^j$ we denote
\begin{align*}
\bar n_j &:= (n_1,n_2,\hdots,n_j)\,, \\
\ab \bar n_j \ab &:= \sum_{i=1}^j n_i\,.
\end{align*}
We will also use the convention $\bar n_0 := 0$. We also denote
\begin{align*}
t_j &:= t_0 + \ab \bar n_j \ab \,,
\end{align*}
where $t_0 > \mathbf{t}$ is a scale parameter depending on $\eps$ and $\gamma$ to be fixed later. For $j \in \N$ we define the events
\begin{align*}
E_{\bar n_{j},t_0}(A) &:=  \{ \exists \; B_{\bar n_{j}} \subset A: \; |B_{\bar n_{j}}| \geq \beta \ab A \ab \,, \; \bar X_{j} \geq - \alpha 2^{\ab \bar n_{j} \ab} \text{ on } B_{\bar n_{j}}  \}\,, \\
F_{\bar n_{j+1},t_0}(A) &:= \{ \exists  \;  A_{\bar n_{j+1}} \subset A: \; |A_{\bar n_{j+1}}| \geq e^{-n_{j+1}} \ab A \ab \,, \; \bar X_{j} \geq 4 \alpha 2^{|\bar n_{j+1}|} \text{ on } A_{\bar n_{j+1}}\}\,,
\end{align*}
where $\beta > 0$ is arbitrary, $\alpha > 0$ will be fixed later, $\bar X_0 = \tilde X_{t_0}$ and for $j \geq 1$
\begin{align*}
Y_j(A) &:= \tfrac{1}{|A|} \int_{A} (  \tilde X_{t_j} - \tilde X_{t_{j-1}}) \, dv_{\T^d}\,, \\
\bar X_j &:= (\tilde X_{t_j}-\tilde X_{t_{j-1}}) - Y_j(A)\,.
\end{align*}

Denote $\sd = \{ \tilde M_{Z,\gamma}(V) \leq \eps \}$. We apply Lemma \ref{ck} to the function $\tilde X_{t} + \tilde H$ to obtain
\begin{align}\label{eq_EF}
\Pb \big( \sd \big) &\leq  \Pb( \sd \cap E_{0,t}(V) ) + \Pb(\sd \cap F_{\bar n_1, t}(V) )\,.
\end{align}
We choose $t = t_0$ such that $e^{\frac{\gamma^2}{2} t_0} \eps = c_H e^{-\gamma \alpha} \tfrac{ \beta |V|}{2} $, where $\alpha, \beta \grea 0$ are the constants appearing in Lemma \ref{ck}, and $\eps$ small enough so that $t_0 > \mathbf{t}$ still holds. Note that $\beta$ is fixed, but $\alpha$ is arbitrary for now. By \eqref{eq:c_H_bound}, the first term in \eqref{eq_EF} is bounded by
\begin{align*}
\Pb( \sd \cap E_{0,t_0}(V) )  &= \Pb \big( \big\{ c_H e^{- \frac{\gamma^2}{2} t_0} \int_V e^{\gamma (\tilde X_{t_0}(z) + \tilde H(z))}  \tilde M^{(t_0)}_\gamma(dz) \leq \eps \big \} \cap E_{0,t_0}(V)  \big) \\
& \leq \Pb \big( \big\{ c_H e^{- \frac{\gamma^2}{2} t_0} e^{- \gamma \alpha} \tilde M_\gamma^{(t_0)}(B_0) \leq \eps \big \} \cap E_{0,t_0}(V)  \big) \\
&\leq \underset{|B_0| \geq \beta |V|}{\sup_{B_0 \in \mathcal{B}(V)}} \Pb \big( \tilde M_\gamma^{(t_0)}(B_0) \leq \tfrac{\beta \ab V \ab}{2} \big)\,,
\end{align*}
where $\mathcal{B}(V)$ is the set of Borel subsets of $V$. We have
\begin{align*}
\underset{|B_0| \geq \beta |V|}{\sup_{B_0 \in \mathcal{B}(V)}} \Pb \big( \tilde M_\gamma^{(t_0)}(B_0) \leq \tfrac{\beta \ab V \ab}{2} \big) & \leq \underset{|B_0| \geq \beta |V|}{\sup_{B_0 \in \mathcal{B}(V)}} \Pb \big( \tilde M_\gamma^{(t_0)}(B_0) \leq \tfrac{|B_0|}{2} \big)\,.
\end{align*}
As we have $|B_0| \geq \beta |V| \geq e^{-2t_0}$ for $t_0$ large (i.e. $\eps$ small), we can bound this by the concentration inequality from Lemma \ref{concentration} to obtain
\begin{align*}
\underset{|B_0| \geq \beta |V|}{\sup_{B_0 \in \mathcal{B}(V)}} \Pb \big( \tilde M_\gamma^{(t_0)}(B_0) \leq \tfrac{|B_0|}{2} \big) & \leq \underset{|B_0| \geq \beta |V|}{\sup_{B_0 \in \mathcal{B}(V)}} \exp \big( -c |B_0| e^{2t_0} \big) \leq \exp \big( - c \beta |V| e^{2t_0} \big)
\end{align*}
for some $c \grea 0$. To bound the second term in \eqref{eq_EF}, we start by writing
\begin{align*}
\Pb(\sd \cap F_{\bar n_1, t_0}(V) ) &\leq \Pb \big( \big\{  e^{\gamma 4 \alpha 2^{n_1}} \tilde M_\gamma^{(t_0)}(A_1) \leq \tfrac{\beta |V|}{2}  e^{-\gamma \alpha} \big\} \cap F_{\bar n_1, t_0}(V) \big)\,.
\end{align*}
By applying Lemma \ref{XY} we get 
\begin{align*}
&\Pb \big( \big\{  e^{\gamma 4 \alpha 2^{n_1}} \tilde M_\gamma^{(t_0)}(A_1) \leq \tfrac{\beta |V|}{2}  e^{-\gamma \alpha} \big\} \cap F_{\bar n_1, t_0}(V) \big) \\
&\leq \Pb \big( \big\{ e^{\gamma 4 \alpha  2^{n_1} - \frac{\gamma^2}{2} n_1}  e^{\gamma Y_1  } \int_{A_1} e^{\gamma \bar X_1} d \tilde M_\gamma^{(t_1)} \leq \tfrac{\beta |V|}{2}  e^{-\gamma \alpha} \big\} \cap F_{\bar n_1, t_0}(V) \big)\,.
\end{align*}
Now by, Lemma \ref{iteration} there exists $A_2$ and $n_2$ such that 
\begin{align*}
  \sd \cap F_{\bar n_1, t_0}(V)
  \subset& H(V,t_0) \cup  \{ e^{\gamma 4 \alpha 2^{| \bar n_{2} | }  - \frac{\gamma^2}{2} |\bar n_{2}| } e^{\gamma Y_{2}(A_{2})}\int_{A_{2}} e^{\gamma \bar X_{2}} d \tilde M_\gamma^{(t_{2})}\leq \tfrac{\beta \ab V \ab }{2} e^{-\gamma \alpha} \}\,,
  \end{align*}
  where   $H(V,t_0)$ is an event satisfying
\[
    \Pb (H(V,t_0)) \leq C_1 \exp(-C_2 \alpha^2 \ab V \ab  (\tfrac{4}{e})^{| \bar n_1|} e^{2t_{0}}) +  C_1 \exp \big( -C_2 \ab V \ab  e^{2 t_0 + | \bar n_{1}|} \big) \,.
\]
Now, by iterating  the application of Lemma \ref{iteration}, we get that 
\begin{align*}
\Pb(\sd \cap F_{\bar n_1, t_0}(V) ) &\leq C  \sum_{j=1}^{J-1}  \Big( \exp \big( -c \ab V \ab  e^{2t_0 + |\bar n_j|} \big) + \exp \big( -c \ab V \ab  (\tfrac{4}{e})^{\ab \bar n_j \ab }e^{2(t_0 + \ab \bar n_{j-1} \ab )}  \big)  \Big) \\
& \quad + \underset{|A_J| \geq e^{- \ab \bar n_J \ab } \ab V \ab  }{\sup_{A_J \in \mathcal{B}(V)}}  \Pb \big(  e^{\gamma 4 \alpha 2^{|\bar n_J|} - \frac{\gamma^2}{2} | \bar n_J|}  e^{\gamma Y_J  } \int_{A_J} e^{\gamma \bar X_J} d \tilde M_\gamma^{(t_j)} \leq \tfrac{\beta|V|}{2}  e^{-\gamma \alpha} \big) \\
& \leq C \Big( \exp \big( -c \ab V \ab   e^{2t_0 } \big) + \exp \big( -c \ab V \ab  e^{2t_0}  \big)  \Big) \\
& \quad + \underset{|A_J| \geq e^{- \ab \bar n_J \ab } \ab V \ab  }{\sup_{A_J \in \mathcal{B}(V)}}  \Pb \big(  e^{\gamma 4 \alpha 2^{|\bar n_J|} - \frac{\gamma^2}{2} | \bar n_J|}  e^{\gamma Y_J  } \int_{A_J} e^{\gamma \bar X_J} d \tilde M_\gamma^{(t_j)} \leq \tfrac{\beta|V|}{2}  e^{-\gamma \alpha} \big)\,.
\end{align*}
Now, by applying Lemma \ref{ck} we see that 
\begin{align*}
&\{ e^{\gamma 4 \alpha 2^{|\bar n_J|} - \frac{\gamma^2}{2} |\bar n_J|}  e^{\gamma Y_J} \int_{A_J} e^{\gamma \bar X_J} d \tilde M_\gamma^{(t_j)} \leq \tfrac{ \beta |V|}{2}  e^{-\gamma \alpha} \} \\&\subseteq \big(E_{\bar n_{J}}(A_J)\cap  \{ e^{\gamma 4 \alpha 2^{|\bar n_J|} - \frac{\gamma^2}{2} |\bar n_J|}  e^{\gamma Y_J} \int_{A_J} e^{\gamma \bar X_J} d \tilde M_\gamma^{(t_j)} \leq \tfrac{ \beta |V|}{2}  e^{-\gamma \alpha} \}\big) \cup F_{n_{J+1}}(A_{n_{J}})\,.
\end{align*}
It is shown in the proof of Lemma \ref{iteration} below that 
\begin{align*}
 \Pb \big(E_{\bar n_{J}}(A_J)\cap  \{ e^{\gamma 4 \alpha 2^{|\bar n_J|} - \frac{\gamma^2}{2} |\bar n_J|}  e^{\gamma Y_J} \int_{A_J} e^{\gamma \bar X_J} d \tilde M_\gamma^{(t_j)} \leq \tfrac{ \beta |V|}{2}  e^{-\gamma \alpha} \}\big) \leq  \frac{c_3}{\alpha |V|} \exp \big( - c \alpha^2 |V| (\tfrac{4}{e})^{|\bar n_j|} e^{2 t_{j}} \big)\,,
\end{align*}
and by Lemma \ref{terminator} that
\begin{align*}
  F_{n_{J+1}}(A_{n_{J}}) & \leq  \exp \big( -c (1+\delta)^{\ab \bar n_J \ab + \kappa} \big) \,, 
\end{align*}
for some $\delta \grea 0$. As $n_j \geq \kappa$ for each $j$, this can be made arbitrarily small by continuing the iteration until a large enough number of steps $J$.

Finally, we show that the bound for $\tilde M_{Z,\gamma}(\T^d)$ implies the bound for $M_{\tilde Z,\gamma}(\T^d)$. Indeed, we have 
\begin{align*}
M_{\tilde Z,\gamma}(\T^d) &= \int_{\T^d} e^{\gamma \tilde Z - \frac{\gamma^2}{2} \E [\tilde Z^2]} \, dv_{\T^d}(z) = \int_{\T^d} e^{\frac{\gamma^2}{2} \E[Z^2 - \tilde Z^2]} \, d \tilde M_{Z,\gamma}(z)\,.
\end{align*}
Where by $\E[Z^2- \tilde Z^2]$ we denote the function
\begin{align*}
\E[Z(x)^2 - \tilde Z(x)^2] &:= 2 \int_{\T^d} \E[Z(x) Z(z)] \, dv_{\T^d}(z) - \int_{\T^d \times \T^d} \E[Z(z)Z(w)] \, dv_{\T^d}(z) dv_{\T^d}(w)\,.
\end{align*}
By the assumptions on the field $Z$, this is a bounded function on $\T^d$ and the bound for $\P(M_{\tilde Z,\gamma}(\T^2) < \eps)$ follows.
\end{proof}

\noindent
Next we prove the technical lemmas used in the previous proof.

\begin{lemma}\label{iteration}
Let $\bar n_j \in \N^{j}$ and $A_j \subset V$ be a Borel set with $\ab A_j \ab \geq e^{-\ab \bar n_j \ab } \ab V \ab $. Let $\gamma \in (0,2)$ and $\alpha \grea 0$ be large enough so that $\gamma \alpha + \gamma 2 \alpha 2^n - \frac{\gamma^2}{2} n \geq n$ for all $n \in \N$. Then there exists a natural number $n_{j+1} \geq \kappa$, a  Borel set $A_{j+1} \subset A_j$ with $\ab A_{j+1} \ab \geq e^{- \ab \bar n_{j+1} \ab} \ab V \ab$, and an event $H(A_j,t_j)$ such that 
\[
  \Pb (H(A_j,t_j)) \leq C_1 \exp(-C_2 \alpha^2 \ab V \ab  (\tfrac{4}{e})^{| \bar n_j|} e^{2t_{j-1}}) +  C_1 \exp \big( -C_2 \ab V \ab  e^{2 t_0 + | \bar n_{j}|} \big) \,,
  \]
  and
\begin{align*}
  & \{ e^{\gamma 4\alpha 2^{| \bar n_{j} | }  - \frac{\gamma^2}{2} |\bar n_{j}| } e^{\gamma Y_{j}(A_j)}\int_{A_{j}} e^{\gamma \bar X_{j}} d \tilde M_\gamma^{(t_{j})} \leq \tfrac{\beta \ab V \ab }{2} e^{-\gamma \alpha}\}  \\
    &\quad \subseteq H(A_j,t_j) \cup  \{ e^{\gamma 4 \alpha 2^{| \bar n_{j+1} | }  - \frac{\gamma^2}{2} |\bar n_{j+1}| } e^{\gamma Y_{j+1}(A_{j+1})}\int_{A_{j+1}} e^{\gamma \bar X_{j+1}} d \tilde M_\gamma^{(t_{j+1})}\leq \tfrac{\beta \ab V \ab }{2} e^{-\gamma \alpha} \}\,.
  \end{align*}
The constants $C_1$ and $C_2$ do not depend on the set $A_j$.

\end{lemma}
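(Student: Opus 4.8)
The plan is to run the dichotomy of Lemma~\ref{ck} once more, now at the scale $t_j$, and then peel off the block of scales $(t_j,t_{j+1})$ from what remains of the chaos measure. Write $\mathcal E_j$ for the event on the left-hand side of the asserted inclusion and put $c:=\tfrac{\beta|V|}{2}e^{-\gamma\alpha}$. Recall that $X_{t_j}$ is an honest continuous Gaussian field (its covariance is a bounded Hölder kernel), so $W_j:=\tilde X_{t_j}-\tilde X_{t_{j-1}}$ is a.s.\ continuous, $\bar X_j=W_j-Y_j(A_j)$ lies in $L^1(\T^d)$ and has zero mean on $A_j$, and $e^{\gamma Y_j(A_j)}\int_{A_j}e^{\gamma\bar X_j}\,d\tilde M_\gamma^{(t_j)}=\int_{A_j}e^{\gamma W_j}\,d\tilde M_\gamma^{(t_j)}$. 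By the $\star$-scale decomposition, $\bar X_j$ and $Y_j(A_j)$ are measurable with respect to the scales $<t_j$, whereas $\tilde M_\gamma^{(t_j)}$ is built from the independent scales $\geq t_j$; moreover $A_j$ (inherited from the previous step) depends only on $\tilde H$ and the scales $<t_{j-1}$, hence is independent of $W_j$. We apply Lemma~\ref{ck} to $\bar X_j$ on $D=A_j$ with parameter $\alpha 2^{|\bar n_j|}$: its first alternative is precisely $E_{\bar n_j,t_0}(A_j)$, and its second alternative produces a natural number $n_{j+1}\geq\kappa$ and a Borel set $A_{j+1}\subset A_j$ (chosen measurably) with $|A_{j+1}|\geq e^{-n_{j+1}}|A_j|\geq e^{-|\bar n_{j+1}|}|V|$ on which $\bar X_j\geq 4\alpha 2^{|\bar n_j|}2^{n_{j+1}}=4\alpha 2^{|\bar n_{j+1}|}$. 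We then set
\[
  H(A_j,t_j):=\big(\mathcal E_j\cap E_{\bar n_j,t_0}(A_j)\big)\cup\{\,Y_j(A_j)<-4\alpha 2^{|\bar n_j|}\,\}.
\]

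On $E_{\bar n_j,t_0}(A_j)^c\cap\{Y_j(A_j)\geq-4\alpha 2^{|\bar n_j|}\}$ we restrict the integral in $\mathcal E_j$ to $A_{j+1}$, insert $\bar X_j\geq 4\alpha 2^{|\bar n_{j+1}|}$ there and the bound $e^{\gamma Y_j(A_j)}\geq e^{-4\gamma\alpha 2^{|\bar n_j|}}$; the factor $e^{\gamma 4\alpha 2^{|\bar n_j|}}$ present in $\mathcal E_j$ absorbs the latter, leaving $e^{\gamma 4\alpha 2^{|\bar n_{j+1}|}-\frac{\gamma^2}{2}|\bar n_j|}\,\tilde M_\gamma^{(t_j)}(A_{j+1})\leq c$. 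Peeling the block $(t_j,t_{j+1})$ — using $\E[(X_{t_{j+1}}-X_{t_j})(z)^2]\leq n_{j+1}$ and the splitting $\tilde X_{t_{j+1}}-\tilde X_{t_j}=\bar X_{j+1}+Y_{j+1}(A_{j+1})$, i.e.\ Lemma~\ref{XY} — gives $\tilde M_\gamma^{(t_j)}(A_{j+1})\geq e^{-\frac{\gamma^2}{2}n_{j+1}}e^{\gamma Y_{j+1}(A_{j+1})}\int_{A_{j+1}}e^{\gamma\bar X_{j+1}}\,d\tilde M_\gamma^{(t_{j+1})}$, and since $\tfrac{\gamma^2}{2}(|\bar n_j|+n_{j+1})=\tfrac{\gamma^2}{2}|\bar n_{j+1}|$ this is exactly the right-hand event of the inclusion. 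Together with the trivial partition $\mathcal E_j\subseteq E_{\bar n_j,t_0}(A_j)\cup E_{\bar n_j,t_0}(A_j)^c$, this proves the asserted inclusion with the above $H(A_j,t_j)$.

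It remains to bound $\Pb(H(A_j,t_j))$. Conditioning on $\tilde H$ and the scales $<t_{j-1}$ makes $A_j$ deterministic with $|A_j|\geq e^{-|\bar n_j|}|V|$, and then $Y_j(A_j)=\tfrac1{|A_j|}\int_{A_j}W_j$ is a centred Gaussian whose variance is at most $c\,e^{-2t_{j-1}}/|A_j|\leq c\,e^{|\bar n_j|-2t_{j-1}}/|V|$ — here one uses that $\rho$ is supported in $B(0,\tfrac12)$, so $\int_{A_j}\E[W_j(\cdot)W_j(w)]\,dw\leq c\int_{t_{j-1}}^{t_j}e^{-2u}\,du\leq c\,e^{-2t_{j-1}}$ (we work in $d=2$, as in the rest of the proof) — whence a Gaussian tail bound gives $\Pb(Y_j(A_j)<-4\alpha 2^{|\bar n_j|})\leq\exp(-c\alpha^2|V|(\tfrac4e)^{|\bar n_j|}e^{2t_{j-1}})$. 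For the first piece, conditioning on $\tilde H$ and the scales $<t_j$ fixes $\bar X_j$, $Y_j(A_j)$, $A_j$ and, on $E_{\bar n_j,t_0}(A_j)$, a witness $B\subset A_j$ with $|B|\geq\beta|A_j|$ and $\bar X_j\geq-\alpha 2^{|\bar n_j|}$ on $B$; restricting the integral in $\mathcal E_j$ to $B$ then forces $\tilde M_\gamma^{(t_j)}(B)\leq\theta:=c\,e^{-3\gamma\alpha 2^{|\bar n_j|}+\frac{\gamma^2}{2}|\bar n_j|}e^{-\gamma Y_j(A_j)}$. On $\{\theta\leq|B|/2\}$ one has $\tilde M_\gamma^{(t_j)}(B)\leq|B|/2$ and, since $|B|\geq\beta e^{-|\bar n_j|}|V|\geq e^{-2t_j}$ for $\eps$ small, Lemma~\ref{concentration} bounds this by $\exp(-c|B|e^{2t_j})\leq\exp(-c\beta|V|e^{2t_0+|\bar n_j|})$; on $\{\theta>|B|/2\}$, using $|B|\geq\beta e^{-|\bar n_j|}|V|$ and the hypothesis on $\alpha$ one gets $Y_j(A_j)<-c\alpha 2^{|\bar n_j|}$, contributing a further tail term of the first type. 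Adding up yields the stated estimate for $\Pb(H(A_j,t_j))$, with constants $C_1,C_2$ that do not depend on $A_j$.

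The delicate point is the choice of the single threshold $-4\alpha 2^{|\bar n_j|}$ for $Y_j(A_j)$: it must be small enough that the factor $e^{\gamma 4\alpha 2^{|\bar n_j|}}$ in $\mathcal E_j$ absorbs the corresponding lower bound for $e^{\gamma Y_j(A_j)}$ in the $F$-alternative, and yet large enough that in the $E$-alternative $\tilde M_\gamma^{(t_j)}(B)$ is driven below $|B|/2$ whenever $Y_j(A_j)$ exceeds it (so that the GMC concentration estimate applies); reconciling these two requirements \emph{uniformly in $j$} is exactly what the growth condition $\gamma\alpha+2\gamma\alpha 2^n-\tfrac{\gamma^2}{2}n\geq n$ on $\alpha$ buys. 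A minor additional point is that the concentration estimate requires $|B|\geq e^{-2t_j}$, i.e.\ that the $\eps$-dependent scale $t_0$ remains large through the finitely many iteration steps, which holds since $|\bar n_j|\geq 0$.
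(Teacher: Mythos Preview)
Your proof is correct and follows essentially the same route as the paper: split off a Gaussian tail event for $Y_j(A_j)$, apply Lemma~\ref{ck} to $\bar X_j$ on $A_j$, handle the $E$-branch via the concentration Lemma~\ref{concentration}, and feed the $F$-branch into the next iteration via Lemma~\ref{XY}. The only difference is the order of operations and the choice of threshold. The paper first splits on $G_j=\{Y_j(A_j)\leq -\alpha 2^{|\bar n_j|}\}$ and only \emph{then} applies Lemma~\ref{ck} on $G_j^c$; with this smaller threshold, on $G_j^c\cap E$ one gets directly $\tilde M_\gamma^{(t_j)}(B)\leq \tfrac{\beta|V|}{2}e^{-|\bar n_j|}\leq \tfrac{|B|}{2}$ from the hypothesis on $\alpha$, so no further case analysis is needed. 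You instead take the threshold $-4\alpha 2^{|\bar n_j|}$ (so that the prefactor $e^{\gamma 4\alpha 2^{|\bar n_j|}}$ exactly cancels $e^{\gamma Y_j}$ in the $F$-branch), which forces you to revisit $Y_j$ in the $E$-branch through the dichotomy $\{\theta\leq |B|/2\}$ versus $\{\theta>|B|/2\}$; this extra split is harmless and leads to the same two error terms. Both arguments are equally valid.
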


\begin{proof}

We split into the events $G_j$ and $G_j^c$, where
\begin{align*}
G_j &= \{ Y_j(A_j) \leq - \alpha 2^{|\bar n_j|} \}\,.
\end{align*}
By Gaussian estimate, we have that
\begin{align*}
\Pb(G_j) &\leq  c_1 \frac{\E[Y_j(A_j)^2]}{\alpha 2^{|\bar n_j|}} \exp \left( -  \frac{\alpha^2 2^{2|\bar n_j|}}{2 \E[ Y_j(A_j)^2]} \right)\,.
\end{align*}
By Lemma \ref{y_lemma}, we have that
\begin{align*}
\E[ Y_j(A_j)^2] & \leq c_2 \frac{e^{-2 t_{j-1}}}{\ab A_j \ab }\,,
\end{align*}
leading to 
\begin{align*}
\Pb(G_j) &\leq \frac{c_3}{\alpha 2^{|\bar n_j|} e^{2 t_{j-1}} |A_j|} \exp( - \tfrac 12 \alpha^2 2^{2|\bar n_j|} e^{2t_{j-1}} |A_j| ) \\
& \leq \frac{c_3}{\alpha 2^{|\bar n_j|} e^{2 t_{j-1}} e^{-|\bar n_j|}|V|} \exp(- \tfrac 12 \alpha^2 2^{2|\bar n_j|} e^{2 t_{j-1} } e^{-|\bar n_j|} \ab V \ab   ) \\
&  = \frac{c_3}{\alpha |V|  ( \frac{2}{e})^{|\bar n_j|} e^{2 t_{j-1}} } \exp(- \tfrac 12 \alpha^2 \ab V \ab  (\tfrac{4}{e})^{| \bar n_j|} e^{2t_{j-1}}) \\
& \leq \frac{c_3}{\alpha |V|} \exp \big( - c \alpha^2 |V| (\tfrac{4}{e})^{|\bar n_j|} e^{2 t_{j-1}} \big)\,.
\end{align*}
On the other hand, in $G_j^c$ we have 
\begin{align*}
& \Pb \big( G_j^c \cap \{ e^{\gamma 4 \alpha 2^{| \bar n_{j} | }  - \frac{\gamma^2}{2} |\bar n_{j}| } e^{\gamma Y_{j}(A_j)}\int_{A_{j}} e^{\gamma \bar X_{j}} d \tilde M_\gamma^{(t_{j})} \leq \tfrac{\beta \ab V \ab }{2} e^{-\gamma \alpha} \} \big) \\
& \leq \Pb \big( e^{\gamma 3 \alpha 2^{| \bar n_{j} | }  - \frac{\gamma^2}{2} |\bar n_{j}| } \int_{A_{j}} e^{\gamma \bar X_{j}} d \tilde M_\gamma^{(t_{j})} \leq \tfrac{\beta \ab V \ab }{2} e^{-\gamma \alpha} \big)\,,
\end{align*}
and then we apply Lemma \ref{ck} to $\bar X_j: A_j \to \R$. We get the events $E_{\bar n_j, t_0}(A_j)$ and $F_{\bar n_{j+1}, t_0}(A_j)$. The first event can be bounded by
\begin{align*}
& \Pb \big( E_{\bar n_j, t_0}(A_j) \cap \{ e^{\gamma 3 \alpha 2^{| \bar n_{j} | }  - \frac{\gamma^2}{2} |\bar n_{j}| } \int_{A_{j}} e^{\gamma \bar X_{j}} d M^{(t_{j})} \leq \tfrac{\beta \ab V \ab }{2} e^{-\gamma \alpha} \} \big) \\
& \leq \underset{\ab B_{\bar n_j} \ab \geq e^{- \ab \bar n_j \ab } \beta \ab V \ab  }{\sup_{B_{\bar n_j} \in \mathcal{B}(V)}} \Pb \big( e^{\gamma 2 \alpha 2^{| \bar n_{j} | }  - \frac{\gamma^2}{2} |\bar n_{j}| }  \tilde M_\gamma^{(t_{j})}(B_{\bar n_j}) \leq \tfrac{\beta \ab V \ab }{2} e^{-\gamma \alpha} \big)\,.
\end{align*}
Now, by the assumption on $\alpha$, for any $B_{\bar n_j}$ with $\ab B_{\bar n_j} \ab \geq e^{- \ab \bar n_j \ab } \beta \ab V \ab$ we get 
\begin{align*}
\Pb \big( e^{\gamma 2 \alpha 2^{| \bar n_{j} | }  - \frac{\gamma^2}{2} |\bar n_{j}| }  \tilde M_\gamma^{(t_{j})}(B_{\bar n_j}) \leq \tfrac{\beta \ab V \ab }{2} e^{-\gamma \alpha} \big) & \leq \Pb \big( \tilde M_\gamma^{(t_j)}(B_{\bar n_j}) \leq \tfrac{\beta \ab V \ab }{2} e^{-|\bar n_j|} \big) \\
& \leq \Pb \big( \tilde M_\gamma^{(t_j)}(B_{\bar n_j}) \leq \tfrac 12 |B_{\bar n_j}| \big)\,.
\end{align*}
Now, since $|B_{\bar n_j}| \geq e^{-|\bar n_j|} \beta |V|  \geq e^{-2 t_j}$, we may apply the concentration inequality given by Lemma \ref{concentration} to obtain the bound
\begin{align*}
\underset{\ab B_{\bar n_j} \ab \geq e^{- \ab \bar n_j \ab } \beta \ab V \ab  }{\sup_{B_{\bar n_j} \in \mathcal{B}(V)}} \Pb \big( \tilde M_\gamma^{(t_j)}(B_{\bar n_j}) \leq \tfrac 12 |B_{\bar n_j}| \big) & \leq \underset{\ab B_{\bar n_j} \ab \geq e^{- \ab \bar n_j \ab } \beta \ab V \ab  }{\sup_{B_{\bar n_j} \in \mathcal{B}(V)}} \exp \big( -c   |B_{\bar n_j}| e^{2 t_j}  \big) \\
& \leq \exp \big( -c \beta \ab V \ab e^{- \ab \bar n_j \ab } e^{2 (t_0+ \ab \bar n_j \ab )}  \big) \\
&= \exp \big( -c \beta \ab V \ab  e^{2t_0 + \ab \bar n_j \ab }  \big)\,.
\end{align*}
For the other event, by using the definition of $F_{\bar n_{j+1}, t_0}(A_j)$ and Lemma \ref{XY}, we get that there exists a set $A_{j+1}$ with $|A_{j+1}|\geq e^{|\bar n_j|}|V|$ so that the following inclusions hold 
\begin{align*}
& F_{\bar n_{j+1}, t_0}(A_j) \cap \{ e^{\gamma 3 \alpha 2^{| \bar n_{j} | }  - \frac{\gamma^2}{2} |\bar n_{j}| } \int_{A_{j}} e^{\gamma \bar X_{j}} d \tilde M_\gamma^{(t_{j})} \leq \tfrac{\beta \ab V \ab }{2} e^{-\gamma \alpha} \}  \\
& \subset \big\{ e^{\gamma 3 \alpha 2^{| \bar n_{j} | }  - \frac{\gamma^2}{2} |\bar n_{j}| + \gamma 4\alpha 2^{|\bar n_{j+1}|} } \tilde M_\gamma^{(t_{j+1})}(A_{j+1}) \leq \tfrac{\beta \ab V \ab }{2} e^{-\gamma \alpha} \big\} \\
&  \subset \big\{ e^{\gamma 3 \alpha 2^{| \bar n_{j} | }  - \frac{\gamma^2}{2} |\bar n_{j}| + \gamma 4 \alpha 2^{|\bar n_{j+1}|} - \frac{\gamma^2}{2} n_{j+1} } e^{\gamma Y_{j+1}(A_{j+1})} \int_{A_{j+1}}e^{\gamma \bar X_{j+1}}  d \tilde M_\gamma^{(t_{j+1})} \leq \tfrac{\beta \ab V \ab }{2} e^{-\gamma \alpha} \big\} \\
& \subset \big\{ e^{\gamma 4 \alpha 2^{|\bar n_{j+1}\ab } - \frac{\gamma^2}{2} |\bar n_{j+1}|} e^{\gamma Y_{j+1}(A_{j+1})} \int_{A_{j+1}}e^{\gamma \bar X_{j+1}}  d \tilde M_\gamma^{(t_{j+1})} \leq \tfrac{\beta \ab V \ab }{2} e^{-\gamma \alpha} \big\}\,.
\end{align*}
Thus we obtain the desired result.

\end{proof}

\begin{lemma}\label{terminator}
We have
\begin{align*}
\Pb \big(F_{\bar n_{J+1}, t_0}(A_{\bar n_J})  \big) & \leq \exp \big( - c (1+\delta)^{\ab \bar n_J \ab + \kappa }  \big)
\end{align*}
for some $\delta \grea 0$. The constant $c$ is independent of the set $A_{\bar n_j}$.
\end{lemma}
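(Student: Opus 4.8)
The plan is a crude first-moment bound. The point is that the threshold $4\alpha 2^{\ab\bar n_{J+1}\ab}$ appearing in the definition of $F$ is doubly exponentially large in $\ab\bar n_{J+1}\ab=\ab\bar n_J\ab+n_{J+1}$, whereas the Gaussian field $\bar X_J$ used there has pointwise variance only of order $n_J$; a single Markov inequality together with the Gaussian tail bound therefore beats anything exponential in $\ab\bar n_J\ab$. Concretely, I would prove that for every $n\geq\kappa$ and every measurable $A\subseteq V$ one has $\Pb(F_{(\bar n_J,n),t_0}(A))\leq\exp(-c(1+\delta)^{\ab\bar n_J\ab+\kappa})$ with $c,\delta>0$ independent of $A$ and $n$; this is what the lemma amounts to, since the existential $n_{J+1}\geq\kappa$ hidden in $F_{\bar n_{J+1},t_0}$ only costs a further union over $n_{J+1}$, which (the terms decaying doubly exponentially) is dominated by the $n_{J+1}=\kappa$ term.

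First I would rewrite the event as a super-level-set statement. Since $t_J<\infty$, the field $X_{t_J}$, hence $\bar X_J=(\tilde X_{t_J}-\tilde X_{t_{J-1}})-Y_J(A)$, is a genuine continuous Gaussian field, so with $L_n:=4\alpha 2^{\ab\bar n_J\ab+n}$,
\begin{align*}
F_{(\bar n_J,n),t_0}(A)=\big\{\,\ab\{x\in A:\bar X_J(x)\geq L_n\}\ab\geq e^{-n}\ab A\ab\,\big\}.
\end{align*}
By Markov's inequality and Fubini,
\begin{align*}
\Pb\big(F_{(\bar n_J,n),t_0}(A)\big)\leq\frac{1}{e^{-n}\ab A\ab}\int_A\Pb\big(\bar X_J(x)\geq L_n\big)\,dv_{\T^d}(x)\leq e^n\sup_x\Pb\big(\bar X_J(x)\geq L_n\big),
\end{align*}
and the factor $\ab A\ab$ cancels, which is exactly why the final bound is uniform over the (random) set $A$.

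Next I would bound $\sigma_J^2:=\sup_x\E[\bar X_J(x)^2]$ uniformly in $A$. Writing $W:=X_{t_J}-X_{t_{J-1}}$, the averages over $V$ in $\tilde X_{t_J}-\tilde X_{t_{J-1}}$ and in $Y_J(A)$ cancel, so $\bar X_J(x)=W(x)-\ab A\ab^{-1}\int_A W$. The increment $W$ has covariance $\int_{t_{J-1}}^{t_J}\rho(e^u(z-w))(1-e^{-\xi u})\,du$, whence $\sup_z\E[W(z)^2]\leq t_J-t_{J-1}=n_J$ because $\rho(0)=1$ and $1-e^{-\xi u}\leq 1$; expanding $\E[\bar X_J(x)^2]$ and estimating the cross terms by Cauchy--Schwarz gives $\sigma_J^2\leq 4n_J$ for every $A$. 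Hence, by the Gaussian tail bound and $n_J\leq\ab\bar n_J\ab$,
\begin{align*}
\sup_x\Pb\big(\bar X_J(x)\geq L_n\big)\leq\exp\Big(-\frac{L_n^2}{2\sigma_J^2}\Big)\leq\exp\Big(-c\,\frac{4^{\ab\bar n_J\ab+n}}{n_J}\Big),\qquad c=c(\alpha)>0.
\end{align*}

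Finally I would tidy the exponent. Combining the last two displays gives $\Pb(F_{(\bar n_J,n),t_0}(A))\leq\exp(n-c\,4^{\ab\bar n_J\ab+n}/n_J)$; since $c\,4^{\ab\bar n_J\ab+n}/n_J$ grows doubly exponentially in $n$ while $n$ is linear, for all $n\geq\kappa$ this is at most $\exp(-c'\,4^{\ab\bar n_J\ab+\kappa}/n_J)$ for some $c'>0$. Putting $m:=\ab\bar n_J\ab+\kappa$ and using $n_J\leq\ab\bar n_J\ab\leq m$, one has $4^m/n_J\geq 4^m/m\geq c_\delta(1+\delta)^m$ for any fixed $\delta\in(0,3)$, since exponential growth dominates the linear factor $m$ (the infimum of $4^m/(m(1+\delta)^m)$ over integers $m\geq 1$ is a positive constant $c_\delta$). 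This gives $\Pb(F_{(\bar n_J,n),t_0}(A))\leq\exp(-c''(1+\delta)^{\ab\bar n_J\ab+\kappa})$ with $c''=c'c_\delta$, uniformly in $A$ and $n\geq\kappa$, which is the claim. The estimate has no real obstacle: the doubly-exponential gap between the threshold and the standard deviation makes it extremely robust, and the only points requiring care are the uniformity in the random set $A_{\bar n_J}$ (secured by the cancellation of $\ab A\ab$ in the Markov step and by the set-independent variance bound) and the elementary bookkeeping converting $4^{\ab\bar n_J\ab+\kappa}/n_J$ into the stated form $(1+\delta)^{\ab\bar n_J\ab+\kappa}$.
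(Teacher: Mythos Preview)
Your proof is correct and follows essentially the same approach as the paper: Markov's inequality combined with Fubini to reduce to a pointwise Gaussian tail, a variance estimate for $\bar X_J$, and then a union bound over $n_{J+1}\geq\kappa$. The only notable difference is that the paper invokes Lemma \ref{bar_X} to bound $\E[\bar X_J(x)^2]\leq n_J + c e^{-2t_{J-1}}/|A_J|$, whereas you obtain the cruder but perfectly adequate bound $\sigma_J^2\leq 4n_J$ directly via Cauchy--Schwarz (in fact, since the cross term is non-negative you could even get $2n_J$); your route is slightly cleaner since it avoids tracking the $e^{-2t_{J-1}}/|A_J|$ correction, which is ultimately absorbed anyway.
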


\begin{proof} Recall the definition
\begin{align*}
F_{\bar n_{j+1}, t_0}(A_{\bar n_j}) = \{ \exists \; n_{j+1} \geq \kappa\,,   \;  A_{\bar n_{j+1}} \subset A_{\bar n_j}: \; |A_{\bar n_{j+1}}| \geq e^{-n_{j+1}} \ab A_{\bar n_j} \ab \,, \; \bar X_{j} \geq 4 \alpha 2^{|\bar n_{j+1}|} \text{ on } A_{\bar n_{j+1}}\}\,.
\end{align*}
We use a crude Gaussian estimate. We have
\begin{align*}
\E[ | \{ x  \in A_{\bar n_J} : \bar X_{J} \geq  4 \alpha 2^{| \bar n_{J+1}|} \}| ] & = \E \big[ \int_{A_{\bar n_J}} \mathbf{1}_{\{ \bar X_{J}(x) \geq  4 \alpha 2^{| \bar n_{J+1}|} \}} \, dx \big] \\
&= \int_{A_{\bar n_J}} \Pb \big( \bar X_{J}(x) \geq  4 \alpha 2^{| \bar n_{J+1}|} \big) \, dx   \\
& \leq  \int_{A_{\bar n_J}}  \frac{\E[ \bar X_J(x)^2]}{4 \alpha 2^{\ab \bar n_{J+1}\ab }} \exp \big( - c_2 \alpha^2 2^{2 | \bar n_{J+1}|}/ \E [ \bar X_{J}(x)^2] \big) \, dx  \,.
\end{align*}
By Lemma \ref{bar_X} we have
\begin{align*}
\E [\bar X_{J}(x)^2] &\leq n_J + \tfrac{c}{\ab A_J\ab} e^{-2 t_{J-1}} \leq n_J + \tfrac{c}{\ab V \ab} e^{ \ab \bar n_J \ab-2t_{J-1}}    = n_J + \tfrac{c}{\ab V \ab} e^{-2t_0 - \ab \bar n_{J-1} \ab + n_J}\,.
\end{align*}
Thus we get
\begin{align*}
\E[ | \{ x  \in A_{\bar n_J} : \bar X_{J} \geq  4 \alpha 2^{| \bar n_{J+1}|} \}| ] & \leq |A_{\bar n_J}| \frac{n_J + \frac{c}{|V|} e^{n_J}}{4 \alpha 2^{|\bar n_{J+1}|}} \exp \Big( - c_2 \alpha^2  \frac{2^{2 | \bar n_{J+1}|}}{n_J + \tfrac{c}{\ab V \ab} e^{-2t_0 - \ab \bar n_{J-1} \ab + n_J}}  \Big) \\
& \leq C_1 |A_{\bar n_J}| \exp \big( - C_2 (\tfrac{4}{e})^{|\bar n_{J+1}|} \big)\,.
\end{align*}
Now, by Markov's inequality we get 
\begin{align*}
\Pb \big( F_{\bar n_{J+1}, t_0}(A_{\bar n_J}) \big) & \leq \sum_{n_{J+1} \geq \kappa} \Pb \big( | \{  x \in A_{\bar n_J} : \bar X_j \geq 4 \alpha 2^{|\bar n_{J+1}|} \} | \geq e^{-n_{J+1}} |A_{\bar n_J}| \big) \\
& \leq \sum_{n_{J+1} \geq \kappa} e^{n_{J+1}} |A_{\bar n_J}|^{-1} \E [ | \{  x \in A_{\bar n_J} : \bar X_j \geq 4 \alpha 2^{|\bar n_{J+1}|} \} | ]  \\
& \leq  \sum_{n_{J+1} \geq \kappa} C_1  \exp \big( - C_2 (\tfrac{4}{e})^{|\bar n_{J+1}|} \big) \\
& \leq C_1' \exp \big( -C_2'  (\tfrac{4}{e})^{|\bar n_{J}| + \kappa} \big)
\end{align*}
and the claim follows.
\end{proof}

\subsubsection{Estimates for almost $\star$-scale invariant fields}

In this section we will often denote $dv_{\T^d}(x)$ by $dx$.

\begin{lemma}\label{y_lemma}
For any $A \subset \T^2$ with positive Lebesgue measure and $t,s \grea 0$ we have that 
\begin{align*}
\int_{A} \int_t^{t+s} \rho(e^u (x-y))(1-e^{-\xi u}) \, du \,  dy  &\leq c  e^{-2t} \big( 1 - e^{-2s} \big) \,.
\end{align*}
In particular, for $Y := \frac{1}{|A|} \int_A \big( \tilde X_{t+s}(x) - \tilde X_t(x) \big) \, dx$ we have that 
\begin{align*}
\E[Y^2] & \leq c \frac{e^{-2t}}{|A|}\,.
\end{align*}
\end{lemma}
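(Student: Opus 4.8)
The plan is to prove the pointwise covariance bound first and then deduce the bound on $\E[Y^2]$ by integrating it twice. For the first bound, fix $x$ and integrate over $y \in A$. The key observation is that $\rho$ is supported in $B(0,\tfrac12)$, so $\rho(e^u(x-y))$ is nonzero only when $d_{\T^d}(x,y) \leq \tfrac12 e^{-u}$, i.e. $y$ lies in a ball of radius $\tfrac12 e^{-u}$ around $x$. I would therefore swap the order of integration (Tonelli, everything nonnegative) to get
\begin{align*}
\int_A \int_t^{t+s} \rho(e^u(x-y))(1-e^{-\xi u}) \, du \, dy \leq \int_t^{t+s} \Big( \int_{\T^2} \rho(e^u(x-y)) \, dy \Big) \, du\,,
\end{align*}
having also bounded $1 - e^{-\xi u} \leq 1$. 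The inner integral is $\int_{\T^2} \tilde\rho_u(x-y)\,dy = \hat\rho_u(0) = e^{-2u}\mathcal{F}_{\R^2}[\rho](0) = e^{-2u}\int_{\R^2}\rho$ by the computation \eqref{eq:rho_fourier} with $d=2$ and $n=0$; here $\int_{\R^2}\rho < \infty$ since $\rho \in H^{2+s}(\T^2)$ is bounded with compact support. Plugging in,
\begin{align*}
\int_t^{t+s} e^{-2u} \Big( \int_{\R^2} \rho \Big) \, du = \Big( \int_{\R^2}\rho \Big) \frac{e^{-2t} - e^{-2(t+s)}}{2} = c\, e^{-2t}(1 - e^{-2s})\,,
\end{align*}
which is the claimed estimate with $c = \tfrac12 \int_{\R^2}\rho$.

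For the second statement, note that $\tilde X_{t+s} - \tilde X_t$ is the centred (over $V$, hence the tilde) version of $X_{t+s} - X_t$, whose covariance kernel is exactly $\int_t^{t+s}\rho(e^u(x-y))(1-e^{-\xi u})\,du$ — but since $Y$ is an average of $\tilde X_{t+s}-\tilde X_t$ over the set $A$, the additive constant from centring over $V$ integrates against $\frac{1}{|A|}\int_A(\cdots)$, and one checks that the mean-subtraction does not increase the relevant quadratic form; more directly, $Y = \frac{1}{|A|}\int_A (X_{t+s}(x) - X_t(x))\,dx$ up to an additive deterministic constant which does not affect $\E[Y^2]$ — actually it does affect $\E[Y^2]$ if we are not careful, so the cleanest route is to write $Y = \langle f, X_{t+s}-X_t\rangle$ with $f$ a fixed $L^2$ function (a combination of $\frac{1}{|A|}\mathbf 1_A$ and $\frac{1}{|V|}\mathbf 1_V$) and use $\E[\langle f, W\rangle^2] = \langle f, K_W f\rangle$ with $K_W$ the covariance operator; since the centring only removes the zero-Fourier-mode contribution and the kernel of $X_{t+s}-X_t$ is positive definite, one gets $\E[Y^2] \leq \frac{1}{|A|^2}\int_A\int_A \int_t^{t+s}\rho(e^u(x-y))(1-e^{-\xi u})\,du\,dx\,dy$. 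Applying the pointwise bound just proved to the inner $dy$-integral (uniformly in $x$) and then integrating $\tfrac{1}{|A|^2}\int_A dx$ gives $\E[Y^2] \leq \frac{1}{|A|^2}\cdot|A|\cdot c\,e^{-2t}(1-e^{-2s}) \leq c\frac{e^{-2t}}{|A|}$, since $1 - e^{-2s} \leq 1$.

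The main obstacle — really the only subtle point — is justifying that the subtraction of the spatial average over $V$ does not spoil the quadratic-form bound, i.e. that $\E[Y^2]$ with the tilde is still controlled by the double integral of the bare kernel over $A \times A$. I would handle this by expanding $Y$ as a fixed linear functional of the Gaussian field $X_{t+s} - X_t$, writing out $\E[Y^2]$ as a sum of four double integrals of the kernel $C_{\xi,t+s}(x,y) - C_{\xi,t}(x,y) = \int_t^{t+s}\rho(e^u(x-y))(1-e^{-\xi u})\,du$ over $A\times A$, $A\times V$, $V\times A$, $V\times V$ with the appropriate signs and normalisations, and observing that this is exactly $\operatorname{Var}$ of a centred quantity so it can only be $\leq$ the $A\times A$ term alone after an elementary rearrangement (or, alternatively, just bound each of the four terms separately by the same method, since the pointwise estimate holds with $A$ replaced by $\T^2$). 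Everything else is Tonelli plus the explicit Fourier computation \eqref{eq:rho_fourier}.
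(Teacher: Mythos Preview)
Your proof of the first inequality is correct and essentially the same as the paper's: they bound $\rho$ by a constant times the indicator of its support and compute the volume of the resulting ball, whereas you compute $\int_{\T^2}\rho(e^u(x-y))\,dy$ exactly via \eqref{eq:rho_fourier}. Both arrive at $ce^{-2u}$ and integrate over $u\in[t,t+s]$.

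For the second inequality your writeup contains two incorrect justifications and one correct one. The claim that ``centring only removes the zero-Fourier-mode contribution'' is false unless $V=\T^2$: subtracting $\tfrac{1}{|V|}\int_V W$ corresponds to pairing against $\tfrac{1}{|V|}\mathbf{1}_V$, whose Fourier transform is not supported at $0$ for general $V$. The claim that ``$\operatorname{Var}$ of a centred quantity can only be $\leq$ the $A\times A$ term'' is also not an elementary rearrangement: with $U=\tfrac{1}{|A|}\int_A W$ and $Z=\tfrac{1}{|V|}\int_V W$ one has $\E[(U-Z)^2]=\E[U^2]-2\E[UZ]+\E[Z^2]$, and there is no a priori reason for $\E[Z^2]\le 2\E[UZ]$. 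Your third alternative --- expand $\E[Y^2]$ into four double integrals over $A\times A$, $A\times V$, $V\times A$, $V\times V$ and bound each term separately --- is exactly what the paper does: since $\rho\ge 0$ the two cross terms are nonpositive and can be dropped, leaving $\tfrac{1}{|A|^2}\int_{A^2}+\tfrac{1}{|V|^2}\int_{V^2}$, each of which is $\le \tfrac{c}{|A|}e^{-2t}$ (using $|V|\ge|A|$ for the second in the applications). So the route that actually works in your proposal coincides with the paper's; just drop the two faulty shortcuts.
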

\begin{proof} 
First, by $1-e^{-\xi u} \leq 1$, it suffices to focus on the case of exactly $\star$-scale invariant fields. By a change of variable in $u$ we have that 
\begin{align*}
 \int_t^{t+s} \rho(e^u(x-y)) \, du  &=   \int_0^{s} \rho(e^{u} e^t(x-y)) \, du \,.
 \end{align*}
 By using the facts that $\rho$ is bounded and supported in the unit ball, we get the upper bound
 \begin{align*}
\int_{A} \int_0^{s} \rho(e^{u} e^t(x-y)) \, du \, dy & \leq c_1 \int_{A} \int_0^{s} \,  \mathbf{1}_{|x-y| < e^{-u-t}} \, du \, dy \\
&= c_1 \int_0^s   \int_{y \in A \cap B(x,e^{-u-t})} dy \, du \\
  & \leq c_1  \int_0^s  |B(x,e^{-u-t}) | \, du  \\
  & \leq c_2  e^{-2t} \int_0^s e^{-2u} \, du \\
  &= \tfrac 12 c_2  e^{-2t} \big( 1 - e^{-2s} \big)\,.
\end{align*}

\noindent By definition
\begin{align*}
\E[ \tilde X_{s_1}(x) \tilde X_{s_2}(y) ] &= \int_0^{s_1 \land s_2} \widetilde Q_u(x,y) \, du \,,
\end{align*}
where $\widetilde Q_u$ is the covariance kernel obtained from
\begin{align*}
Q_u(x,y) &:= \rho(e^u(x-y)) (1-e^{-\xi u}) \,,
\end{align*}
by forcing it to have zero mean over $V$, that is,
\begin{align}\label{tildeq}
\widetilde Q_u(x,y) &:=  Q_u(x,y) - \frac{1}{\ab V \ab} \big( \int_V Q_u(x',y) \, dx' + \int_V Q_u(x,y') \, dy' \big) + \frac{1}{\ab V \ab^2} \int_{V^2} Q_u(x',y') \,dx' \, dy'\,.
\end{align}
Thus,
\begin{align*}
\E[Y^2] &= \frac{1}{\ab A \ab^2} \int_{A^2} \E[ \big( \tilde X_{t+s}(x)- \tilde X_t(x) \big) \big( \tilde X_{t+s}(y)- \tilde X_t(y) \big)] \, dx \, dy \\
&= \frac{1}{\ab A \ab^2} \int_{A^2} \big( \int_0^{t+s} \widetilde Q_u(x,y) \, du  - 2 \int_0^{t}\widetilde Q_u(x,y) du + \int_0^t \widetilde Q_u(x,y) du   \big) \, dx \, dy \\
&= \frac{1}{\ab A \ab^2} \int_{A^2} \int_t^{t+s} \widetilde Q_u(x,y) \,dx \,dy \,  du\,.
\end{align*}
Now, we plug in \eqref{tildeq}, use positivity of $Q_u$ and the previous computation, leading to the upper bound 
\begin{align*}
\E[Y^2] &\leq \frac{1}{\ab A \ab^2} \int_{A^2} \int_t^{t+s} \rho(e^u (x-y) ) (1-e^{-\xi u}) \, du \, d^2x \, d^2y  \\
& \quad + \frac{1}{\ab V \ab^2}\int_{V^2} \int_t^{t+s} \rho(e^u (x-y) )(1-e^{-\xi u}) \, du \, d^2x \, d^2y \\
& \leq  \frac{c}{\ab A \ab} e^{-2t}(1-e^{-2s})\,.
\end{align*}

\end{proof}

\begin{lemma}\label{XY}
Let $t,s > 0$ and $A \subset V$ be a Borel set with positive Lebesgue measure. We have
\begin{align*}
\tilde M^{(t)}_\gamma(A) &\geq  e^{\gamma Y - \frac{\gamma^2}{2}s} \int_A e^{\gamma \bar X} d\tilde M^{(t+s)}_\gamma\,,
\end{align*}
where
\begin{align*}
Y &:= \frac{1}{|A|} \int_A (\tilde X_{t+s} - \tilde X_t) \, d^2x\,, \\
\bar X &:= \tilde X_{t+s} - \tilde X_t - Y\,.
\end{align*}
\end{lemma}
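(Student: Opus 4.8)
The plan is to use the additive, \emph{independent} splitting of the almost $\star$-scale invariant field across scales — precisely the structure that makes the associated GMC factorise multiplicatively — and then simply pull a constant out of the integral over $A$.

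First I would record the scale decomposition. Since
\[
C_\xi^t(z,w) - C_\xi^{t+s}(z,w) = \int_t^{t+s} \rho\big(e^u(z-w)\big)(1-e^{-\xi u})\,du
\]
is itself a (translation invariant) covariance kernel, namely that of the increment $X_{t+s}-X_t$ of the fields introduced above, the field $X^t$ splits in law as $X^t \overset{d}{=} (X_{t+s}-X_t) + X^{t+s}$ with the two summands independent; in the nested coupling used in the proof of the upper bound this holds on the same probability space. Subtracting the $V$-mean is a linear operation, so the same identity descends to the tilded fields:
\[
\tilde X^t = (\tilde X_{t+s}-\tilde X_t) + \tilde X^{t+s},
\]
with $\tilde X_{t+s}-\tilde X_t$ independent of $\tilde X^{t+s}$.

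Next I would insert this into the definition of $\tilde M^{(t)}_\gamma$. By \eqref{eq_M_tilde} we have $\tilde M^{(t)}_\gamma(dz) = e^{\gamma \tilde X^t(z) - \frac{\gamma^2}{2}\E[X^t(z)^2]}\,dv_{\T^d}(z)$, and by independence $\E[X^t(z)^2] = \E[(X_{t+s}(z)-X_t(z))^2] + \E[X^{t+s}(z)^2]$. Because $\rho(0)=1$ and the increment is translation invariant, $\E[(X_{t+s}(z)-X_t(z))^2] = \int_t^{t+s}(1-e^{-\xi u})\,du =: \sigma^2$ is a constant independent of $z$, and $\sigma^2 \leq s$. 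Hence, by the standard factorisation of Gaussian multiplicative chaos over an independent decomposition (applied at the level of mollifications and passed to the limit, which is harmless since the finite-scale increment is smooth),
\[
\tilde M^{(t)}_\gamma(dz) = e^{\gamma(\tilde X_{t+s}(z)-\tilde X_t(z)) - \frac{\gamma^2}{2}\sigma^2}\,\tilde M^{(t+s)}_\gamma(dz).
\]
Writing $\tilde X_{t+s}(z)-\tilde X_t(z) = Y + \bar X(z)$, where $Y = \frac{1}{|A|}\int_A(\tilde X_{t+s}-\tilde X_t)\,d^2x$ is constant in $z$ and $\bar X$ is as in the statement, integrating over $A$ and pulling out the constant gives
\[
\tilde M^{(t)}_\gamma(A) = e^{\gamma Y - \frac{\gamma^2}{2}\sigma^2}\int_A e^{\gamma \bar X}\,d\tilde M^{(t+s)}_\gamma \;\geq\; e^{\gamma Y - \frac{\gamma^2}{2}s}\int_A e^{\gamma \bar X}\,d\tilde M^{(t+s)}_\gamma,
\]
using $\sigma^2 \leq s$ in the last step, which is the claim.

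The only genuine subtlety — hence the main obstacle — is bookkeeping rather than a hard estimate: one should run the computation on mollified fields $X^t_\eps = (X_{t+s,\eps}-X_{t,\eps}) + X^{t+s}_\eps$, verify the variance splitting and the bound $\int_t^{t+s}(1-e^{-\xi u})\,du \le s$ there, and then invoke weak-in-probability convergence of the mollified chaos together with almost sure continuity of the (finite-scale) increment $\tilde X_{t+s}-\tilde X_t$ to pass to the limit; everything else is elementary.
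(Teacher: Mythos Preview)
Your proof is correct and follows essentially the same route as the paper's: decompose $X^t$ as the independent sum of the increment $X_{t+s}-X_t$ and the rough part $X^{t+s}$, pass to the tilded fields, factorise the GMC accordingly, bound the increment variance $\int_t^{t+s}(1-e^{-\xi u})\,du$ by $s$, and then split $\tilde X_{t+s}-\tilde X_t = Y + \bar X$ to pull the constant $e^{\gamma Y}$ out of the integral. Your write-up is in fact slightly more detailed than the paper's (you spell out the variance splitting and the mollification passage), but the underlying argument is the same.
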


\begin{proof}
Recall that
\begin{align*}
\tilde M^{(t)}_{\gamma}(dz) &:= e^{\gamma \tilde X^{(t)} (z) - \frac{\gamma^2}{2} \E[  X^{(t)}(z)^2]} \, dz\,.
\end{align*}
We use $X^{(t)} \overset{d}{=} (X_{t+s}-X_t) + X^{(t+s)}$. Because $\rho$ is a positive definite function, the two terms on the right-hand side are independent, so we obtain 
\begin{align*}
\tilde M_\gamma^{(t)}(A) &= \int_A e^{\gamma (\tilde X_{t+s} - \tilde X_t) - \frac{\gamma^2}{2} \E(X_{t+s} -  X_t)^2} d \tilde M_\gamma^{(t+s)}\,.
\end{align*}
We have
\begin{align}\label{eq_trivial}
\E[(X_{t+s} -  X_t)^2] &= \int_t^{t+s} \rho(0)(1-e^{-\xi u}) \, du  \leq s\,,
\end{align}
and the result follows.

\end{proof}

\begin{lemma}\label{bar_X}
Let $\bar X = \tilde X_{t+s}-\tilde X_t - \tfrac{1}{\ab A \ab} \int_A (\tilde X_{t+s} - \tilde X_t) \, dx$. Then for all $x \in A$ we have
\begin{align*}
\E [\bar X^2(x)] & \leq s + \tfrac{c}{\ab A \ab} e^{-2t}\,.
\end{align*}
\end{lemma}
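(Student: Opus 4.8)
The plan is to decompose the quantity $\E[\bar X^2(x)]$ into three pieces and estimate each of them using the second-moment bounds already established for $\star$-scale invariant increments. Writing $W(y) := \tilde X_{t+s}(y) - \tilde X_t(y)$ and $Y := \tfrac{1}{|A|}\int_A W(y)\,dy$, so that $\bar X(x) = W(x) - Y$ with $Y$ a single random variable not depending on $x$, I would start from
\begin{align*}
\E[\bar X^2(x)] = \E[W(x)^2] - 2\,\E[W(x)Y] + \E[Y^2],
\end{align*}
and observe that $\E[Y^2] \leq c\,|A|^{-1}e^{-2t}$ is exactly the content of Lemma \ref{y_lemma}.

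Next I would bound $\E[W(x)^2]$. As recorded in the proof of Lemma \ref{y_lemma}, $\E[W(x)W(y)] = \int_t^{t+s}\widetilde Q_u(x,y)\,du$, where $\widetilde Q_u$ is the zero-$V$-mean modification of $Q_u(x,y) = \rho(e^u(x-y))(1-e^{-\xi u})$ given in \eqref{tildeq}. Putting $x = y$: the diagonal term integrates to $\int_t^{t+s}(1-e^{-\xi u})\,du \leq s$ since $\rho(0)=1$; the single-average correction $-\tfrac{2}{|V|}\int_V Q_u(x',x)\,dx'$ is non-positive because $Q_u \geq 0$ and may simply be dropped; and the doubly-averaged term, after integrating in $u$, is $\leq c\,|V|^{-1}e^{-2t} \leq c\,|A|^{-1}e^{-2t}$ by the first inequality of Lemma \ref{y_lemma} applied with $V$ in place of $A$ (using $A \subset V$). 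This gives $\E[W(x)^2] \leq s + c\,|A|^{-1}e^{-2t}$.

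For the cross term I would write $\E[W(x)Y] = \tfrac{1}{|A|}\int_A\int_t^{t+s}\widetilde Q_u(x,y)\,du\,dy$ and expand $\widetilde Q_u$ via \eqref{tildeq} into four pieces, each of which is an integral of $\int_t^{t+s}\rho(e^u(\cdot))(1-e^{-\xi u})\,du$ against the non-negative kernel over products of copies of $A$ and $V$. Each piece is $\leq c\,|A|^{-1}e^{-2t}$ by the first inequality of Lemma \ref{y_lemma} (again invoking $A \subset V$ to replace $|V|^{-1}$ by $|A|^{-1}$), so $|\E[W(x)Y]| \leq c\,|A|^{-1}e^{-2t}$. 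Combining the three bounds yields $\E[\bar X^2(x)] \leq s + c\,|A|^{-1}e^{-2t}$.

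I do not expect any genuine obstacle here: the argument is a routine bookkeeping of the correction terms created by subtracting the $V$-average, all of which are either favourably signed or of the small order $|A|^{-1}e^{-2t}$. The only point worth noting is that the crude bound $\E[\bar X^2(x)] \leq 2\E[W(x)^2] + 2\E[Y^2]$ would only yield $2s$ in the leading term, so it is the cross-term estimate that is needed to obtain the sharp coefficient $1$ in front of $s$.
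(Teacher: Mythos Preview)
Your argument is correct, but the paper takes a shorter route. The key observation you miss is that the $V$-averages cancel in $\bar X$: since $\tilde X_{t+s}-\tilde X_t = (X_{t+s}-X_t) - \tfrac{1}{|V|}\int_V(X_{t+s}-X_t)$, subtracting the $A$-average of this quantity removes the constant $V$-average term entirely, so in fact $\bar X = (X_{t+s}-X_t) - \tfrac{1}{|A|}\int_A(X_{t+s}-X_t)\,dx$. Working with the non-tilde increment, the cross term $-\tfrac{2}{|A|}\int_A \int_t^{t+s}\rho(e^u(x-y))(1-e^{-\xi u})\,du\,dy$ is manifestly non-positive (since $\rho\ge 0$) and can be dropped outright; the diagonal term gives exactly $\le s$, and the squared-average term is $\le c|A|^{-1}e^{-2t}$ by Lemma \ref{y_lemma}. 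This avoids the four-piece expansion of $\widetilde Q_u$ you carry out for both $\E[W(x)^2]$ and the cross term. Your approach works and is a perfectly valid bookkeeping of all the $V$-average corrections, each of which is indeed $O(|A|^{-1}e^{-2t})$; the paper's cancellation simply makes the accounting trivial.
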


\begin{proof}
We can replace $\tilde X_t$, $\tilde X_{t+s}$ by $X_t$, $X_{t+s}$ due to cancellations:
\begin{align*}
\bar X &= X_{t+s} - X_t - \frac{1}{\ab V \ab} \int_V (X_{t+s}-X_t ) \, dx \\
& \quad - \frac{1}{\ab A \ab} \int_A \big( X_{t+s} - X_t  - \frac{1}{\ab V \ab} \int _V \big( X_{t+s} - X_t \big) \, dy \big) \, dx \\
&= X_{t+s} - X_t - \frac{1}{\ab V \ab} \int_V (X_{t+s}-X_t ) \, dx \\
& \quad - \frac{1}{\ab A \ab} \int (X_{t+s} - X_t) \, dx  + \frac{1}{\ab V \ab} \int_V (X_{t+s}-X_t) \, dy \\
&= X_{t+s} - X_t - \frac{1}{\ab A \ab} \int (X_{t+s} - X_t) \, dx\,.
\end{align*}
By \eqref{eq_trivial}
\begin{align*}
\E[(X_{t+s}-X_t)^2] &\leq s\,,
\end{align*}
and by the calculations done in Lemma \ref{y_lemma} we have
\begin{align*}
\E\Big[ \big(\tfrac{1}{\ab A \ab} \int_A ( X_{t+s} -  X_t) \, dx  \big)^2 \Big] & \leq \frac{c}{\ab A \ab} e^{-2t}(1-e^{-2s})\,.
\end{align*}
Thus, since $\rho$ is non-negative, we have 
\begin{align*}
\E [\bar X(x)^2] & \leq s + \frac{c}{\ab A \ab} e^{-2t}(1-e^{-2s}) - \frac{2}{\ab A \ab} \int_A \E[(X_{t+s}(x)-X_t(x))( X_{t+s}(y) - X_t(y)]\, dy  \\
&= s + \frac{c}{\ab A \ab} e^{-2t}(1-e^{-2s}) - \frac{2}{\ab A \ab} \int_A \int_t^{t+s} \rho(e^u( x-y ))(1-e^{-\xi u}) \, du \, dy \\
& \leq  s + \frac{c}{\ab A \ab}  e^{-2t}\,.
\end{align*}
\end{proof}

\subsubsection{Concentration inequality for zero mean almost $\star$-scale invariant GMC}

In this section we recall parts of the proof of the concentration inequality of \cite{LRV} (Proposition 6.2.) for GMC built out of a zero-mean almost $\star$-scale invariant field.

Let $X$ be an almost $\star$-scale invariant field with parameter $\xi$ and $V \subset \T^2$ a Borel set with positive Lebesgue measure. We denote 
\begin{align*}
\tilde X(z) &:= X(z) - \frac{1}{\ab V \ab} \int_V X(z) \, dz \,,
\end{align*}
where we again use the notation $dx = dv_{\T^d}(x)$.

\begin{lemma}\label{concentration}
Let $t$ be large enough so that $\ab V \ab \geq e^{-2t}$. Then for any Borel set $D \subset V$ with $\ab D \ab \geq e^{-2t}$ we have 
\begin{align*}
\P\big( \tilde M_{\gamma}^{(t)}(D) \leq \tfrac{\ab D \ab}{2} \big) &\leq  \exp \big( - c \ab D \ab e^{2t} \big)\,.
\end{align*}
\end{lemma}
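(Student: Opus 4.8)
The plan is to separate the zero-mean correction from the bulk of the field and then to exploit the finite range of dependence of $X^t$. First I would reduce to the ordinary GMC: writing $A_V:=\frac1{|V|}\int_V X^t\,dv_{\T^d}$ for the spatial average (a well-defined centred Gaussian variable, since $X^t\in H^{-s}$ for all $s>0$), one has $\tilde M^{(t)}_\gamma(dz)=e^{-\gamma A_V}M_{X^t,\gamma}(dz)$ with $M_{X^t,\gamma}$ the ordinary GMC of $X^t$, so $\E[M_{X^t,\gamma}(D)]=|D|$. Fixing any $s_0\in(0,\tfrac{\log 2}{\gamma})$, on the event $\{A_V\le s_0\}$ the inequality $\tilde M^{(t)}_\gamma(D)\le\frac{|D|}{2}$ forces $M_{X^t,\gamma}(D)\le\theta|D|$ with $\theta:=\tfrac12 e^{\gamma s_0}\in(\tfrac12,1)$ fixed. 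Moreover $\mathrm{Var}[A_V]=\frac1{|V|^2}\int_{V\times V}C^t_\xi\le\frac{c\,e^{-2t}}{|V|}\le\frac{c\,e^{-2t}}{|D|}$, the first bound being exactly the computation already carried out in Lemma~\ref{y_lemma} (which uses only $\rho\le1$, supported in $B(0,\tfrac12)$) and the second using $D\subset V$; hence $\P(A_V>s_0)\le\exp(-c\,s_0^2|D|e^{2t})$. It therefore suffices to prove $\P(M_{X^t,\gamma}(D)\le\theta|D|)\le\exp(-c|D|e^{2t})$ for a fixed $\theta<1$.

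For this I would use that $C^t_\xi(z,w)=0$ as soon as $d_{\T^d}(z,w)\ge\tfrac12 e^{-t}$, so the restrictions of the Gaussian field $X^t$ to sets that are pairwise $>\tfrac12 e^{-t}$ apart are independent. I would tile $\T^2$ by a grid of cells $\{Q_i\}$ of side $\ell\in(\tfrac12 e^{-t},e^{-t}]$ and colour them with a bounded number $K$ of colours so that same-coloured cells are at distance $\ge\ell>\tfrac12 e^{-t}$; then, for each colour $c$, the masses $M_i:=M_{X^t,\gamma}(D\cap Q_i)$ with $\mathrm{colour}(Q_i)=c$ are independent, and $M_{X^t,\gamma}(D)=\sum_i M_i$ with $\sum_i\mu_i=|D|$, where $\mu_i:=|D\cap Q_i|\le\ell^2\le e^{-2t}$. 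The main input is a Laplace-transform bound uniform over cells: from $\E[M_i]=\mu_i$ and a second-moment bound $\E[M_i^2]\le C e^{-\gamma^2 t}\mu_i^{2-\gamma^2/2}$ — obtained from $e^{\gamma^2 C^t_\xi(x,y)}\le 1+(2e^t|x-y|)^{-\gamma^2}$, the Riesz rearrangement inequality (the energy $\int_E\int_E|x-y|^{-\gamma^2}$ over $E\subset Q_i$ with $|E|=\mu_i$ is largest for a disc) and $\gamma^2<2$ — together with $\mu_i\le e^{-2t}$ so that $\E[M_i^2]/\mu_i\le C e^{-2t}$, the elementary inequality $e^{-x}\le 1-x+\tfrac{x^2}{2}$ gives, for $\lambda=\lambda_0 e^{2t}$ with $\lambda_0$ a small universal constant and any preassigned $\delta>0$, the bound $\E[e^{-\lambda M_i}]\le 1-(1-\delta)\lambda\mu_i\le e^{-(1-\delta)\lambda\mu_i}$. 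Multiplying over a colour class and applying a Chernoff bound, $\P\big(\sum_{i\in c}M_i\le\theta'|D_c|\big)\le e^{-\lambda(1-\delta-\theta')|D_c|}$ for any $\theta'<1-\delta$, where $|D_c|:=\sum_{i\in c}\mu_i$.

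Finally I would combine the colours. If $M_{X^t,\gamma}(D)\le\theta|D|$, then the total deficit $\sum_c\big(|D_c|-\sum_{i\in c}M_i\big)\ge(1-\theta)|D|$ forces some colour $c$ with $|D_c|\ge\frac{(1-\theta)|D|}{K}$ and $\sum_{i\in c}M_i\le\big(1-\tfrac{1-\theta}{K}\big)|D_c|=:\theta'|D_c|$. Choosing $\delta<\tfrac{1-\theta}{K}$ makes $\theta'<1-\delta$, so a union bound over the at most $K$ relevant colours together with the previous step yields $\P(M_{X^t,\gamma}(D)\le\theta|D|)\le K\exp(-c'|D|e^{2t})\le\exp(-c|D|e^{2t})$ once $t$ is large enough (using $|D|e^{2t}\ge1$), which combined with the reduction above would prove the lemma.

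I expect the delicate point to be the uniform per-cell Laplace estimate in the regime where $D\cap Q_i$ has very small Lebesgue measure, i.e.\ handling an \emph{arbitrary} Borel set $D$ with no geometric regularity: cells carrying little mass cannot simply be discarded, and one must verify that the second-moment bound, and hence the exponential decay, stay uniform both in the shape of $D\cap Q_i$ and in the scale $t$. This is precisely the part where the argument of \cite{LRV} is recalled, and its validity rests on the scale structure of $C^t_\xi$ together with the subcriticality $\gamma^2<2$.
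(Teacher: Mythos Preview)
Your reduction to the un-tilded measure $M_{X^t,\gamma}$ via the spatial average $A_V$, and the tiling/colouring argument exploiting the finite range $\tfrac12 e^{-t}$ of $C^t_\xi$, are both correct; for $\gamma\in(0,\sqrt 2)$ the proof goes through as written. The gap is the range of $\gamma$. The lemma is applied in the proof of Theorem~\ref{sd_ub} for every $\gamma\in(0,\sqrt{2d})$, i.e.\ in dimension two for all $\gamma\in(0,2)$, but your per-cell Laplace estimate rests on the second-moment bound $\E[M_i^2]<\infty$, which fails once $\gamma^2\ge 2$: since $C^t_\xi(x,y)\sim-\log|x-y|$ near the diagonal, $e^{\gamma^2 C^t_\xi}\sim|x-y|^{-\gamma^2}$ is not locally integrable in the plane when $\gamma^2\ge 2$, and no rearrangement helps. (You call $\gamma^2<2$ ``subcriticality'', but in two dimensions the subcritical GMC range is $\gamma^2<4$; what you are actually using is the stronger $L^2$-phase condition.) For $\gamma\in[\sqrt 2,2)$ there is no inequality of the form $e^{-x}\le 1-x+cx^p$ with $p\in(1,2)$ valid for small $x>0$, so the elementary step cannot simply be patched by lowering the moment exponent.

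The paper avoids any cell-by-cell moment computation. It first compares $\tilde M^{(t)}_\gamma$ with the GMC of the \emph{exact} $\star$-scale field $X^t_\infty$ via Kahane's convexity inequality, using the covariance bounds $-c_1 e^{-2t}\le \E[\tilde X^{(t)}(z)\tilde X^{(t)}(w)]-C^t_\infty(z,w)\le c_2 e^{-2t}$ and absorbing the resulting independent Gaussian factor at the cost of a harmless $e^{-Ce^{2t}}$ error. It then invokes the global Laplace-transform estimate of \cite[Appendix~C]{LRV},
\[
\E\big[\exp\big(-r(M_{X^t_\infty,\gamma}(D)-|D|)\big)\big]\;\le\;\exp\big(c_p\,r^p e^{2t(1-p)}|D|\big),\qquad p\in\big(1,\tfrac{4}{\gamma^2}\big),\ r\le e^{2t},
\]
which is proved through the exact scale recursion of $X^t_\infty$ and fractional (not second) moments, hence valid throughout $\gamma\in(0,2)$; a Chernoff bound with $r=\delta e^{2t}$ then gives the lemma. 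If you want to retain your independence picture, the missing ingredient is precisely a $p$-th-moment Laplace bound on each cell for some $p\in(1,4/\gamma^2)$, which amounts to reproving the LRV estimate locally rather than using $1-x+\tfrac{x^2}{2}$.
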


\begin{proof} 
We will use Kahane's convexity inequality (see for example Corollary 6.2. in \cite{RoVa}). We start by estimating the covariance kernel of $\tilde X^{t}$. We have
\begin{align*}
\E[ \tilde X^{t}(z) \tilde X^{t}(w)] &= \int_t^\infty \rho(e^u ( z-w ))(1-e^{-\xi u}) \, du - \frac{1}{\ab V \ab} \int_V \int_t^\infty \rho(e^u ( z'-w ))(1-e^{-\xi u}) \, du \, dz' \\
& \quad -  \frac{1}{\ab V \ab} \int_V \int_t^\infty \rho(e^u ( z-w' ))(1-e^{-\xi u}) \, du \, dw' \\
& \quad + \frac{1}{\ab V \ab^2} \int_{V^2}  \int_t^\infty \rho(e^u ( z'-w' ))(1-e^{-\xi u}) \, du \, dz' \, dw' 
\end{align*}
Recall that $\rho$ is non-negative. By Lemma \ref{y_lemma} we get the inequalities
\begin{align}\label{eq_tilde_X_cov}
 - c_1 e^{-2t} &\leq  \E[\tilde X^{(t)}(z) \tilde X^{(t)}(w)] - \int_t^\infty \rho(e^u ( z-w ) )(1-e^{-\xi u}) \, du \leq   c_2 e^{-2t}\,.
\end{align}
By using the left-hand side of the above inequality, we get that 
\begin{align}\label{eq_eta}
\tilde M_\gamma^{(t)}(D) &= \int_D e^{\gamma \tilde X^t(z) - \frac{\gamma^2}{2} \E[X^{t} (z)^2]} \, dz  \nonumber \\
&= \int_D e^{\gamma \tilde X^t(z) - \frac{\gamma^2}{2} \E[\tilde{X}^{t} (z)^2]} e^{\frac{\gamma^2}{2} \E[ \tilde{X}^{(t)}(z)^2 - X^{(t)}(z)^2 ]} \, dz \nonumber \\
& \geq e^{-c_1  e^{-2t} } M_{\tilde X^{(t)},\gamma}(D) \nonumber \\
&\geq \frac{1}{1+ \eta}  M_{\tilde X^{(t)},\gamma}(D)
\end{align}
for any $\eta \grea 0$ once $t$ is large enough. Now we are ready to use Kahane's inequality. Let $X_\infty^t$ be the rough part of an exactly $\star$-scale invariant field, i.e. a Gaussian field with the covariance kernel
\begin{align*}
\E[X_\infty^t(z) X_\infty^t(w)] &= \int_t^\infty \rho(e^u (z-w) ) \, du\,.
\end{align*}
We denote the corresponding GMC measure by
\begin{align*}
M_{X_\infty^t,\gamma} (dz) &:= e^{\gamma X_\infty^t(z) - \frac{\gamma^2}{2} \E[ X_\infty^t(z)^2] } \, dz \,.
\end{align*}
Let $N \sim \mathcal{N}(0,1)$ be an independent standard Gaussian. We first apply \eqref{eq_eta}, and then Kahane's convexity inequality (which is valid by \eqref{eq_tilde_X_cov}) to get
\begin{align*}
\E \big[ \exp \big(- (1+\eta) r \tilde M^{(t)}_{\gamma}(D) \big)\big] & \leq \E \big[ \exp \big(- r  M_{\tilde X^{(t)},\gamma}(D) \big)\big] \\
& \leq \E \big[ \exp \big(-r e^{\gamma \sqrt{c_2  e^{-2t}} N - \frac{\gamma^2}{2}c_2  e^{-2t}} M_{X_\infty^t,\gamma}(D) \big)\big] \\
& \leq \E \big[ \exp \big(-r  e^{-\gamma c- \frac{\gamma^2}{2}c_2  e^{-2t}} M_{X_\infty^t,\gamma}(D) \big) + \Pb(\ab N \ab \geq c (c_2  e^{-2t})^{-1/2}) \big] \\
& \leq \E \big[ \exp \big(-r   e^{-\gamma c - \frac{\gamma^2}{2} } M_{X_\infty^t,\gamma}(D) \big) + \Pb(\ab N \ab \geq c (c_2  e^{-2t})^{-1/2}) \big] \\
& \leq \E \big[ \exp \big(- \tfrac{9r }{10}  M_{X_\infty^t,\gamma}(D) \big) + \Pb(\ab N \ab \geq c (c_2  e^{-2t})^{-1/2}) \big]
\end{align*}
for a suitable $c$. The second term is bounded by $C_1 e ^{-C_2 e^{2t}}$ for some $C_1, C_2$. By Appendix C in \cite{LRV} we have 
\begin{align*}
\E  \big[ \exp \big(- r  \big( M_{X_\infty^t,\gamma}(D) - \ab D \ab \big) \big) \big] & \leq \exp \big( c_p r^p e^{2t(1-p)} \ab D \ab  \big)
\end{align*}
for all $p \in (1,4/\gamma^2)$ and $r \leq e^{2t}$. It follows that 
\begin{align*}
\E \big[ \exp \big(- (1+\eta)r  \big( \tilde M^{(t)}_{\gamma}(D) - \ab D \ab \big) \big) \big] & \leq  \E \big[ \exp \big(- \tfrac{9r}{10} M_{X_\infty^t,\gamma}(D) \big) + C_1 e^{-C_2 e^{2t}} \big]   e^{(1+\eta)r \ab D \ab }  \\
& \leq e^{\tilde c_p r^p e^{2t(1-p)} |D| + (\frac{1}{10} + \eta)r \ab D \ab     } + C_1 e^{(1+\eta)r \ab D \ab - C_2 e^{2t}}\,,
\end{align*}
and
\begin{align*}
\P\big( \tilde M_{\gamma}^{(t)}(D) \leq \tfrac{\ab D \ab}{2} \big) &= \P\big( -r \tilde M_{\gamma}^{(t)}(D) \geq  -r \tfrac{\ab D \ab}{2} \big) \\
&\leq e^{r \frac{\ab D \ab }{2}} \E \big[ \exp \big( -r \tilde M_{\gamma}^{(t)}(D)   \big) \big ] \\
&= e^{- r \frac{\ab D \ab }{2}} \E \big[ \exp \big( -r (\tilde M_{\gamma}^{(t)}(D)   - \ab D \ab ) \big) \big] \\
& \leq e^{- r \frac{\ab D \ab }{2}} e^{\tilde c_p r^p e^{2t(1-p)} |D| + \frac{\frac{1}{10} + \eta }{1 + \eta}  r \ab D \ab } + C_1 e^{(1+\eta)r \ab D \ab - C_2 e^{2t}}\,.
\end{align*}
Now, for $r = \delta e^{2t}$ for small enough $\delta$ we get that
\begin{align*}
\P\big( \tilde M_{\gamma}^{(t)}(D) \leq \tfrac{\ab D \ab}{2} \big) &\leq \exp \big( (- (\tfrac{1}{2} - \tfrac{\frac{1}{10} + \eta }{1 + \eta}) \delta \ab D \ab   + \tilde c_p \delta^p |D|  ) e^{2t} \big) + C_1 \exp \big( ( (1+\eta)\delta  \ab D \ab - \tfrac{C_2}{|V|} |D|) e^{2t} \big)\,,
\end{align*}
which is of the wanted form for small enough $\delta$ and $\eta$.

\end{proof}

\subsection{Lower bound}

In this section we will derive a lower bound for $\P(M_{\tilde Z,\gamma}(\tor) < \eps)$ for a rather general class of $\log$-correlated fields $Z$ on $\tor$. We will apply the Donsker--Varadhan theorem (see e.g. Proposition 2.3. in \cite{BuDu}).
	\begin{definition}
	Let $\Q$ and $\P$ be probability measures such that $\Q$ is absolutely continuous with respect to $\P$ (denoted by $\Q \ll \P$). We define the relative entropy between $\Q$ and $\P$ by
	\begin{align*}
	\opn{Ent}(\Q,\P) &:= \E_\Q[\log \frac{d\Q}{d\P} ]\,,
	\end{align*}
	where $\frac{d\Q}{d\P}$ is the Radon--Nikodym derivative.
	\end{definition}  
  
  \begin{theorem}[Donsker--Varadhan]\label{thm:donsker}
	Let $\Pb$ be a probability measure on a Polish space $E$ and $k: E \to \R$ a random variable that is bounded from below. Then 
	\begin{align*}
	- \log \E_\Pb[e^{-k}] &= \inf_{\mathbb{Q} \ll \Pb } \big( \E_{\mathbb{Q}}[k] + \opn{Ent}(\mathbb{Q},\mathbb{P}) \big)\,.
	\end{align*}
  \end{theorem}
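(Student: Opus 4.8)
The plan is to prove the two inequalities separately: the inequality ``$\le$'' by a Jensen-type estimate valid for every competitor $\Q\ll\P$, and the inequality ``$\ge$'' by exhibiting the Gibbs (tilted) measure as an explicit minimizer. Since $k$ is real-valued and bounded from below, say $k\geq m$, we have $0<e^{-k}\leq e^{-m}$ pointwise, so $\Lambda:=\E_\P[e^{-k}]\in(0,\infty)$; hence $-\log\Lambda$ is a finite real number and no degeneracy arises on the left-hand side. Note also that $\E_\Q[k]\in[m,+\infty]$ for any probability measure $\Q$, and $\opn{Ent}(\Q,\P)\in[0,+\infty]$ for any $\Q\ll\P$ by nonnegativity of relative entropy (itself a consequence of Jensen's inequality).

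First I would establish $-\log\Lambda\leq\E_\Q[k]+\opn{Ent}(\Q,\P)$ for an arbitrary $\Q\ll\P$. If $\E_\Q[k]=+\infty$ or $\opn{Ent}(\Q,\P)=+\infty$ there is nothing to prove, so assume both are finite and write $f:=d\Q/d\P$. Using $\Q(\{f=0\})=0$,
\[
\Lambda=\E_\P[e^{-k}]\geq\E_\P\big[e^{-k}\,\mathbf{1}_{\{f>0\}}\big]=\E_\Q\big[e^{-k}/f\big],
\]
and then Jensen's inequality for the convex decreasing function $-\log$ under the probability measure $\Q$ gives
\[
-\log\Lambda\leq-\log\E_\Q\big[e^{-k}/f\big]\leq\E_\Q\big[-\log(e^{-k}/f)\big]=\E_\Q[k]+\E_\Q[\log f]=\E_\Q[k]+\opn{Ent}(\Q,\P).
\]
Taking the infimum over $\Q\ll\P$ yields the inequality ``$\le$''.

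For the reverse inequality I would introduce the probability measure $\Q^\star$ with $d\Q^\star/d\P=e^{-k}/\Lambda$, which is well defined since $0<\Lambda<\infty$ and satisfies $\Q^\star\ll\P$. A direct computation, legitimate because $k$ is real-valued and $\log\Lambda$ is a finite constant, gives
\[
\opn{Ent}(\Q^\star,\P)+\E_{\Q^\star}[k]=\E_{\Q^\star}\!\Big[\log\tfrac{e^{-k}}{\Lambda}+k\Big]=\E_{\Q^\star}\big[-k-\log\Lambda+k\big]=-\log\Lambda,
\]
where the cancellation of $k$ is valid because $\E_{\Q^\star}[k]<\infty$ (otherwise $\opn{Ent}(\Q^\star,\P)$ would be $-\infty$, contradicting its nonnegativity). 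Hence the infimum is attained at $\Q^\star$ and equals $-\log\E_\P[e^{-k}]$, which completes the argument.

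Since the hypotheses are exactly those of Proposition 2.3 in \cite{BuDu}, in practice one simply cites that reference; the sketch above is the short self-contained proof of the classical Gibbs variational principle. The only genuine subtlety — and thus the main point requiring care — is the measure-theoretic bookkeeping: ensuring every expectation lies in $(-\infty,+\infty]$, dispatching the cases $\E_\Q[k]=+\infty$ or $\opn{Ent}(\Q,\P)=+\infty$ before invoking Jensen so that the step is applied only with finite quantities, and verifying $\Q(\{f=0\})=0$ so that the identity $\E_\P[e^{-k}\mathbf{1}_{\{f>0\}}]=\E_\Q[e^{-k}/f]$ holds. There is no deeper obstacle.
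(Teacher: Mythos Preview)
Your proof is correct. The paper itself does not prove this statement: it is stated as Theorem~\ref{thm:donsker} with only a pointer to Proposition~2.3 in \cite{BuDu}, exactly as you note in your final paragraph. So there is nothing to compare against --- you have supplied the standard self-contained argument (Jensen for one direction, the Gibbs tilt $d\Q^\star/d\P=e^{-k}/\Lambda$ for the other) where the paper is content to cite the literature. The measure-theoretic bookkeeping you flag (finiteness of $\Lambda$, handling $\E_\Q[k]=+\infty$ or $\opn{Ent}=+\infty$ separately, $\Q(\{f=0\})=0$, and finiteness of $\E_{\Q^\star}[k]$ via boundedness of $x\mapsto xe^{-x}$ on $[m,\infty)$) is all in order.
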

  
  \begin{remark}
  Let $N_1 \sim \mathcal{N}(0,\sigma_1^2)$ and $N_2 \sim \mathcal{N}(0,\sigma_2^2)$ be two Gaussian random variables. The relative entropy between them is given by
  \begin{align}\label{eq:entropy_of_gauss}
  \opn{Ent}(N_1,N_2) &= \log \frac{\sigma_2}{\sigma_1} + \frac{1}{2} \big( \frac{\sigma_1^2}{\sigma_2^2} - 1 \big)\,.
  \end{align}
  
  \end{remark}
	\begin{lemma}\label{lem:dv_bound}
	Let $Z$ be a logarithmically correlated Gaussian field satisfying the conditions of Theorem \ref{thm:decomposition}. Denote $\tilde Z = Z - \int_{\T^d} Z \, dv_{\T^d}$. Then we have that 
	\begin{align}\label{eq:dv_ub}
- \log \E [ e^{- R^{d + \frac{\gamma^2}{2}} M_{\tilde Z,\gamma}(\T^d)} ] \leq c R^d
\end{align}
	\end{lemma}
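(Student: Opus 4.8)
The plan is to apply the Donsker--Varadhan formula of Theorem~\ref{thm:donsker} to the nonnegative random variable $k:=R^{d+\gamma^2/2}M_{\tilde Z,\gamma}(\T^d)$. It gives, for every probability measure $\Q\ll\P$,
\[
-\log\E\big[e^{-R^{d+\gamma^2/2}M_{\tilde Z,\gamma}(\T^d)}\big]\;\leq\;R^{d+\gamma^2/2}\,\E_{\Q}\big[M_{\tilde Z,\gamma}(\T^d)\big]+\opn{Ent}(\Q,\P),
\]
so the lemma reduces to exhibiting, for each integer $N\geq2$, a measure $\Q_N\ll\P$ with $\E_{\Q_N}[M_{\tilde Z,\gamma}(\T^d)]\leq c\,N^{-\gamma^2/2}$ and $\opn{Ent}(\Q_N,\P)\leq c\,N^d$; choosing $N=\lceil R\rceil$ then bounds the right-hand side by $R^d+2^dR^d=O(R^d)$. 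So the whole problem is the construction of $\Q_N$.

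To construct $\Q_N$ I would first invoke the decomposition $Z\overset{d}{=}X^{\mathbf t}+H$ of Theorem~\ref{thm:decomposition}: the field $X^{\mathbf t}$ has covariance $C_\xi^{\mathbf t}$, which depends only on $d_{\T^d}(z,w)$ and is therefore translation invariant, with a strictly positive Fourier symbol $\sigma_n^2:=\widehat{C_\xi^{\mathbf t}}(n)$, while $H$ is independent of $X^{\mathbf t}$ with Hölder-continuous covariance. Since $C_\xi^{\mathbf t}(z,w)=-\log d_{\T^d}(z,w)+O(1)$, one has $\sum_{1\leq|n|<N}\sigma_n^2=\log N+O(1)$, and each dyadic shell carries bounded variance, $\sum_{2^{j}\leq|n|<2^{j+1}}\sigma_n^2\leq v$. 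Set $J=\lfloor\log_2 N\rfloor$ and, for $1\leq m<N$, let $q_m:=2^{-\frac12(J-\lfloor\log_2 m\rfloor)-\frac12}\in(0,1]$, while $q_m:=1$ for $m\geq N$. Let $\Q_N$ be the law of $X^{\mathbf t}_N+H$, where $H$ keeps its $\P$-law independently and $X^{\mathbf t}_N$ is the centred Gaussian field with Fourier symbol $q_{|n|}^2\sigma_n^2$; concretely, leave every Fourier mode of frequency $\geq N$ untouched and damp each coarser dyadic shell by a factor that degenerates to $0$ as the scale coarsens. Only finitely many independent modes are changed, each by a nonzero factor, so $\Q_N\sim\P$, and by additivity of relative entropy together with \eqref{eq:entropy_of_gauss} (each term being $\leq -\log q_{|n|}$ times a fixed constant),
\[
\opn{Ent}(\Q_N,\P)\;\leq\;c\sum_{1\leq|n|<N}(-\log q_{|n|})\;\leq\;c\sum_{j=0}^{J}2^{jd}(J-j+1)\;=\;c\,2^{Jd}\sum_{i=0}^{J}(i+1)2^{-id}\;\leq\;c\,N^{d}.
\]

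For the mass bound I would compute $\E_{\Q_N}[M_{\tilde Z,\gamma}(\T^d)]$ by hand, using that $M_{\tilde Z,\gamma}$ carries the $\P$-normalization $\tfrac{\gamma^2}{2}\E_\P[\tilde Z_\eps(z)^2]$. After mollification at scale $\eps$ the expectation of the GMC density is $\exp\!\big(\tfrac{\gamma^2}{2}(\E_{\Q_N}[\tilde Z_\eps(z)^2]-\E_\P[\tilde Z_\eps(z)^2])\big)$; the $\tilde H$-contributions cancel because $H$ has the same law under $\Q_N$ and $\P$, and by translation invariance of $X^{\mathbf t}$ the surviving exponent is independent of $z$, so, letting $\eps\to0$,
\[
\E_{\Q_N}\big[M_{\tilde Z,\gamma}(\T^d)\big]\;=\;v_{\T^d}(\T^d)\,\exp\Big(-\tfrac{\gamma^2}{2}\sum_{1\leq|n|<N}(1-q_{|n|}^2)\sigma_n^2\Big).
\]
Finally $\sum_{1\leq|n|<N}(1-q_{|n|}^2)\sigma_n^2=\sum_{1\leq|n|<N}\sigma_n^2-\sum_{j\leq J}2^{-(J-j)-1}\sum_{2^j\leq|n|<2^{j+1}}\sigma_n^2\geq(\log N-O(1))-v\geq\log N-O(1)$, which yields the desired $\E_{\Q_N}[M_{\tilde Z,\gamma}(\T^d)]\leq c\,N^{-\gamma^2/2}$.

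The main obstacle is conceptual: a Cameron--Martin shift of the field is useless here. Shifting $\tilde Z$ by a constant does nothing, and for any shift $\psi$ one has $\E_{\Q}[M_{\tilde Z,\gamma}(\T^d)]=\int_{\T^d}e^{\gamma\tilde\psi}\,dv_{\T^d}\geq v_{\T^d}(\T^d)$ by Jensen (with $\tilde\psi$ the zero-average part of $\psi$), so the GMC mass cannot be made small by translating — this is exactly why removing the average is essential. One is forced to \emph{contract} the field; and contracting all modes $|n|<N$ by a single fixed factor loses a power of $R$ in the final exponent, so reaching the sharp bound $R^d$ requires the scale-dependent damping above. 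The mechanism is that the dyadic mode counts grow geometrically, so the convergence of $\sum_i(i+1)2^{-id}$ keeps the entropy at $O(N^d)$ even though the factors $q_m$ degenerate on the coarse shells, while the pointwise variance of the field still drops by $(1-o(1))\log N$. The remaining points are routine: $\sum_{1\leq|n|<N}\sigma_n^2=\log N+O(1)$ and the shell estimate follow from the form of $C_\xi^{\mathbf t}$, and commuting $\E_{\Q_N}$ with the $\eps\to0$ limit is standard since the $\Q_N$-field is still logarithmically correlated (only finitely many low modes were rescaled).
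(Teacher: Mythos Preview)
Your proof is correct and follows essentially the same strategy as the paper: apply Donsker--Varadhan and choose the trial measure to be a Gaussian contraction of the low-frequency modes of the decomposed field, so that the pointwise variance drops by $\log N + O(1)$ (giving $\E_{\Q_N}[M_{\tilde Z,\gamma}]\leq cN^{-\gamma^2/2}$) while the entropy stays $O(N^d)$. The paper uses the Laplacian decomposition of Proposition~\ref{lem:coupling-2} and the linear damping $q_{|k|}=|k|/R$, whereas you use the $\star$-scale decomposition of Theorem~\ref{thm:decomposition} and a dyadic damping $q_m\sim 2^{-(J-j)/2}$; both profiles satisfy the two required estimates, and your explicit remark that Cameron--Martin shifts cannot work (by Jensen) is a nice conceptual addition not present in the paper's write-up.
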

	\begin{proof}
	We give the proof for $d=2$. The generalisation to $d \geq 2$ is straightforward.
	
	We have $Z = X +  H$, where $X$ has the covariance operator $\PNN ( - \Delta^{-1} + \Delta^{-1-\xi})$ and $H$ is an independent Gaussian field with $H \in H^{1+s}(\T^2)$ almost surely for some $s>0$. Let $\mu$ be the law of $X$. Now
	\begin{align}\label{eq:donsker_bound}
	- \log \E [ e^{- R^{2 + \frac{\gamma^2}{2}} M_{\tilde Z,\gamma}(\T^2)} ] &= - \log  \E [ \E_{\mu} [ e^{- R^{2 + \frac{\gamma^2}{2}} M_{\tilde X+\tilde H,\gamma}(\T^2)}  ]] \nonumber \\
	& \leq \E[ -\log \E_{\mu} [e^{- R^{2 + \frac{\gamma^2}{2}} M_{\tilde X+ \tilde H,\gamma}(\T^2)}  ] ] \nonumber \\
	&= \E[ \inf_{\nu \ll \mu} \big( R^{2+\frac{\gamma^2}{2}} \E_\nu [M_{\tilde X+\tilde H,\gamma}(\T^2)] + \opn{Ent}(\nu,\mu) \big) ]\,.
	\end{align}
	We will choose a suitable $\nu \ll \mu$ to obtain an upper bound.

  The Fourier transform of $X$ is given by 
	\begin{align*}
	  \hat X(k) &= \frac{\alpha_k}{\sqrt{2\pi} |k|} \big( 1 - \tfrac{1}{|k|^{2\xi}} \big)^{1/2}  \mathbf{1}_{|k| >N} \,, \quad k \in \Z^2\,,
\end{align*}	  
	where $(\alpha_k)_k$ is a sequence of i.i.d. standard Gaussians. We will choose $\nu$ to be the law of the Gaussian field $Y: \T^2 \to \R$, defined by
  \begin{align}\label{eq:Y_field}
	\hat Y(k) &= \begin{cases}
	\frac{|k|}{R} \hat X(k)\,, \quad 0 \leq  |k| \leq R \,, \\
	\hat X(k) \,, \quad |k| > R \,.
	\end{cases}
  \end{align}
  and we assume that $R > N$. Note that only finitely many Fourier modes of $\mu$ and $\nu$ differ, so $\nu$ is absolutely continuous with respect to $\mu$. Now it suffices to bound the quantity
  \begin{align*}
  R^{2+ \frac{\gamma^2}{2}} \E \Big[ \int_{\T^2} e^{\gamma (\tilde Y + \tilde H) - \frac{\gamma^2}{2} \E[\tilde Z^2] } \, dv_{\T^2} \Big] + \opn{Ent}(\nu,\mu)\,,
  \end{align*}
  where $Y$ and $H$ are independent. This has been done in Lemmas \ref{lem:entropy} and \ref{lem:E_nu} and the claim follows.
	\end{proof}
	
	\begin{corollary}
	There exists $c_\gamma > 0$ such that 
	\begin{align*}
	\P(M_{\tilde Z,\gamma}(\tor) < \eps) & \geq \tfrac 12 \exp \big( - c_\gamma \eps^{- \frac{4}{\gamma^2}} \big)\,.
	\end{align*}
	\end{corollary}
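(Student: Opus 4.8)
The plan is to derive the lower bound directly from the Donsker--Varadhan upper bound of Lemma \ref{lem:dv_bound} by an elementary splitting of the Laplace transform of the total mass, optimised over the auxiliary scale $R$. Write $M := M_{\tilde Z,\gamma}(\tor)$. First I would exponentiate \eqref{eq:dv_ub} (with $d=2$) to get, for every $R>N$,
\begin{align*}
\E\big[e^{-R^{2+\frac{\gamma^2}{2}}M}\big] \geq e^{-cR^2}\,.
\end{align*}
On the other hand, splitting the expectation over $\{M<\eps\}$ and $\{M\geq\eps\}$ and using $e^{-R^{2+\frac{\gamma^2}{2}}M}\leq 1$ on the first event and $e^{-R^{2+\frac{\gamma^2}{2}}M}\leq e^{-R^{2+\frac{\gamma^2}{2}}\eps}$ on the second gives
\begin{align*}
\E\big[e^{-R^{2+\frac{\gamma^2}{2}}M}\big] \leq \P(M<\eps) + e^{-R^{2+\frac{\gamma^2}{2}}\eps}\,.
\end{align*}
Combining the two displays yields $\P(M<\eps)\geq e^{-cR^2}-e^{-R^{2+\frac{\gamma^2}{2}}\eps}$.

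The only remaining point is to choose $R=R(\eps)$ so that the subtracted term is at most half of the main term. Taking $R=(2c/\eps)^{2/\gamma^2}$ makes $R^{\frac{\gamma^2}{2}}\eps=2c$, hence $R^{2+\frac{\gamma^2}{2}}\eps=2cR^2$ and $e^{-R^{2+\frac{\gamma^2}{2}}\eps}=e^{-2cR^2}\leq\tfrac12 e^{-cR^2}$ as soon as $cR^2\geq\log2$; since $R\to\infty$ as $\eps\to0$, this (and also $R>N$) holds for all $\eps$ small enough. For such $\eps$,
\begin{align*}
\P(M<\eps) \geq \tfrac12 e^{-cR^2} = \tfrac12\exp\big(-c(2c)^{4/\gamma^2}\,\eps^{-\frac{4}{\gamma^2}}\big)\,,
\end{align*}
which is the assertion with $c_\gamma=c(2c)^{4/\gamma^2}$. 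To obtain the statement for all $\eps>0$ one enlarges $c_\gamma$: on any bounded range $\eps\in[\eps_1,K]$ the probability $\P(M<\eps)$ is bounded below by a positive constant (by monotonicity together with the positive lower bound already established for small $\eps$), while $\tfrac12\exp(-c_\gamma\eps^{-4/\gamma^2})\leq\tfrac12\exp(-c_\gamma K^{-4/\gamma^2})\to0$ as $c_\gamma\to\infty$; and once $K$ is fixed so large that $\P(M<K)\geq\tfrac12$, the bound is trivial for all $\eps\geq K$.

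There is essentially no obstacle in this final step: all the genuine analysis has been packaged into Lemma \ref{lem:dv_bound} — the choice of the competitor field $Y$ in \eqref{eq:Y_field} together with the control of its entropy cost and of $\E_\nu[M_{\tilde X+\tilde H,\gamma}(\tor)]$, carried out in Lemmas \ref{lem:entropy} and \ref{lem:E_nu}. The one thing to watch is the exponent bookkeeping: the linear-in-$R^2$ rate produced by Donsker--Varadhan, played against the $R^{2+\gamma^2/2}$ in the Laplace transform, forces the optimal scale $R\asymp\eps^{-2/\gamma^2}$, which is precisely what produces the exponent $-4/\gamma^2=-2d/\gamma^2$ matching, up to the value of the constant, the upper bound of Theorem \ref{sd_ub}.
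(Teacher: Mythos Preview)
Your argument is correct and is essentially the same as the paper's: both exponentiate the Donsker--Varadhan bound \eqref{eq:dv_ub}, split the Laplace transform at a threshold, and optimise the auxiliary scale so that the subtracted term is at most half the main one. The paper parametrises via $\eps=aR^{-\gamma^2/2}$ with $a>c$, while you fix $\eps$ and take $R=(2c/\eps)^{2/\gamma^2}$ (i.e.\ $a=2c$), which is the same computation; your additional paragraph extending the bound to all $\eps>0$ is a harmless add-on, since the paper's own proof only claims the inequality for $\eps$ small enough.
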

	\begin{proof}
	Let $a>0$. We have 
	\begin{align*}
	\E[e^{ - R^{2 + \frac{\gamma^2}{2}} M_{\tilde Z,\gamma}(\tor)}] &= \E[e^{ - R^{2 + \frac{\gamma^2}{2}} M_{\tilde Z,\gamma}(\tor)} ( \mathbf{1}_{M_{\tilde Z,\gamma} < a R^{- \frac{\gamma^2}{2}}} + \mathbf{1}_{M_{\tilde Z,\gamma}(\tor) \geq a R^{- \frac{\gamma^2}{2}} })]\\
	& \leq \P(M_{\tilde Z,\gamma}(\tor) < a R^{- \frac{\gamma^2}{2}} ) + \exp \big(- a R^{2 } \big) \,.
	\end{align*}
	By Lemma \ref{lem:dv_bound} we get
	\begin{align*}
	\P(M_{\tilde Z,\gamma}(\tor) < a R^{- \frac{\gamma^2}{2}}) & \geq \exp \big(- c R^2 \big) - \exp \big(- aR^{2 }  \big) \,.
	\end{align*}
	For $a > c$ and $\eps = aR^{- \frac{\gamma^2}{2}}$ this implies that 
	\begin{align*}
	\P(M_{\tilde Z,\gamma}(\tor) < \eps) & \geq \exp \big( - c a^{\frac{4}{\gamma^2}} \eps^{ - \frac{4}{\gamma^2}} \big)  - \exp \big( - a^{\frac{4}{\gamma^2}+1} \eps^{- \frac{4}{\gamma^2}} \big)  \geq \tfrac 12 \exp \big( - c a^{\frac{4}{\gamma^2}} \eps^{ - \frac{4}{\gamma^2}} \big)
	\end{align*}
	for $\eps$ small enough.
	\end{proof}

     \begin{lemma}\label{lem:entropy}
     We have
	\begin{align*}
  \mathrm{Ent}(\nu,\mu)\leq C R^{2}\,.
	\end{align*}
  \end{lemma}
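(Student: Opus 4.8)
The plan is to reduce the relative entropy to a sum over Fourier modes and then bound that sum by a convergent integral. First I would note that, by \eqref{eq:Y_field}, the law $\nu$ of $Y$ is obtained from the law $\mu$ of $X$ by rescaling the Fourier coefficient in each mode $k$ with $0 \leq |k| \leq R$ by the factor $|k|/R$ and leaving the modes $|k| > R$ untouched; since $\hat X(k) = 0$ for $|k| \leq N$, the two (product Gaussian) laws differ only in the finitely many modes with $N < |k| \leq R$, so $\nu \ll \mu$ and the relative entropy factorises over those modes. Applying the one-dimensional Gaussian formula \eqref{eq:entropy_of_gauss} with variance ratio $\sigma_1^2/\sigma_2^2 = |k|^2/R^2$ for each of the two real degrees of freedom carried by a conjugate pair $\{k,-k\}$, this gives
\begin{align*}
\mathrm{Ent}(\nu,\mu) = \sumtwo{k \in \Z^2}{N < |k| \leq R} \Big( \log \frac{R}{|k|} + \frac12 \Big( \frac{|k|^2}{R^2} - 1 \Big) \Big)\,,
\end{align*}
with every summand nonnegative (as it must be).

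Next I would show this sum is $O(R^2)$. Writing $g(s) := \log(1/s) + \tfrac12(s^2 - 1)$ for $s \in (0,1]$, one checks that $g \geq 0$, that $g$ is decreasing, and that $\int_0^1 g(s)\, s\, ds$ is a finite elementary integral (its value is $\tfrac18$). I would then compare the lattice sum $\sumtwo{k \in \Z^2}{N < |k| \leq R} g(|k|/R)$ with its integral analogue $\int_{N < |x| \leq R} g(|x|/R)\, dx = 2\pi R^2 \int_{N/R}^1 g(s)\, s\, ds \leq 2\pi R^2 \int_0^1 g(s)\, s\, ds$, controlling the Riemann-sum defect by the estimates $g(|x|/R) = O(\log R)$ and $|\nabla_x g(|x|/R)| \leq 1/|x|$ valid on $\{|x| \geq 1\}$, which make the defect of lower order (at most $O(R\log R) = o(R^2)$). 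This yields $\mathrm{Ent}(\nu,\mu) \leq C R^2$ for all large $R$, as claimed.

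The only real (if minor) obstacle is getting the sharp power $R^2$ rather than $R^2\log R$: the crude bound $g(|k|/R) \leq \log R + O(1)$ summed over the $O(R^2)$ admissible modes only gives $O(R^2\log R)$, which is not enough for the intended application in Lemma \ref{lem:dv_bound}, so one genuinely has to exploit that only $O(j)$ modes satisfy $|k| \approx j$ and that $\sum_{j \leq R} j\log(R/j) = O(R^2)$ — equivalently, the convergence of $\int_0^1 g(s)\, s\, ds$ — in order to absorb the logarithm. The remaining ingredients (additivity of relative entropy over independent Gaussian coordinates, the variance bookkeeping, and the elementary integral) are entirely routine.
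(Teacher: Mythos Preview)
Your proposal is correct and follows essentially the same route as the paper: factorise the relative entropy over Fourier modes using \eqref{eq:entropy_of_gauss}, obtain the explicit sum $\sum_{|k|\leq R}\big(\log\tfrac{R}{|k|}+\tfrac12(\tfrac{|k|^2}{R^2}-1)\big)$, and then show this is $O(R^2)$ by comparison with a convergent two-dimensional integral. The paper simply drops the nonpositive term $\tfrac12(\tfrac{|k|^2}{R^2}-1)$ and rescales $k\mapsto k/R$ to recognise the remaining sum, divided by $R^2$, as a Riemann sum for $\int_{|x|\leq 1}\log\tfrac{1}{|x|}\,d^2x$; your version keeps the full summand $g(|k|/R)$ and compares with $2\pi R^2\int_0^1 g(s)\,s\,ds$, but this is the same mechanism.
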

  \begin{proof}
By independence of the Fourier modes, we can write 
\begin{align*}
\nu = \bigotimes_{k\in \mathbb{Z}^2 } \nu_{k} \qquad \mu= \bigotimes_{k \in \mathbb{Z}^2 } \mu_k
\end{align*}
where $\nu_k$ and $\mu_{k}$ are the marginals on the $k$th Fourier mode. Then the relative entropy is 
\begin{align*}
\mathrm{Ent}(\nu,\mu)=\sum_{k\in\mathbb{Z}^2 } \mathrm{Ent}(\nu_k,\mu_k).
\end{align*}
This can be computed explicitly, since $\mu_k, \nu_k$ are one-dimensional Gaussians. Since $\mu_k, \nu_k$ coincide for $|k|\grea R$, only the terms with $\ab k \ab \leq R$ are non-zero. We get
\begin{align*}
\opn{Ent}(\nu,\mu) &= \sum_{k \in \Z^2, 0 <  |k| \leq R} \Big( \log \frac{ \E [\hat X(k)^2]^{1/2}  }{\E[\hat Y(k)^2]^{1/2}} + \frac{1}{2} \big( \frac{\E[\hat Y(k)^2]}{\E[\hat X(k)]^2} -1  \big) \Big) \\
&= \sum_{0 <  |k| \leq R} \big( \log \tfrac{R}{|k|} + \tfrac{1}{2} \big( \tfrac{|k|^2}{R^2} -1 \big) \big) \\
& \leq \sum_{0 <  |k| \leq R} \log \tfrac{R}{|k|}\,.
\end{align*}
  By a relabelling we get 
  \[ \sum_{ 0 \less \ab k \ab \leq R} \log \tfrac{R}{\ab k \ab } = \sum_{k \in\frac {1}{R} \mathbb{Z}^{2}, 0 \less |k|\leq 1}  \log(\tfrac{1}{|k|}). \]
  Dividing this by $R^{2}$ we end up with 
  \[
   \sum_{k \in\frac {1}{R} \mathbb{Z}^{2}, 0 \less |k| \leq 1}   \log(\tfrac{1}{|k|}) \frac{1}{R^2}\,,
  \]
  which converges to 
  $\int_{|x|\leq 1} -\log |x| \, d^2x <\infty$  as $R\to \infty$.  
  \end{proof}
  
  \begin{lemma}\label{lem:mu_nu_bounds}
  Let $\PR$ be the projections to Fourier modes with  $\ab k \ab \leq R$. Then 
  \begin{itemize}
  \item[(i)] 
  \begin{align*}
  \E[( \PR Y(0)^2] &\leq C\,.
  \end{align*}
  \item[(ii)]
  \begin{align*}
  \E[ ( \PR X(0))^2 ] & \geq \log \tfrac R N\,.
  \end{align*}
  \end{itemize}
  \end{lemma}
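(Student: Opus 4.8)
The plan is to compute both variances directly on the Fourier side. Writing $\PR Y(0)=\sum_{|k|\le R}\hat Y(k)$, this is a centred Gaussian, and by independence of the Fourier modes of $X$ (hence of $Y$), up to the reality constraint $\hat Y(-k)=\overline{\hat Y(k)}$, one gets $\E[(\PR Y(0))^2]=\sum_{|k|\le R}\E[|\hat Y(k)|^2]$, and similarly $\E[(\PR X(0))^2]=\sum_{|k|\le R}\E[|\hat X(k)|^2]$; note that $\hat X(0)=\hat Y(0)=0$ since $X$ charges only modes with $|k|>N$. From the explicit representation $\hat X(k)=\frac{\alpha_k}{\sqrt{2\pi}\,|k|}(1-|k|^{-2\xi})^{1/2}\mathbf{1}_{|k|>N}$ with $(\alpha_k)$ i.i.d.\ standard Gaussian, we have $\E[|\hat X(k)|^2]=\frac{1-|k|^{-2\xi}}{2\pi|k|^2}\mathbf{1}_{|k|>N}$.

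For (i), plugging $\hat Y(k)=\frac{|k|}{R}\hat X(k)$ (for $|k|\le R$) into the above makes the $|k|^{-2}$ cancel the $|k|^2$, leaving $\E[|\hat Y(k)|^2]=\frac{1-|k|^{-2\xi}}{2\pi R^2}\mathbf{1}_{|k|>N}\le\frac{1}{2\pi R^2}$, so
\[
\E[(\PR Y(0))^2]\ \le\ \frac{\#\{k\in\Z^2:\ |k|\le R\}}{2\pi R^2}\ \le\ C,
\]
because a disc of radius $R$ contains at most $\pi(R+\sqrt2)^2\le CR^2$ lattice points. This is the easy half.

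For (ii), the same formula gives $\E[(\PR X(0))^2]=\frac{1}{2\pi}\sum_{N<|k|\le R}\frac{1-|k|^{-2\xi}}{|k|^2}$, so it suffices to bound $\sum_{N<|k|\le R}|k|^{-2}$ from below by $2\pi\log(R/N)$. The $|k|^{-2\xi}$ term is negligible: it vanishes for $|k|=1$ and $\sum_{|k|>N}|k|^{-2-2\xi}<\infty$, so it costs only an $N$-dependent additive constant. For the main term I would compare the lattice sum with the radial integral $\int_{N\le|x|\le R}|x|^{-2}\,dx=2\pi\log(R/N)$, in the same spirit as the Riemann-sum computation in the proof of Lemma~\ref{lem:entropy}: for $x$ in the unit cell $Q_k$ centred at $k\in\Z^2$ one has $|k|\le|x|+\tfrac{\sqrt2}{2}$, and the disjoint cells with $N<|k|\le R$ cover the annulus $\{N+\tfrac{\sqrt2}{2}<|x|<R-\tfrac{\sqrt2}{2}\}$, whence
\[
\sum_{N<|k|\le R}\frac{1}{|k|^2}\ \ge\ \int_{\{N+\frac{\sqrt2}{2}<|x|<R-\frac{\sqrt2}{2}\}}\frac{dx}{(|x|+\frac{\sqrt2}{2})^2}\ \ge\ 2\pi\log\frac{R}{N}-C.
\]
Dividing by $2\pi$ yields $\E[(\PR X(0))^2]\ge\log(R/N)$ up to an $R$-independent additive constant, which is all that Lemma~\ref{lem:E_nu} uses; the inequality as literally stated then holds once $R$ is large enough relative to the fixed $N$.

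The one step that actually needs care is the integral comparison in (ii): the argument must be arranged so that passing from the lattice sum to the radial integral does not erode the coefficient in front of $\log R$ (this is the point of keeping the $\tfrac{\sqrt2}{2}$-shift of the annulus explicit above). Everything else — the discretization error, the $|k|^{-2\xi}$ correction, and the lattice-point count in (i) — contributes only $O(1)$ and is routine.
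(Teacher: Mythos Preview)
Your approach is essentially the paper's: both parts are computed directly on the Fourier side, (i) by counting lattice points in a disc of radius $R$ and (ii) by comparing the lattice sum $\sum_{N<|k|\le R}|k|^{-2}$ to the radial integral $\int_{N<|x|<R}|x|^{-2}\,dx=2\pi\log(R/N)$. If anything you are more careful than the paper, which in (ii) simply writes the sum $\ge$ the integral in one line without tracking either the $-|k|^{-2-2\xi}$ correction or the discretisation error.

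One small slip at the end: from $\E[(\PR X(0))^2]\ge\log(R/N)-C$ you cannot recover the \emph{literal} inequality $\ge\log(R/N)$ by taking $R$ large, since the additive gap $C$ does not shrink with $R$. What is true---and what you correctly observe first---is that Lemma~\ref{lem:E_nu} only needs a bound of the form $\E[(\PR X(0))^2]\ge\log R - C'$, which your estimate provides.
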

  \begin{proof} \begin{itemize} \item[(i)]
  \begin{align*}
   \mathbb{E}[( \PR Y(0))^2] &= \tfrac{1}{2\pi} \sum_{k \in \Z^2, 0<|k|\leq R} \tfrac{|k|^2}{R^2} \Big(  \big( \tfrac{1}{|k|^2} - \tfrac{1}{|k|^{2+2\xi}} \big) \mathbf{1}_{|k| >N}  \Big)  \leq \tfrac{1}{2\pi} \sum_{0<|k| \leq R}  \tfrac{1}{R^2}  \leq C \,.
   \end{align*}
   \item[(ii)]
   \begin{align*}
   \E[(\PR X(0))^2] &= \tfrac{1}{2\pi} \sum_{k \in \Z^2, |k| \leq R} \Big(  \big( \tfrac{1}{|k|^2} - \tfrac{1}{|k|^{2+2\xi}} \big) \mathbf{1}_{|k| > N}  \Big)  \geq \tfrac{1}{2\pi} \int_{N<|x|<R} \tfrac{1}{|x|^2} \, d^2x = \log \tfrac{R}{N} \,.
   \end{align*}
  \end{itemize}
  
  \end{proof}
  
\begin{lemma}\label{lem:E_nu}
Let $Y$ be the Gaussian field defined in \eqref{eq:Y_field}. Then we have
\begin{align*}
\E \Big[ \int_{\T^2} e^{\gamma (\tilde Y + \tilde H) - \frac{\gamma^2}{2} \E[\tilde Z^2]} \, dv_{\T^2} \Big] \leq C R^{-\frac{\gamma^2}{2}}\,.
\end{align*}
\end{lemma}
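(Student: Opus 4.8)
The plan is to use that the high-frequency part of $Y$ is literally the high-frequency part of $X$, so that the multiplicative renormalisation hidden in the GMC exponential cancels once we take expectations, leaving only a bounded low-frequency discrepancy that is exactly what Lemma \ref{lem:mu_nu_bounds} controls. First I would write $\PR$ for the projection onto Fourier modes $\ab k\ab\leq R$ and $\PRR:=I-\PR$. From \eqref{eq:Y_field} we have $\hat Y(0)=0$, so $\tilde Y=Y=\PR Y+\PRR Y$ with $\PRR Y=\PRR X$, while $\PR Y$ is a finite-dimensional, hence smooth, Gaussian field depending only on the modes $\alpha_k$, $N<\ab k\ab\leq R$. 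I would then introduce
\begin{align*}
W:=\PRR X+\tilde H=\tilde Z-\PR X\,,
\end{align*}
which is log-correlated (its rough part $\PRR X$ has a kernel differing from that of $-\Delta^{-1}$ by a smooth function, and $\tilde H$ is H\"older), so that for $\gamma\in(0,2)$ its GMC $M_{W,\gamma}$ is well defined with $\E[M_{W,\gamma}(dx)]=dv_{\T^2}(x)$.

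The key point is that the $\E[\tilde Z^2]$-normalisation splits: since $\PR X$ is independent of $\PRR X$ (disjoint Fourier support) and of $\tilde H$ (independence of $X$ and $H$), one has $\E[\tilde Z(x)^2]=\E[(\PR X(x))^2]+\E[W(x)^2]$. Combining this with $\tilde Y+\tilde H=\PR Y+W$, the integrand factorises,
\begin{align*}
\int_{\T^2}e^{\gamma(\tilde Y+\tilde H)-\frac{\gamma^2}{2}\E[\tilde Z^2]}\,dv_{\T^2}=\int_{\T^2}e^{\gamma\PR Y(x)-\frac{\gamma^2}{2}\E[(\PR X(x))^2]}\,M_{W,\gamma}(dx)\,.
\end{align*}
Now $\PR Y$ depends only on the modes with $N<\ab k\ab\leq R$, whereas $W$ depends on the modes with $\ab k\ab>R$ and on $H$, so $\PR Y$ and $W$ are independent. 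Taking expectations, conditioning on $\PR Y$, and using $\E[M_{W,\gamma}(dx)]=dv_{\T^2}(x)$ together with $\E[e^{\gamma\PR Y(x)}]=e^{\frac{\gamma^2}{2}\E[(\PR Y(x))^2]}$, I obtain
\begin{align*}
\E\Big[\int_{\T^2}e^{\gamma(\tilde Y+\tilde H)-\frac{\gamma^2}{2}\E[\tilde Z^2]}\,dv_{\T^2}\Big]=\int_{\T^2}e^{\frac{\gamma^2}{2}(\E[(\PR Y(x))^2]-\E[(\PR X(x))^2])}\,dv_{\T^2}(x)\,.
\end{align*}
By translation invariance of $X$ the integrand is constant in $x$, and $v_{\T^2}(\T^2)=1$, so the right-hand side equals $\exp\!\big(\tfrac{\gamma^2}{2}(\E[(\PR Y(0))^2]-\E[(\PR X(0))^2])\big)$; applying Lemma \ref{lem:mu_nu_bounds}(i) and (ii) gives $\E[(\PR Y(0))^2]\leq C$ and $\E[(\PR X(0))^2]\geq\log(R/N)$, hence the bound $e^{\gamma^2C/2}(R/N)^{-\gamma^2/2}=C'R^{-\gamma^2/2}$, recalling that $N$ is fixed.

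The only thing requiring genuine care — and I expect it to be routine rather than a real obstacle — is that $Y$ and $\tilde Z$ are not defined pointwise, so the displayed identities should be carried out at the level of mollifications $Y_\delta,\tilde Z_\delta$: mollification is diagonal in Fourier, so it commutes with $\PR$ and $\PRR$ and preserves all the independence used; the renormalisations of the common rough part $\PRR X_\delta$ and of $\tilde H_\delta$ cancel against the matching contributions to $\E[\tilde Z_\delta(x)^2]$; and one passes to $\delta\to0$ using $\E[(\PR X_\delta(0))^2]\to\E[(\PR X(0))^2]$, uniform convergence of the smooth factor $e^{\gamma\PR Y_\delta(x)}$, and the standard $L^1$-convergence of the mollified GMC (valid since $\gamma^2<4$). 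The real content of the argument is simply the bookkeeping that makes the $\E[\tilde Z^2]$ normalisation — rather than $\E[(\tilde Y+\tilde H)^2]$ — collapse exactly onto the low-frequency comparison supplied by Lemma \ref{lem:mu_nu_bounds}.
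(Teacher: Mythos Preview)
Your proof is correct and follows essentially the same approach as the paper: split $\tilde Y+\tilde H$ into the low-frequency part $\PR Y$ and the high-frequency part $\PRR X+\tilde H$, use independence so that the expectation factors and the high-frequency renormalisation cancels, and then apply Lemma \ref{lem:mu_nu_bounds} to the remaining low-frequency discrepancy $\E[(\PR Y(0))^2]-\E[(\PR X(0))^2]$. The only cosmetic difference is that you package $\PRR X+\tilde H$ as a single GMC $M_{W,\gamma}$ with $\E[M_{W,\gamma}(dx)]=dv_{\T^2}(x)$, whereas the paper keeps $\PRR Y$ and $\tilde H$ separate and remarks that the $H$-dependence cancels; your added paragraph on mollification is a welcome rigorisation of what the paper leaves implicit.
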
  
\begin{proof}
We can write (note that $\tilde Y = Y$)
\begin{align*}
\int_{\T^2} e^{\gamma (\tilde Y + \tilde H) - \frac{\gamma^2}{2} \E[\tilde Z^2]} \, dv_{\T^2}  = e^{-\frac{\gamma^2}{2} \E[( \PR X(0))^2] }\int e^{\gamma  ( \PR Y + \tilde H) - \frac{\gamma^2}{2} \E[\tilde H^2]} e^{\gamma \PRR Y- \frac{\gamma^2}{2} \E[ (\PRR X )^2 ]} \, dv_{\T^2}
\end{align*}
where $\PRR$ is the projection on Fourier-modes with $ \ab k \ab \grea R$. Now by independence of $\PR Y$ and $ \PRR Y $ we have (note that the dependency on $H$ cancels out)
\begin{align*}
\E \Big[ \int_{\T^2} e^{\gamma (\tilde Y + \tilde H) - \frac{\gamma^2}{2} \E[\tilde Z^2]} \, dv_{\T^2} \Big]  &= e^{-\frac{\gamma^2}{2} \E[ (\PR X(0))^2 ]} \int  \mathbb{E} [\exp(\gamma \PR Y) ] \E e^{ \gamma \PRR Y - \frac{\gamma^2}{2} \E[(\PRR X)^2} \, dv_{\T^2}   \\
&=  e^{-\frac{\gamma^2}{2} \E[ ( \PR X)^2 ]} \int e^{\frac{\gamma^2}{2} \E[ ( \PR Y)^2 ] + \frac{\gamma^2}{2} \E[ ( \PRR Y )^2] - \frac{\gamma^2}{2} \E[ ( \PRR X)^2]} \, dv_{\T^2} \\
&= e^{-\frac{\gamma^2}{2} \E[ (\PR X(0))^2 ]} \int e^{\frac{\gamma^2}{2} \E[ ( \PR Y)^2 ]} \, dx \\
& \leq e^{-\frac{\gamma^2}{2} \log R + \frac{\gamma^2}{2} C} \\
&= e^{\frac{\gamma^2}{2} C} R^{-\frac{\gamma^2}{2}}\,,
\end{align*}
where we used Lemma \ref{lem:mu_nu_bounds} to obtain the upper bound.
\end{proof}

\section{Massless Sinh--Gordon model on a torus}\label{shg_section}

As an application of our small deviations bounds for GMC, we derive upper and lower bounds for the free energy of the massless Sinh--Gordon model on the two-dimensional torus.

\subsection{Gaussian Free Field on the two-dimensional torus}\label{s_gff}

Let $R \grea 0$ and $\toR$ be the $R$-torus
\begin{align*}
\toR &:= \R^2 / (R \Z)^2\,.
\end{align*}
We denote by $g_R$ the flat metric on $\toR$. Let $\psi_R: \toR \to \tor$ be given by $\psi_R(z) = \frac{z}{R}$. Then it holds that
\begin{align}\label{g_scaling}
g_R &= R^2 \psi_R^* g_1\,,
\end{align}
where $\psi_R^*$ denotes the pullback. Let $\tilde X_R: \toR \to \R$ be the zero-mean Gaussian free field on $\toR$. Its covariance kernel is given by the zero-mean Green function 
\begin{align}\label{green}
\E[\tilde X_R(z) \tilde X_R(w)] &= G_R(z,w) = -\log d_{\toR}(z,w) + h_R(z,w)\,,
\end{align}
where $d_R$ is the distance function corresponding to $g_R$ and $h_R$ is a smooth function given by
\begin{align*}
h_R(z,w) &=  m_R( \log d_\toR(z,\cdot)) + m_R(\log d_\toR(\cdot,w)) - m_R(\log d_\toR(\cdot, \cdot) )\,,
\end{align*}
where $m_R$ denotes taking the average over $\toR$.

The scaling property \eqref{g_scaling} implies that
\begin{align}\label{gff_scaling}
\tilde X_R(z) \overset{d}{=} \tilde X_1(\tfrac{z}{R})\,,
\end{align}
which is equivalent to the scaling property of the covariance kernel
\begin{align*}
G_R(z,w) = G_1(\tfrac zR, \tfrac wR)\,.
\end{align*}

We denote the corresponding GMC by
\begin{align}\label{gmc}
M_{\gamma,R}(dz) 
&:= \lim_{\eps \to 0}  e^{\gamma \tilde X_{R,\eps}(z) - \frac{\gamma^2}{2} s_{R,\eps}(z)}  d^2z \,,
\end{align}
where
\begin{align*}
\tilde X_{R,\eps}(z) &:= \tfrac{1}{2\pi} \int_0^{2\pi} \tilde X_R(z+\eps e^{i \theta}) \, d \theta\,, \\
s_{R,\eps}(z) &:= \E[\tilde X_{R,\eps}(z) \tilde X_{R,\eps}(z)] - h_R(z,z)\,.
\end{align*}
The limit \eqref{gmc} exists weakly in probability, see \cite{Ber}. By using \eqref{gff_scaling} we get the GMC scaling relation
\begin{align}\label{gmc_scaling}
\int_{\toR} M_{\gamma,R}(dz) 
&= R^{2 + \frac{\gamma^2}{2}} \int_{\tor} M_{\gamma,1}(dz)\,,
\end{align}
see for example Proposition 2.2 in \cite{KuOi20} for a detailed proof.

\subsection{Path integral formulation of the massless Sinh--Gordon model}

\noindent The classical action functional of the model on $(\toR, g_R)$ given by
\begin{align*}
S(\varphi) &= \int_{\toR} \big( \tfrac 12 \ab d \varphi \ab_R^2 + \tfrac{Q}{4\pi} K_R \varphi + \mu e^{\gamma \varphi} + \mu e^{-\gamma \varphi} \big) dv_{R}\,,
\end{align*}
where $v_R$ is the volume form, $K_R$ the scalar curvature and $\ab \cdot \ab_R$ the norm induced by the Riemannian metric $g_R$. As $g_R$ is flat, we have $K_R \equiv 0$, and we get the coordinate expression
\begin{align}\label{action_coordinates}
S(\varphi) &= \int_{[0,R]^2} \big( \tfrac 12 \ab \nabla \varphi(z) \ab^2 + \mu e^{\gamma \varphi(z)} + \mu e^{-\gamma \varphi(z)} \big) \, dz \,.
\end{align}
\begin{remark}
For what follows we could just as well use the more general action
\begin{align*}
\int_{[0,R]^2} \big( \tfrac 12 \ab \nabla \varphi(z) \ab^2 + \mu_1 e^{\gamma_1 \varphi(z)} + \mu_2 e^{-\gamma_2 \varphi(z)} \big) \, dz\,,
\end{align*}
$\mu_i > 0$, $\gamma_i \in (0,2)$, but we will work with \eqref{action_coordinates} to keep the notation simple.
\end{remark}
\noindent The massless free field on $\toR$ is given by $\varphi(z) = c+\tilde X_R(z)$, where $\tilde X_R$ is the zero mean GFF on $\toR$ and $c \in \R$ is distributed according to the Lebesgue measure. Let us now consider the partition function of the massless Sinh--Gordon model, which we obtain by tilting the law of the massless free field by the GMC terms
\begin{align*}
Z_R &= \E_R \Big[ \int_\R  e^{ - \mu e^{\gamma c} M_{\gamma,R}(\toR) - \mu e^{-\gamma c} M_{-\gamma,R}(\toR)  } \, dc \Big] \,,
\end{align*}
where $\E_R$ is the expectation with respect to $\tilde X_R$. By the GMC scaling \eqref{gmc_scaling}, we get that 
\begin{align*}
Z_R &= \E_1 \Big[ \int_\R e^{- \mu R^{2 + \frac{\gamma^2}{2}} (e^{\gamma c} M_{\gamma,1}(\tor) + e^{-\gamma c} M_{-\gamma,1}(\tor)) } \, dc \Big]\,.
\end{align*}
By a change of variables $e^{\gamma c} \sqrt{M_{\gamma,1}(\tor)} \to  \sqrt{ M_{-\gamma,1}(\tor)}c' $ we get that
\begin{align*}
Z_R &= \tfrac 1 \gamma \E_1 \Big[ \int_0^\infty e^{- \mu R^{2+\frac{\gamma^2}{2}} \sqrt{M_{\gamma,1}(\tor) M_{-\gamma,1}(\tor)} (c+c^{-1}) } \tfrac{dc}{c} \Big]\,.
\end{align*}
We want to derive upper and lower bounds for (the logarithm of) this quantity.

\begin{remark}
We also have the identity
\begin{align}\label{eq_bessel}
Z_R &= \tfrac 1 \gamma \E_1 \Big[ K_0 \big( 2\mu  R^{2 + \frac{\gamma^2}{2}} \sqrt{M_{\gamma,1}(\tor) M_{-\gamma,1}(\tor)} \big) \Big]\,,
\end{align}
where $K_0$ is the modified Bessel function of the second kind with index $0$.
\end{remark}

\subsection{Free energy bounds: proof of Theorem \ref{fe_bounds}}

In this section we give a proof of Theorem \ref{fe_bounds}.

\subsubsection{Lower bound}

\begin{proposition}\label{prop_Z_ub}
 For large enough $R$ we have the bound
\begin{align}\label{log_Z_lb}
- \log Z_R & \geq f_\gamma \mu^{\frac{2}{\gamma Q}} R^2
\end{align}
for some $f_\gamma > 0$.
\end{proposition}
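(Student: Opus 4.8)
The plan is to combine the Bessel-function representation \eqref{eq_bessel} of $Z_R$ with the small deviations upper bound of Theorem \ref{sd_ub} applied to the chaos measures $M_{\pm\gamma,1}(\tor)$. Write $W := \sqrt{M_{\gamma,1}(\tor)\,M_{-\gamma,1}(\tor)}$ and $t := 2\mu R^{\gamma Q} = 2\mu R^{2+\gamma^2/2}$, so that $Z_R = \tfrac1\gamma\,\E[K_0(tW)]$. I will use the elementary facts that $K_0$ is positive and strictly decreasing, that $K_0(x)\leq C e^{-x/2}$ for $x\geq 1$, and that $K_0(x)\leq C(1+\log_+(1/x))$ for all $x>0$. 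The field $\tilde X_1$ is the zero-mean GFF on $\tor$, whose covariance $G_1$ has the form $-\log d_{\tor}+h_1$ with $h_1$ smooth; adding an independent centred Gaussian constant to $\tilde X_1$ gives a non-degenerate log-correlated field whose mean-zero part is $\tilde X_1$, so Theorem \ref{sd_ub} applies (with $d=2$), and since $\tilde X_1\overset{d}{=}-\tilde X_1$ the same estimate holds for $M_{-\gamma,1}$. Hence there are $c,\eps_0>0$ with $\P(M_{\pm\gamma,1}(\tor)<\eps)\leq\exp(-c\,\eps^{-4/\gamma^2})$ for $\eps<\eps_0$ (the normalization used for $M_{\gamma,1}$ in Section \ref{s_gff} differs from the $M_{\tilde Z,\gamma}$ of Theorem \ref{sd_ub} only by a factor bounded above and below on the compact torus, so the estimate transfers with adjusted constants). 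Since $\{W<x\}\subseteq\{M_{\gamma,1}(\tor)<x\}\cup\{M_{-\gamma,1}(\tor)<x\}$, the union bound gives $\P(W<x)\leq 2\exp(-c\,x^{-4/\gamma^2})$ for $x<\eps_0$.

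Next I would split $\E[K_0(tW)]$ at the threshold $\delta:=t^{-\gamma^2/(4+\gamma^2)}$, chosen so that $t\delta=t^{4/(4+\gamma^2)}=t^{2/(\gamma Q)}=:T$, which is precisely the scale at which $e^{-tw}$ and $\exp(-c\,w^{-4/\gamma^2})$ balance. On $\{W\geq\delta\}$, monotonicity of $K_0$ and $T\geq 1$ (valid for $R$ large) give $K_0(tW)\leq K_0(T)\leq C e^{-T/2}$. On $\{W<\delta\}$ I would use $K_0(tW)\leq C(1+\log_+(1/(tW)))$ together with the fact that $\log_+(1/(tW))$ is nonzero only when $W<1/t<\delta$, so
\begin{align*}
\E[K_0(tW)\ind_{W<\delta}] \leq C\,\P(W<\delta) + C\,\E[\log_+(1/(tW))].
\end{align*}
The first term is $\leq 2C\exp(-c\,\delta^{-4/\gamma^2})=2C\exp(-c\,t^{2/(\gamma Q)})$. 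For the second, a layer-cake computation gives $\E[\log_+(1/(tW))]=\int_0^\infty \P(W<e^{-u}/t)\,du\leq 2\int_0^\infty\exp(-c\,t^{4/\gamma^2}e^{4u/\gamma^2})\,du\leq C\exp(-c\,t^{4/\gamma^2})$, which is smaller than the other contributions since $4/\gamma^2>2/(\gamma Q)$ for $\gamma\in(0,2)$. Collecting the three pieces, $Z_R=\tfrac1\gamma\E[K_0(tW)]\leq C\exp(-c'\,t^{2/(\gamma Q)})$ for $R$ large; since $t^{2/(\gamma Q)}=(2\mu)^{2/(\gamma Q)}R^2$, taking logarithms and absorbing the additive constant and the factor $2^{2/(\gamma Q)}$ into the prefactor yields $-\log Z_R\geq f_\gamma\,\mu^{2/(\gamma Q)}R^2$ for $R$ large.

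The step I expect to require the most care is the calibration of the threshold $\delta$: a naive split such as $\delta\sim 1/t$ leaves only a constant bound on the large-$W$ event, while $\delta$ chosen too large makes $\P(W<\delta)$ decay too slowly; only the exponent $\gamma^2/(4+\gamma^2)$ makes both pieces decay at the rate $\exp(-c\,t^{2/(\gamma Q)})$ demanded by Theorem \ref{fe_bounds}, and this balance is essentially forced by the exponent-matching in the variational computation above. The remaining points are routine: verifying the $K_0$ bounds, the layer-cake estimate, and checking that $\delta<\eps_0$ once $R$ is large enough so that the small deviations bound is applicable throughout.
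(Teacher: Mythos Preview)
Your proposal is correct and follows essentially the same strategy as the paper: both reduce to the small deviations upper bound of Theorem \ref{sd_ub} for $M_{\pm\gamma,1}(\tor)$ via the union bound, and both split the expectation of $K_0(tW)$ at the threshold $\delta=t^{-\gamma^2/(4+\gamma^2)}=R^{-\gamma^2/2}$ (your optimal $\delta$ coincides with the paper's first cut). The only technical difference is how the logarithmic singularity of $K_0$ near $0$ is handled: the paper introduces a second threshold at $R^{-2-\gamma^2/2+\delta}$ and uses H\"older's inequality together with finiteness of $\E[(\log|\mathcal{M}|)^q]$, whereas you control $\E[\log_+(1/(tW))]$ directly by a layer-cake computation. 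Your variant is slightly cleaner and makes the exponent balancing more transparent; the paper's version has the small advantage of not needing the uniform $K_0$ bounds but otherwise yields the same estimate.
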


\begin{proof}
We write the proof for $\mu=1$, as the result for general $\mu>0$ follows by replacing $R$ by $\mu^{\frac{1}{\gamma Q}} R$.

We denote $\mathcal{M} := \sqrt{M_{\gamma,1}(\tor)M_{-\gamma,1}(\tor)}$. First we split the expectation into two parts using the events $\{\mathcal{M} \leq   R^{- \frac{\gamma^2}{2}}\}$ and $\{ \mathcal{M} \grea  R^{- \frac{\gamma^2}{2}}\}$. We estimate the two terms separately. First, for the latter we have
\begin{align*}
\E_1 \Big[ \mathbf{1}_{ \{   \mathcal{M} \grea  R^{-\frac{\gamma^2}{2}} \} } \int_0^\infty e^{-  R^{2 + \frac{\gamma^2}{2}}  (c+c^{-1}) \mathcal{M} } \tfrac{1}{c} \, dc \Big] & \leq \E_1 \Big[ \mathbf{1}_{ \{  \mathcal{M} \grea   R^{-\frac{\gamma^2}{2}} \} } \int_0^\infty e^{-  R^2  (c+c^{-1}) } \tfrac{1}{c} \, dc\Big] \\
&\leq \int_0^\infty e^{-   R^2  (c+c^{-1}) } \tfrac{1}{c} \, dc\,.
\end{align*}
By Laplace's method, this integral behaves as $C  R^{-1} e^{-2  R^2}$ when $R$ is large, so for sufficiently large $R$ we have 
\begin{align*}
\int_0^\infty e^{-   R^2  (c+c^{-1}) } \tfrac{1}{c} \, dc & \leq e^{-2   R^2}\,.
\end{align*}
Next, we deal with the event $\{ \mathcal{M} \leq  R^{-\frac{\gamma^2}{2}} \}$. We split this further with the event $\{   \mathcal{M} \grea  R^{-2 - \frac{\gamma^2}{2} + \delta } \}$ where $\delta \in (0, \frac{\gamma^2}{2})$. We have
\begin{align*}
\E_1 \Big[ \mathbf{1}_{ \{ R^{-2 - \frac{\gamma^2}{2} + \delta } \les  \mathcal{M} \leq  R^{-\frac{\gamma^2}{2}} \} } \int_0^\infty e^{-  R^{2 + \frac{\gamma^2}{2}}  (c+c^{-1}) \mathcal{M} } \tfrac{dc}{c} \Big] & \leq \P \big(   \mathcal{M} \leq  R^{- \frac{\gamma^2}{2}} \big) \int_0^\infty e^{-  R^{\delta} (c+c^{-1}) } \tfrac{dc}{c}  \\
& \leq \P \big(   \mathcal{M} \leq  R^{- \frac{\gamma^2}{2}} \big) e^{-2  R^\delta} \\
& \leq \P \big(   \mathcal{M} \leq  R^{- \frac{\gamma^2}{2}} \big)\,,
\end{align*}
where the second inequality follows by applying Laplace's method.

For the last remaining term, we estimate
\begin{align*}
\E_1 \Big[ \mathbf{1}_{ \{  \mathcal{M} \leq  R^{-2 - \frac{\gamma^2}{2} + \delta }  \} } \int_0^\infty e^{- R^{2 + \frac{\gamma^2}{2}}  (c+c^{-1}) \mathcal{M} } \tfrac{1}{c} \, dc \Big] & \leq \E_1 \Big[ \mathbf{1}_{ \{  \mathcal{M} \leq R^{-2 - \frac{\gamma^2}{2} + \delta }  \} } \int_0^\infty e^{-    (c+c^{-1}) \mathcal{M} } \tfrac{1}{c} \, dc \Big] \\
&=  \E_1 \Big[ \mathbf{1}_{ \{   \mathcal{M} \leq R^{-2 - \frac{\gamma^2}{2}+ \delta}  \} } K_0 ( 2 \mathcal{M} ) \Big]\,,
\end{align*}
where in the equality we used the identity \eqref{eq_bessel}. By using the fact that $K_0$ diverges logarithmically near the origin, we can estimate 
\begin{align*}
\E_1 \Big[ \mathbf{1}_{ \{   \mathcal{M} \leq  R^{-2 - \frac{\gamma^2}{2}+ \delta}  \} } K_0 ( 2  \mathcal{M} ) \Big] & \leq C \E_1 \Big[ \mathbf{1}_{   \{  \mathcal{M} \leq R^{-2 - \frac{\gamma^2}{2}+ \delta} \} }  \ab \log \big( 2  \mathcal{M} \big) \ab  \Big] \\
&\leq  C \P (  \mathcal{M} \leq R^{-2 - \frac{\gamma^2}{2}+ \delta} )^{\frac{1}{p}} \E_1 [(\log \ab 2 \mathcal{M} \ab )^q]^{\frac{1}{q}}\,,
\end{align*}
where $p,q \grea 1$ satisfy $\frac{1}{q} + \frac{1}{p} = 1$. The expected value in the above expression is finite by simple arguments using the facts that $M_{\gamma,1}(\tor)$ and $M_{-\gamma,1}(\tor)$ have finite negative moments and some finite positive moments. Now, we have shown that
\begin{align}\label{Z_ub}
Z_R & \leq  e^{-2  R^2}   + \P \big(  \mathcal{M} \leq  R^{- \frac{\gamma^2}{2}} \big)    + C \P ( \mathcal{M} \leq  R^{-2 - \frac{\gamma^2}{2}+ \delta} )^{\frac{1}{p}}\,.
\end{align}
Note that we have the union bound and $\gamma \to -\gamma$ symmetry
\begin{align}\label{eq_union_bound}
\Pb \big(\sqrt{M_{\gamma,1}(\tor) M_{-\gamma,1}(\tor)} \less \eps \big) & \leq \Pb \big( \{ M_{\gamma,1}(\tor) \less \eps \} \cup \{ M_{-\gamma,1}(\tor) \less \eps \} \big) \nonumber \\
& \leq 2 \Pb \big( M_{\gamma,1}(\tor) \less \eps \big)\,.
\end{align}
Now by Lemma \ref{lem:product_gmc} we get that 
\begin{align}\label{Z_ub2}
Z_R & \leq e^{-2R^2} + 2 e^{-c_\gamma R^2} + e^{-c R^k}
\end{align}
for some $k>2$. Now the claim follows.

\end{proof}

\begin{lemma}\label{lem:product_gmc}
We have
	\begin{align*}
	\Pb \big( M_{\gamma,1}(\tor) \less \eps \big) & \leq \exp \big(- c \eps^{- \frac{4}{\gamma^2}} \big)
	\end{align*}
	for some $c>0$ and $\eps$ small enough.
\end{lemma}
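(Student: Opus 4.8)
The plan is to obtain the bound as a direct consequence of the upper bound in Theorem~\ref{sd_ub}, specialised to $d=2$, in which case the exponent $\frac{2d}{\gamma^2}$ equals $\frac{4}{\gamma^2}$ and the admissible range $\gamma\in(0,\sqrt{2d})$ is precisely $\gamma\in(0,2)$. The only point that needs attention is that Theorem~\ref{sd_ub} is stated for the centred field $\tilde Z$ attached to a \emph{non-degenerate} log-correlated field $Z$, whereas the zero-mean GFF $\tilde X_1$ underlying $M_{\gamma,1}$ is itself degenerate: its covariance $G_1$ annihilates constants. So the first step is to realise $\tilde X_1$ as the mean-zero part of a genuinely non-degenerate field.

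To do this I would take $\mathcal N\sim\mathcal N(0,1)$ independent of $\tilde X_1$ and set $Z:=\tilde X_1+\mathcal N$. Its covariance kernel is
\begin{align*}
\E[Z(z)Z(w)] = G_1(z,w)+1 = -\log d_{\T^2}(z,w) + \big(h_1(z,w)+1\big),
\end{align*}
and since $h_1$ is smooth on $\T^2\times\T^2$ (see \eqref{green}) we have $h_1+1\in H^{2+s}(\T^2\times\T^2)$ for every $s>0$. Non-degeneracy holds because, for $f\in L^2(\T^2)$ with $f\neq0$,
\begin{align*}
\langle f,(G_1+1)f\rangle = \langle f,G_1 f\rangle + \Big|\int_{\T^2} f\,dv_{\T^2}\Big|^2 > 0 :
\end{align*}
the first term is a non-negative quadratic form (it is $\E[\langle f,\tilde X_1\rangle^2]$) that vanishes only when $f$ is constant, and the second term is then strictly positive. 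Hence $Z$ satisfies the hypotheses of Theorem~\ref{thm:decomposition}, and therefore of Theorem~\ref{sd_ub}. Moreover $\int_{\T^2}\tilde X_1\,dv_{\T^2}=0$ almost surely while $\int_{\T^2}\mathcal N\,dv_{\T^2}=\mathcal N$ (the torus has unit volume), so the centred field of Theorem~\ref{sd_ub} is $\tilde Z=Z-\int_{\T^2}Z\,dv_{\T^2}=\tilde X_1$.

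Finally I would reconcile the two normalisations: the chaos $M_{\tilde Z,\gamma}$ of Theorem~\ref{sd_ub} uses the counterterm $-\frac{\gamma^2}{2}\E[\tilde Z(z)^2]$, whereas $M_{\gamma,1}$ in \eqref{gmc} is defined with the counterterm $s_{1,\eps}(z)$, i.e.\ with the additional term $+\frac{\gamma^2}{2}h_1(z,z)$. Since $h_1$ is continuous on the compact set $\T^2$, these two normalisations differ only by a deterministic density bounded above and below, so there is a constant $\kappa\in(0,\infty)$ with $\kappa^{-1}M_{\tilde Z,\gamma}\leq M_{\gamma,1}\leq\kappa\,M_{\tilde Z,\gamma}$. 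Consequently, for $\eps$ small enough,
\begin{align*}
\P\big(M_{\gamma,1}(\tor)<\eps\big) \leq \P\big(M_{\tilde Z,\gamma}(\T^2)<\kappa^{-1}\eps\big) \leq \exp\big(-c_2(2,\gamma)\,\kappa^{4/\gamma^2}\,\eps^{-4/\gamma^2}\big)
\end{align*}
by Theorem~\ref{sd_ub}, which is the claim with $c:=c_2(2,\gamma)\,\kappa^{4/\gamma^2}>0$. There is no genuine obstacle beyond this bookkeeping: the real content sits in Theorem~\ref{sd_ub}, whose proof occupies the preceding subsection.
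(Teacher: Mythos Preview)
Your proof is correct and follows essentially the same route as the paper: add an independent Gaussian constant to the zero-mean GFF to obtain a non-degenerate log-correlated field $Z$ with $\tilde Z=\tilde X_1$, absorb the smooth factor $e^{\frac{\gamma^2}{2}h_1(z,z)}$ into a multiplicative constant, and invoke Theorem~\ref{sd_ub}. The only slip is in the final display: from $\kappa^{-1}M_{\tilde Z,\gamma}\le M_{\gamma,1}$ one gets $\{M_{\gamma,1}<\eps\}\subset\{M_{\tilde Z,\gamma}<\kappa\eps\}$ (not $\kappa^{-1}\eps$), so the resulting constant is $c=c_2(2,\gamma)\kappa^{-4/\gamma^2}$, which is of course still positive and the conclusion is unaffected.
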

	\begin{proof} We check that $\tilde X_1$ satisfies the conditions of Theorem \ref{sd_ub}. As the constant functions are the only harmonic functions on $\tor$, it follows that for any non-constant $f \in L^2(\tor)$ we have
	\begin{align*}
	\int_{(\tor)^2} G_1(x,y) f(x) f(y) \,dv_1(x) \, dv_1(x) \grea 0\,,
	\end{align*}	
	where $G_1$ is the zero-mean Green function. Thus, if $N$ is a Gaussian random variable independent of $\tilde X_1$, then $Y = \tilde X_1 + N$ is a non-degenerate $\log$-correlated Gaussian field and it holds that $\tilde Y = \tilde X_1$ where
	\begin{align*}
	\tilde Y &:= Y  - \tfrac{1}{v_\tor(\tor)}\int_\tor Y \, dv_\tor\,.
	\end{align*}
	Now we have
	\begin{align*}
	M_{\gamma,1}(\tor) &= \int_{\tor} e^{\gamma \tilde X_1 - \frac{\gamma^2}{2} s_1} \, d^2z = \int_{\tor} e^{\gamma \tilde X_1(z) - \frac{\gamma^2}{2} \E[ (\tilde X_1(z))^2] + \frac{\gamma^2}{2} h_1(z,z)} \, d^2z  \,,
	\end{align*}
	where $h_1$ is a smooth function. Thus $M_{\gamma,1}(\tor) \geq c M_{\tilde X_1, \gamma}(\tor)$. Now note that 
	\begin{align*}
	M_{\tilde X_1,\gamma}(\tor) &= \int_\tor e^{\gamma \tilde Y - \frac{\gamma^2}{2} \E[Y^2]} e^{ \frac{\gamma^2}{2} \E[ Y^2 - (\tilde X_1)^2 ]  } \, d^2z  = e^{ \frac{\gamma^2}{2} \E[N^2]} \tilde M_{Y,\gamma}(\tor)\,. 
	\end{align*}
	Now the bound given by Theorem \ref{sd_ub} then implies
	\begin{align*}
	\P \big( M_{\gamma,1}(\tor) < \eps \big) \leq \exp \big( - c_\gamma c^{ \frac{4}{\gamma^2}} \eps^{- \frac{4}{\gamma^2}} \big)
	\end{align*}
	for some $c>0$.
	
	\end{proof}

\subsubsection{Upper bound}

	\begin{lemma}
	We have
  \begin{align}\label{eq_log-ub}
  -\log Z_R \leq \tilde f_\gamma \mu^{\frac{2}{\gamma Q}}  R^2\,,
  \end{align}
  for some $\tilde f_\gamma > 0$.
	\end{lemma}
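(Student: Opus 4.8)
The plan is to derive a matching lower bound $Z_R \geq e^{-\tilde f_\gamma \mu^{2/(\gamma Q)} R^2}$ and take logarithms. As elsewhere in this section I would first reduce to $\mu = 1$ by replacing $R$ with $\mu^{1/(\gamma Q)} R$, using $\gamma Q = 2 + \frac{\gamma^2}{2}$ so that $\mu R^{2+\gamma^2/2} = (\mu^{1/(\gamma Q)}R)^{2+\gamma^2/2}$. Starting from
\begin{align*}
Z_R &= \E_1\Big[\int_\R e^{-R^{2+\frac{\gamma^2}{2}}(e^{\gamma c}M_{\gamma,1}(\tor) + e^{-\gamma c}M_{-\gamma,1}(\tor))}\,dc\Big]\,,
\end{align*}
I would restrict the $c$-integral to $[-\tfrac12,\tfrac12]$, where $e^{\pm\gamma c}\leq e^{\gamma/2}$, to obtain $Z_R \geq \E_1\big[e^{-e^{\gamma/2} R^{2+\gamma^2/2}(M_{\gamma,1}(\tor)+M_{-\gamma,1}(\tor))}\big]$. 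Next I would use boundedness of the smooth part $h_1$ of the Green function to write $M_{\pm\gamma,1}(\tor)\leq C_0\, M_{\tilde X_1,\pm\gamma}(\tor)$ for both signs with the same $C_0$, and then --- exactly as in the proof of Lemma \ref{lem:product_gmc} --- introduce an independent standard Gaussian $N$ and the non-degenerate $\log$-correlated field $Y_0 = \tilde X_1 + N$, which satisfies the hypotheses of Theorem \ref{thm:decomposition}, has $\tilde Y_0 = \tilde X_1$, and obeys $M_{\tilde X_1,\pm\gamma}(\tor) = e^{\gamma^2/2}\tilde M_{Y_0,\pm\gamma}(\tor)$. This reduces the claim to showing $-\log\E\big[e^{-C_1 R^{2+\gamma^2/2}(\tilde M_{Y_0,\gamma}(\tor) + \tilde M_{Y_0,-\gamma}(\tor))}\big] \leq \tilde c R^2$ for some $\tilde c>0$ and $R$ large.

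The core is then a repeat of the proof of Lemma \ref{lem:dv_bound} with the single GMC mass replaced by the sum $\tilde M_{Y_0,\gamma}(\tor)+\tilde M_{Y_0,-\gamma}(\tor)$. I would decompose $Y_0 = X + H$ via Proposition \ref{lem:coupling-2} (so $X$ has covariance $\PNN(-\Delta^{-1}+\Delta^{-1-\xi})$ with law $\mu$, and $H$ is an independent Hölder field), condition on $H$, apply Jensen's inequality, and invoke Donsker--Varadhan (Theorem \ref{thm:donsker}) for the law $\mu$. Choosing the trial measure $\nu$ to be the law of the field $Y$ from \eqref{eq:Y_field}, the bound splits into $C_1 R^{2+\gamma^2/2}\,\E_\nu[\tilde M_{X+H,\gamma}(\tor)+\tilde M_{X+H,-\gamma}(\tor)] + \opn{Ent}(\nu,\mu)$. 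Lemma \ref{lem:entropy} bounds the entropy by $CR^2$, and since the estimate of Lemma \ref{lem:E_nu} depends on $\gamma$ only through $\gamma^2$ it applies verbatim to both signs, bounding each expectation by $C'R^{-\gamma^2/2}$; the product is therefore $O(R^2)$, giving the reduced claim. Unwinding, $Z_R \geq e^{-\tilde c R^2}$ for large $R$, hence $-\log Z_R \leq \tilde c R^2$, and restoring $\mu$ yields $-\log Z_R \leq \tilde c\,\mu^{2/(\gamma Q)}R^2$, i.e. \eqref{eq_log-ub} with $\tilde f_\gamma = \tilde c$.

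I do not expect a serious obstacle, since all ingredients are already in place; the point deserving care is that a \emph{single} Cameron--Martin shift of the low Fourier modes must simultaneously make both $\tilde M_{Y_0,\gamma}(\tor)$ and $\tilde M_{Y_0,-\gamma}(\tor)$ small. This is precisely why the argument routes through the change-of-measure/Donsker--Varadhan estimate rather than through the probability lower bound of the Corollary combined with a union bound: the shift flattens the field and Lemma \ref{lem:E_nu} is sign-blind, whereas $\tilde M_{Y_0,\gamma}(\tor)$ and $\tilde M_{Y_0,-\gamma}(\tor)$ are negatively associated, so an FKG- or union-type argument on $\{M_{\gamma,1}(\tor)<\eps\}\cap\{M_{-\gamma,1}(\tor)<\eps\}$ would not give the right rate. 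A secondary bookkeeping subtlety, already present in Lemma \ref{lem:E_nu}, is that under the change of measure the relevant quantity $\E_\nu[\tilde M_{X+H,\pm\gamma}(\tor)]$ carries the \emph{old} Wick renormalization $\E[\tilde Z^2]$ --- it is the defining functional of $(X,H)$ evaluated at $X=Y$, not $\tilde M_{Y+H,\pm\gamma}$ with its own renormalization.
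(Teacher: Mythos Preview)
Your proposal is correct and follows essentially the same approach as the paper: reduce to $\mu=1$, truncate the zero-mode integral to $[-\tfrac12,\tfrac12]$, and then apply the Donsker--Varadhan argument of Lemma~\ref{lem:dv_bound} to the sum $M_{\gamma,1}(\tor)+M_{-\gamma,1}(\tor)$, using that Lemma~\ref{lem:E_nu} is invariant under $\gamma\mapsto-\gamma$. You are actually more careful than the paper in one respect: you explicitly pass through $Y_0=\tilde X_1+N$ to obtain a non-degenerate field before invoking the decomposition of Proposition~\ref{lem:coupling-2}, whereas the paper simply says ``follow the argument given in the proof of Lemma~\ref{lem:dv_bound}'' and leaves this point implicit.
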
 
	\begin{proof}
	We set $\mu=1$, as replacing $R$ by $\mu^{\frac{1}{\gamma Q}}R$ will give the result for general $\mu > 0$.
	
  To begin, we truncate the zero-mode integral
  \begin{align*}
 Z_R &= \int_\R \mathbb{E}[\exp(-  R^{2+\frac{\gamma^{2}}{2}}( e^{\gamma c} M_{\gamma,1}(\tor)+ e^{-\gamma c} M_{-\gamma,1}(\tor)))] \, dc \\
  &  \geq \int_{-\frac 12}^{\frac 12}  \mathbb{E}[\exp(- R^{2+\frac{\gamma^{2}}{2}}( e^{\gamma c} M_{\gamma,1}(\tor)+ e^{-\gamma c} M_{-\gamma,1}(\tor)))] \, dc \\
  &  \geq \mathbb{E}[\exp(- e^{\frac{\gamma}{2}} R^{2+\frac{\gamma^{2}}{2}}(  M_{\gamma,1}(\tor)+  M_{-\gamma,1}(\tor)))]\,.
  \end{align*}
  By the Donsker--Varadhan theorem \ref{thm:donsker}, we have
  \begin{align*}
  &- \log \mathbb{E}[\exp(- e^{\frac{\gamma}{2}} R^{2+\frac{\gamma^{2}}{2}}(  M_{\gamma,1}(\tor)+  M_{-\gamma,1}(\tor)))] \\
  &= \inf_{\Q \ll \P} \Big( e^{\frac{\gamma}{2}} R^{2 + \frac{\gamma^2}{2}} \E_\Q[M_{\gamma,1}(\tor) + M_{-\gamma,1}(\tor)] + \opn{Ent}(\Q,\P)  \Big)\,,
  \end{align*}
  where $\P$ is the probability law of $\tilde X_1$. Now we can follow the argument given in the proof of Lemma \ref{lem:dv_bound} and the symmetry of $M_{\gamma,1}(\tor)$ under $\gamma \to -\gamma$ when $\Q$ is a law of a Gaussian field.
\end{proof}

\end{document}